\documentclass[12pt]{oxarticle}

\pdfoutput=1

\usepackage{graphicx, float, array, xspace, amscd, amsmath, amsthm, amssymb, latexsym, relsize, bbm}
\usepackage[letterpaper, left=1.7in, right=0.3in, top=0.8in, bottom=1.2in]{geometry}
\usepackage[bbgreekl]{mathbbol}
\usepackage{amsfonts, tikz-cd}
\usepackage[T1]{fontenc}
\PassOptionsToPackage{linecolor=blue,backgroundcolor=blue!25,bordercolor=blue,textsize=scriptsize}{todonotes}
\usepackage{todonotes}
\usepackage{tikz}
\usetikzlibrary{backgrounds}
\usepackage[framemethod=tikz]{mdframed}
\AtBeginEnvironment{mdframed}{%
\tikzset{every picture/.style={}}%
}
\mdfsetup{roundcorner=.5ex}
{\begin{mdframed}[backgroundcolor=white]}%
{\end{mdframed}}

\DeclareSymbolFontAlphabet{\mathbb}{AMSb}
\DeclareSymbolFontAlphabet{\mathbbl}{bbold}

\usepackage[sort&compress, comma, square, numbers]{natbib}
\usepackage[all,cmtip]{xy}
\usepackage{tensor}
\usepackage{color}
\usepackage{slashed}
\definecolor{MyDarkBlue}{rgb}{0.15,0.25,0.45}
\usepackage[linktocpage=true]{hyperref}
\hypersetup{colorlinks=true, citecolor=blue, linkcolor=blue, urlcolor=blue, pdfauthor={}, pdftitle={},breaklinks=true}

\newcommand{\w}{{\,\wedge\,}}

\newcommand{\Lg}{{\mathfrak{g}}}

\newcommand{\vh}{{\vert h\vert}}
\newcommand{\vvh}{{\vert\vert h\vert\vert}}

\theoremstyle{definition}

\newtheorem{lemma}{Lemma}
\newtheorem{proposition}{Proposition}

\renewcommand{\d}{\delta}

\DeclareFontFamily{OT1}{pzc}{}
\DeclareFontShape{OT1}{pzc}{m}{it}{<-> s * [1.200] pzcmi7t}{}
\DeclareMathAlphabet{\mathpzc}{OT1}{pzc}{m}{it}

\DeclareFontFamily{U}{bbold}{}
\DeclareFontShape{U}{bbold}{m}{n}
 {  <-5.5> s*[1.05] bbold5
    <5.5-6.5> s*[1.05] bbold6
    <6.5-7.5> s*[1.05] bbold7
    <7.5-8.5> s*[1.05] bbold8
    <8.5-9.5> s*[1.05] bbold9
    <9.5-11.5> s*[1.05] bbold10
    <11.5-16> s*[1.05] bbold12
    <16-> s*[1.05] bbold17
 }{}

\font\eightrm=cmr8 at 8pt

\newcommand{\beq}{\begin{equation}}
\newcommand{\eeq}{\end{equation}}
\newcommand{\beqnn}{\begin{equation*}}
\newcommand{\eeqnn}{\end{equation*}}
\newcommand{\bea}{\begin{eqnarray}}
\newcommand{\eea}{\end{eqnarray}}
\newcommand{\bean}{\begin{eqnarray*}}
\newcommand{\eean}{\end{eqnarray*}}

\newcommand{\place}[3]{\vbox to0pt{\kern-\parskip\kern-7pt\kern-#2truein\hbox{\kern#1truein #3}\vss}\nointerlineskip}

\DeclareFontFamily{U}{wncy}{}
\DeclareFontShape{U}{wncy}{m}{n}{<->wncyr10}{}
\DeclareSymbolFont{mcy}{U}{wncy}{m}{n}
\DeclareMathSymbol{\sha}{\mathord}{mcy}{"58}

\newcommand{\del}{{\partial}}
\newcommand{\delb}{{\bar\partial}}
\newcommand{\bdel}{\boldsymbol{\partial}}
\newcommand{\bdelb}{\boldsymbol{\bar\partial}}

\newcommand{\Ah}{{\d A^{1,0}}}
\newcommand{\Aah}{{\d A^{0,1}}}

\newcommand{\End}{{\text{End}\,}}

\newcommand{\dd}{{\text{d}}}

\newcommand{\tr}{\text{tr}\hskip2pt}

\makeatletter
\g@addto@macro\bfseries{\boldmath}
\makeatother

\renewcommand{\baselinestretch}{1.1}
\numberwithin{equation}{section}
\proofmodefalse
\begin{document}
\pagestyle{empty}      
\ifproofmode\underline{\underline{\Large Working notes. Not for circulation.}}\else{}\fi

\begin{center}
\null\vskip0.2in
{\Huge  On a Deformed Holomorphic\\ Chern-Simons Theory\\[0.5in]}
{Eirik H{\o}gmoe Kjelsnes$^{a}$, Eirik Eik Svanes$^{a}$, \\
Vegard Undheim$^{a, b}$\\[0.5in]} 

{\it 
$^a$Department of Mathematics and Physics \\
Faculty of Science and Technology, University of Stavanger\\
N-4036, Stavanger, Norway\\[3ex]

$^b$Department of Computer Science and Computational Engineering \\
The Artic University of Norway\\
8505 Narvik, Norway\\[3ex]
}
%
\vspace{1cm}
{\bf Abstract\\[-8pt]}
\end{center}
We deform classical holomorphic Chern--Simons theory on a Calabi--Yau three-fold $X$ by deforming the complex structure by a deformation parameter $h\in{\cal H}^{0,1}(T^{1,0}X)$. The corresponding equations of motion admit new "instanton solutions" which which are invariant under re-scalings of $h$, and are perhaps more reminiscent of $G_2$-instantons for $G_2$ manifolds. We give examples of such instantons. In particular, when $h$ has non-vanishing Yukawa coupling ${\rm Yuk}(h,h,h)\neq 0$, it may be used to define a connection on $\End(T^{1,0}X)$ solving the instanton constraint. Interestingly, this connection gives rise to a hermitian (self-adjoint) connection for a real gauge theory on the real bundle $\End(TX)$ for only specific directions in deformation space, which may be classified using Morse theory. We quantize the deformed theory around these instanton backgrounds, and derive explicit expressions for the partition function in the limit where the complex structure deformation is large. We study anomalies, and the $h$-dependece of the partition function. In particular, coupling the theory to additional gravitational degrees of freedom, we find that the special directions in deformation space give rise to novel anomaly free theories on $\End(T^{1,0}X)$.

\vskip150pt

\newgeometry{left=1.5in, right=0.5in, top=0.75in, bottom=0.8in}
%
\newpage
{\baselineskip=10pt\tableofcontents}
\restoregeometry
\setcounter{page}{1}
\pagestyle{plain}
\renewcommand{\baselinestretch}{1.3}
\null\vskip-10pt

\section{Introduction}
Holomorphic Chern–Simons theory, or Donaldson--Thomas theory \cite{donaldson1998gauge, thomas1997a}, plays a central role at the interface between complex geometry, gauge theory, and string theory. Usually defined on Calabi–Yau threefolds, it governs the geometry of holomorphic vector bundles, where at the classical level, its equations of motion encode the condition that a gauge connection defines a holomorphic structure, while at the quantum level its partition function captures subtle holomorphic invariants of both the bundle and the underlying complex manifold. Because of this deep interplay between geometry and physics, understanding how holomorphic Chern–Simons theory responds to deformations of geometric structures is of significant interest.\footnote{This is particularly important in the context of heterotic string theory, where holomorphic Chern-Simons plays a vital role in understanding the moduli problem and corresponding moduli stabilisation \cite{Anderson:2010mh, Anderson:2011ty, Anderson:2014xha, delaOssa:2014cia, Garcia-Fernandez:2015hja, Ashmore:2018ybe, McOrist:2021dnd}.}

One of the most fundamental geometric inputs in the theory is the complex structure of the underlying Calabi–Yau manifold. Variations of complex structure affect not only the decomposition of differential forms into types but also the holomorphic top-form that appears explicitly in the action functional. Complex structure deformations are governed by elements of the Dolbeault cohomology group $H^{0,1}(T^{1,0}X)$, and their geometry is tightly constrained by special geometry relations and Yukawa couplings. From the physical perspective, these deformations correspond to marginal operators in the associated topological string theory and play a decisive role in moduli dependence and anomaly structure. It is therefore natural to ask how holomorphic Chern–Simons theory changes when the background complex structure is deformed, and what new geometric and quantum features may arise.

The dependence of the quantum partition function on the background complex structure and holomorphic anomaly equations have been studied extensively in the past. In the physics literature, such studies where initiated in \cite{Bershadsky:1993ta, Bershadsky:1993cx}, applying mathematical results of Bismut etal \cite{bismut1988analytic1, bismut1988analytic2, bismut1988analytic3}. In this work, we instead study a deformation of holomorphic Chern–Simons theory obtained by deforming the background complex structure explicitly in the classical action. Rather than working only with infinitesimal variations, we construct a fully deformed theory by expanding the holomorphic top-form in powers of a complex structure deformation parameter and inserting the resulting expression directly into the Chern–Simons functional. This produces a gauge theory whose action depends polynomially on the deformation parameter through a sequence of derived forms determined by special geometry. Although the resulting action can be interpreted as ordinary holomorphic Chern–Simons theory written in terms of a deformed complex structure, the explicit parametrization in terms of deformation tensors reveals new structural features that are not manifest in the standard formulation.

A consequence of the deformation is that the theory no longer depends solely on the $(0,1)$-part of the gauge connection. Instead, the deformed action couples both holomorphic and anti-holomorphic components, leading to modified equations of motion that mix curvature components of different Hodge types. We show that these equations admit a distinguished class of solutions that we refer to as scale-invariant instantons. These configurations are characterized by a holomorphic bundle condition together with an additional constraint involving contraction with the complex structure deformation tensor. Specifically, the equations of motion read
\begin{align}
    F^{0,2}&=0\\
    F^{1,1}\wedge\chi&=0\:,
\end{align}
where $\chi$ is the associated $(2,1)$-form of the complex structure deformation. Structurally, these instanton equations resemble higher-dimensional gauge-theoretic instanton conditions, and in particular share features with instanton equations arising in exceptional holonomy settings, particular for $G_2$ instantons. We present both abelian and non-abelian examples of such instantons, including a construction on endomorphism bundles of the tangent bundle.

An especially interesting phenomenon occurs when the deformation parameter has non-vanishing Yukawa coupling. In this case, the deformation tensor can be used to construct a natural Chern-type connection on the holomorphic tangent bundle whose induced connection on the endomorphism bundle satisfies the instanton constraint. However, compatibility with a real (or self-adjoint) gauge theory structure is not automatic: it holds only along special directions in complex structure deformation space. We show that these special directions can be characterized variationally in terms of a homogeneous functional built from Yukawa couplings and the moduli space metric. Using Morse-theoretic arguments, we derive bounds on the number of such directions and analyze their rigidity properties. This reveals a link between gauge-theoretic structures on vector bundles and the global geometry of complex structure moduli space.

We then proceed to study the quantum theory obtained by expanding around instanton backgrounds. Using the background field method, we show that after suitable field redefinitions the deformed theory can be rewritten in a form closely related to an anti-holomorphic Chern–Simons-type theory with a deformation-dependent kinetic operator. This reformulation makes it possible to carry out quantization using several complementary and standard approaches, including a formal one-loop analysis, BRST quantization, and the Batalin–Vilkovisky (BV) formalism. We compute the one-loop partition function and express it in terms of generalized Ray–Singer torsions associated with deformation-dependent differential operators. In the limit of large complex structure deformation, these expressions simplify and reduce to torsions associated with standard Dolbeault-type operators, multiplied by explicit powers of the "determinant" of the deformation tensor.

A central aspect of the quantum theory is its anomaly structure and geometric dependence. Because the deformation modifies both kinetic operators and measures, new potential gauge, gravitational and geometric anomalies arise. However, these simplify to more standard expressions in the limit where the complex structure deformation parameter is large. We analyze the anomalies systematically and identify conditions under which anomalies cancel. In particular, we find that the special directions in deformation space singled out by the hermiticity (or self-adjoint) condition give rise to anomaly-free theories on the trace-free endomorphism tangent bundle. This provides further evidence that these directions are geometrically and physically distinguished.

This paper is organized as follows. We begin in section \ref{sec:Theory} by defining the fully deformed holomorphic Chern–Simons theory and deriving its equations of motion, together with equivalent simplified formulations. We then analyze classes of instanton solutions  in section \ref{sec:examples}, and present explicit abelian and non-abelian examples, followed by a detailed study of the special deformation directions and their characterization via Morse theory on the projective moduli space. Next, in section \ref{sec:Quantum} we develop the quantum theory using the background field method and several quantization schemes, deriving explicit expressions for the partition function and its dependence on the deformation parameter. Finally, in section \ref{sec:anomalies} we investigate gauge, gravitational, and geometric anomalies and determine conditions for anomaly cancellation. Several appendices collect special geometry identities, properties of the deformation tensor, and curvature computations used throughout the paper.

\subsubsection*{Remark on hermitian vs. self-adjoint}
Before we begin, we remark on the difference between hermitian and self-adjoint connections, and how these terms will be used in this paper. In the literature, and the physics literature especially, the word hermitian is often used to describe a self-adjoint connection, even if it corresponds to a hermitian metric which is not positive definite, i.e. pseudo-hermitian. For clarity, let
\begin{equation}
    \dd_A=\delb+\del_A
\end{equation}
be a connection on a holomorphic bundle $V\rightarrow X$ over a complex manifold $X$. In this paper, we will only describe the connection as hermitian when 
\begin{equation}
    \del_A={\cal H}^{-1}\circ\del\circ{\cal H}\:,
\end{equation}
with respect to a {\it positive-definite} hermitian metric ${\cal H}$. If $\cal H$ is non-positive definite, we refer to the connection as pseudo-hermitian. We will also refer to the connection as self-adjoint in this case (but not neccesarily hermitian in the above sense).

\section{The theory}
\label{sec:Theory}
We begin by considering holomorphic Chern-Simons theory on a Calabi-Yau three-fold $X$ \cite{donaldson1998gauge, thomas1997a}. That is, we have a complex vector bundle $V\rightarrow X$ with a connection $A\in\Omega^1(\Lg)$, where $\Lg ={\rm Ad}(V)$, with the following action 
\begin{equation}
\label{eq:HolCS}
    S=\int_X\omega_{CS}(A)\wedge\Omega\:,
\end{equation}
where $\Omega$ is the Calabi-Yau holomorphic top-form $\Omega\in\Omega^{3,0}(X)$, and $\omega_{CS}(A)$ is the Chern-Simons three-form
\begin{equation}
    \omega_{CS}(A)={\rm tr}\left(A\dd A+\tfrac23A^3\right)\:.
\end{equation}
For ease of notation, we will often omit the wedge product between forms. 

We are interested in deforming the background complex structure of the theory \eqref{eq:HolCS}. That is, we write
\begin{equation}
    \Omega\rightarrow\Omega+\Delta\Omega=\Omega+\Omega(h)+\Omega(h,h)+\Omega(h,h,h)\:,
\end{equation}
where $h\in\Omega^{0,1}(T^{1,0}X)$ is the complex structure deformation, and
\begin{align}
    \Omega(h)&=\tfrac12h^a\wedge\Omega_{abc}\dd z^{bc}\\
    \Omega(h,h)&=\tfrac12h^ah^b\wedge\Omega_{abc}\dd z^c\\
    \Omega(h,h,h)&=\tfrac16h^ah^bh^c\,\Omega_{abc}\:.
\end{align}
We will take $h$ to be an on-shell complex structure deformation, and we will pick {\it harmonic gauge}. That is, $h$ is harmonic with respect to a choice of background metric given by a choice of Kähler form $\omega$. 

It follows from special geometry that $\Omega(h)$, $\Omega(h,h)$ and $\Omega(h,h,h)$ are all harmonic \cite{Candelas:1990pi, Strominger:1990pd}. In particular, $\Omega(h,h,h)$ is proportional to the anti-holomorphic top-form
\begin{equation}
    \Omega(h,h,h)=\vh\,\bar\Omega\:,
\end{equation}
where $\vh$ is a constant on $X$ which we will refer to as the determinant of $h$, point-wise viewed as a matrix. We give a formal proof of this in Appendix \ref{app:spacial}. Note that a generic complex structure deformation $h$ will have $\vh\neq0$, and we shall choose a complex structure deformation $h$ where this is the case. Note also that $\vh$ is proportional to the Yukawa coupling of $h$. Specifically, we have
\begin{equation}
\label{eq:vh-Yuk}
    {\rm Yuk}(h,h,h)=\int_X\Omega(h,h,h)\wedge\Omega=\vh\int_X\bar\Omega\wedge\Omega=i\vh e^{-K}\:,
\end{equation}
where the complex structure Kähler potential $K$ is given by
\begin{equation}
    e^{-K}=i\int_X\Omega\wedge\bar\Omega\:.
\end{equation}
This relates the determinant $\vh$ to special geometry.

We then have two theories of interest. The first theory is the infinitesimally deformed theory
\begin{equation}
\label{eq:InfTheory}
    S_1=\delta_hS=\int_X\omega_{CS}(A)\wedge\Omega(h)\:.
\end{equation}
The second theory is the fully deformed theory
\begin{equation}
    S_2=\int_X\omega_{CS}(A)\wedge(\Omega+\Delta\Omega)\:.
\end{equation}
We shall study the mathematical properties associated to these theories, in addition to their physical (quantum) properties. In this paper we will focus on the fully deformed theory $S_2$, as this theory, perhaps paradoxically, is easier to get a handle on.\footnote{This is perhaps not so paradoxical, as the theory $S_2$ is of course simply ordinary holomorphic Chern--Simons theory for the deformed complex structure $\tilde\Omega=\Omega+\Delta\Omega$, and so the reader might object that it is nothing new. Still, the given parametrization might provide further insight into the complex structure dependence of holomprphic Chern--Simons theory, which we explore in this paper.} We will leave the study of the infinitesimally deformed theory $S_1$ to a future publication.

\subsection{The fully deformed theory}
The fully deformed theory reads
\begin{equation}
    \label{eq:Full-def}
    S_2=\int_X\omega_{CS}(A)\wedge\left(\Omega+\Omega(h)+\Omega(h,h)+\Omega(h,h,h)\right)\:.
\end{equation}
In contrast to ordinary holomorphic Chern--Simons theory, which depends only on the $(0,1)$-part of the connection $A$, the action \eqref{eq:Full-def} depends both on the $(1,0)$- and $(0,1)$-parts of $A$. We will treat the $(1,0)$- and $(0,1)$-parts of the connection $A$ as {\it separate degrees of freedom}, leading to the equations of motion of the theory \eqref{eq:Full-def}
\begin{align}
F^{0,2}\wedge\Omega+F^{1,1}\wedge\Omega(h)+F^{2,0}\wedge\Omega(h,h)&=0
\label{eq:FullEOM1}
\\
F^{0,2}\wedge\Omega(h)+F^{1,1}\wedge\Omega(h,h)+F^{2,0}\wedge\Omega(h,h,h)&=0\:.
\label{eq:FullEOM2}
\end{align}
At first glance, these equations look somewhat complicated. We however have the following result. 
\begin{proposition}
    Equations \eqref{eq:FullEOM1} and \eqref{eq:FullEOM2} are equivalent to the following equations
    \begin{align}
    \label{eq:EOMfull1}
        F^{0,2}-F^{2,0}(h,h)&=0\\
    \label{eq:EOMfull2}
        2\,F^{0,2}+F^{1,1}(h)&=0\:,
    \end{align}
    where
    \begin{equation}
        F^{2,0}(h,h):=\tfrac12\, h^ah^b\,F^{2,0}_{ab}\:,\;\;\;
        F^{1,1}(h)=h\lrcorner F^{1,1} := h^a\wedge F_{a\bar b}\dd z^{\bar b}\:.
    \end{equation}
\end{proposition}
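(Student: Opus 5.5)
The plan is to turn both equations, which are top-degree-minus-one forms, into equations between $(0,2)$-forms with values in $\Lg$, using the nondegeneracy of $\Omega$ and, for the second equation, of $h$.

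\textbf{Type counting.} Every term in \eqref{eq:FullEOM1} is a $(3,2)$-form, since $\Omega(h)$ has type $(2,1)$ and $\Omega(h,h)$ has type $(1,2)$. Every term in \eqref{eq:FullEOM2} is a $(2,3)$-form. The map $\alpha\mapsto\alpha\wedge\Omega$ from $\Omega^{0,2}(\Lg)$ to $\Omega^{3,2}(\Lg)$ is a pointwise isomorphism, because $\Omega$ is nowhere vanishing. So \eqref{eq:FullEOM1} is equivalent to the vanishing of a single $(0,2)$-form once each term is written as (something)$\wedge\Omega$.

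\textbf{Rewriting via contraction identities.} To do this I use the contraction $\iota_h$, which contracts the vector index of $h$ into a form and wedges on the $(0,1)$-form part. It acts as a graded derivation, and $\Omega(h)=\iota_h\Omega$ and $2\Omega(h,h)=\iota_h\iota_h\Omega$ up to the normalisations fixed above. The key identities come from type reasons:
\[
F^{1,1}\wedge\Omega=0,\qquad F^{2,0}\wedge\Omega=0,
\]
since these would be $(4,1)$- and $(5,0)$-forms. Applying $\iota_h$ once to the first and twice to the second gives
\[
F^{1,1}\wedge\Omega(h)=\pm F^{1,1}(h)\wedge\Omega,\qquad F^{2,0}\wedge\Omega(h,h)=c\,F^{2,0}(h,h)\wedge\Omega,
\]
with definite signs and constants. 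Hence \eqref{eq:FullEOM1} becomes
\[
F^{0,2}+a\,F^{1,1}(h)+b\,F^{2,0}(h,h)=0 .
\]

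\textbf{Second equation.} I treat \eqref{eq:FullEOM2} the same way, but factor out $\Omega(h)$ instead of $\Omega$. The map $\alpha\mapsto\alpha\wedge\Omega(h)$ from $(0,2)$-forms to $(2,3)$-forms is a map between rank-$3$ bundles. I claim it is invertible precisely when $h$ is pointwise invertible, i.e.\ $\vh\neq0$. I would check this in a frame where $h^a=h^a{}_{\bar b}\,\dd z^{\bar b}$: the map is then essentially the cofactor matrix of $h$, whose determinant is a power of $\vh$. Next I apply $\iota_h$ once more to the identities above, and use $\Omega(h,h,h)=\vh\,\bar\Omega$ from Appendix \ref{app:spacial}. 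This expresses
\[
F^{1,1}\wedge\Omega(h,h)\quad\text{and}\quad F^{2,0}\wedge\Omega(h,h,h)
\]
as (combinations of $F^{1,1}(h)$, $F^{2,0}(h,h)$)$\wedge\Omega(h)$. Equivalently, one can apply $\iota_h$ directly to the rewritten first equation and compare. Either way, \eqref{eq:FullEOM2} becomes a second linear relation
\[
F^{0,2}+a'\,F^{1,1}(h)+b'\,F^{2,0}(h,h)=0 .
\]
Solving the two linear relations, and checking that their determinant is nonzero, yields \eqref{eq:EOMfull1}--\eqref{eq:EOMfull2}. The converse direction is immediate, since every step is reversible given $\vh\neq0$.

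\textbf{Main obstacle.} The hard part is the combinatorics: getting the signs and factors ($\tfrac12$, $\tfrac16$, the graded sign when $\iota_h$ passes a $2$-form) exactly right in the twice- and thrice-contracted identities. In particular, in the second equation the terms involving $F^{1,1}(h,h)$-type contractions must either cancel or recombine, via $\iota_h F^{0,2}=0$ and the identity for $F^{1,1}\wedge\Omega(h,h)$. I would verify the constants by a component computation in local coordinates with $\Omega=\dd z^{123}$ and $h$ diagonal, $h^a=\lambda_a\,\dd z^{\bar a}$, where both $F^{1,1}(h)$ and $F^{2,0}(h,h)$ are explicit.
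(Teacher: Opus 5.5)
Your intermediate identities are correct, but the decisive step fails: the two linear relations you expect turn out to be the same relation. Carrying out the computation you outline (for instance in the diagonal frame) gives $F^{1,1}\wedge\Omega(h)=-F^{1,1}(h)\wedge\Omega$ and $F^{2,0}\wedge\Omega(h,h)=F^{2,0}(h,h)\wedge\Omega$. Likewise $F^{1,1}\wedge\Omega(h,h)=-F^{1,1}(h)\wedge\Omega(h)$ and $F^{2,0}\wedge\Omega(h,h,h)=F^{2,0}(h,h)\wedge\Omega(h)$. So $a=a'=-1$ and $b=b'=1$, and there is no nonzero determinant to exploit.

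This is structural. Let $\iota_h$ be the even derivation fixed by $\iota_h\dd z^a=h^a$, $\iota_h\dd z^{\bar a}=0$. Then $\Omega+\Delta\Omega=e^{\iota_h}\Omega$, and $e^{\iota_h}$ is an algebra automorphism, so
\[
F\wedge(\Omega+\Delta\Omega)=e^{\iota_h}\big((e^{-\iota_h}F)\wedge\Omega\big)=G\wedge(\Omega+\Delta\Omega)\:,\qquad G:=F^{0,2}-F^{1,1}(h)+F^{2,0}(h,h)\:,
\]
where $\iota_hG=0$ was used. Hence \eqref{eq:FullEOM1} reads $G\wedge\Omega=0$ and \eqref{eq:FullEOM2} reads $G\wedge\Omega(h)=\iota_h(G\wedge\Omega)=0$. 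The second equation is identically $\iota_h$ of the first. Both are equivalent to the single equation $G=0$; for the second this uses $\vh\neq0$, through the invertibility you correctly identified. This is just the vanishing of the $(0,2)$-curvature in the deformed complex structure.

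So the statement cannot be proved: it is false. For invertible $h$, the forms $F^{0,2}$, $F^{1,1}(h)$, $F^{2,0}(h,h)$ can be prescribed independently at a point, so one relation cannot be equivalent to two independent ones. Explicitly, take $\Omega=\dd z^1\wedge\dd z^2\wedge\dd z^3$, constant $h^a=\lambda_a\dd z^{\bar a}$, and the constant abelian curvature $F=\lambda_1\dd z^{\bar 1}\wedge\dd z^{\bar 2}+\dd z^1\wedge\dd z^{\bar 2}$. Then $F^{1,1}(h)=F^{0,2}$ and $F^{2,0}=0$, and both \eqref{eq:FullEOM1} and \eqref{eq:FullEOM2} hold. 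Yet $F^{0,2}-F^{2,0}(h,h)=\lambda_1\dd z^{\bar 1}\wedge\dd z^{\bar 2}\neq0$. With the $+$ sign in \eqref{eq:EOMfull2} the converse fails too, since the claimed pair forces $G=4F^{0,2}$.

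The paper's proof reaches two equations only through bookkeeping slips:
\begin{itemize}
\item Its ``apply $h$'' step maps $\Omega(h)\mapsto\Omega(h,h)$ and $\Omega(h,h)\mapsto\Omega(h,h,h)$ without the factors $2$ and $3$ that $\iota_h$ produces. With those factors, $\iota_h$ of \eqref{eq:FullEOM1} minus \eqref{eq:FullEOM2} vanishes identically.
\item Its index manipulation drops a factor $2$, since $h^dh^bF_{db}=2F^{2,0}(h,h)$.
\item Its final sign of $F^{1,1}(h)$ disagrees with the statement.
\end{itemize}
The correct replacement is $G=0$. It still supports the later remarks: $G=0$ holds for all rescalings $h\to th$ exactly when $F^{0,2}=F^{1,1}(h)=F^{2,0}=0$.
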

\begin{proof}
    We start by considering equation \eqref{eq:FullEOM1}. We apply $h$ to this equation in the sense that we contract holomorphic indices and wedge anti-holomorphic indices. We then get
    \begin{equation}
    \begin{aligned}
    &F^{0,2}(h)\wedge\Omega + F^{0,2}\wedge\Omega(h) + F^{1,1}(h)\wedge\Omega(h) \\ &+ F^{1,1}\wedge\Omega(h,h) + F^{2,0}(h)\wedge\Omega(h,h) + F^{2,0}\wedge\Omega(h,h,h) = 0 \ ,
    \end{aligned}
    \end{equation}
    and after subracting by \eqref{eq:FullEOM2} we are left with \begin{equation}
         F^{1,1}(h)\wedge\Omega(h) + F^{2,0}(h)\wedge\Omega(h,h) = 0 \ ,
    \end{equation}
    since $F^{0,2}(h)= h^{\bar a} F_{\bar a \bar b}\dd z^{\bar b}=0$ as $h$ has no anti-holomorphic vector components. We make the following index manipulation \begin{equation}
        \begin{aligned}
            0 &= (F_{d[e} \Omega_{abc]}) h^b h^a \dd z^{ec} \\ &= (F_{de} \Omega_{abc} - F_{da} \Omega_{ebc} - F_{db} \Omega_{aec} - F_{dc} \Omega_{abe}) h^b h^a \dd z^{ec} \\ &= (2F_{de} \Omega_{abc} - 2F_{db} \Omega_{aec}) h^b h^a \dd z^{ec} \ ,
        \end{aligned}
    \end{equation}
    from which it follows that \begin{equation}
    \begin{aligned}
        &F^{2,0}(h)\wedge\Omega(h,h) = -h^dF_{de} \w \tfrac{1}{2} h^b \w h^a \w\Omega_{abc} \dd z^{ec} \\ &= -h^d \w h^bF_{db}  \w \tfrac{1}{2} h^a \w \Omega_{abc} \dd z^{ec} = -F^{2,0}(h,h)\wedge\Omega(h) \ ,
    \end{aligned}
    \end{equation}
    and we see that $F^{1,1}(h)\wedge\Omega(h) = F^{2,0}(h,h)\wedge\Omega(h)$. By contracting $F^{1,1}(h) \w \Omega= F^{1,1} \w \Omega(h) =F^{2,0}(h) \w \Omega = F^{2,0} \w \Omega(h) = 0$ by $h$ we get the following relations \begin{align}
        F^{1,1}(h,h) \w \Omega + F^{1,1}(h) \w \Omega(h) &= 0 \\ F^{1,1}(h) \w \Omega(h) + F^{1,1} \w \Omega(h,h) &= 0 \label{eomrel2} \\ F^{2,0}(h,h) \w \Omega + F^{2,0}(h) \w \Omega(h) &= 0 \\ F^{2,0}(h) \w \Omega(h) + F^{2,0} \w \Omega(h,h) &= 0 \label{eomrel4} \ ,
    \end{align} 
    and contracting \eqref{eomrel2} and \eqref{eomrel4} by $h$ we get \begin{align}
        2F^{1,1}(h) \w \Omega(h,h) + F^{1,1} \w \Omega(h,h,h) &= 0 \\
        F^{2,0}(h,h) \w \Omega(h) + 2F^{2,0}(h) \w \Omega(h,h) + F^{2,0}(h) \w \Omega(h,h,h) &= 0 \ .
    \end{align}
    We then find that \eqref{eq:FullEOM2} becomes \begin{equation} \begin{aligned}
        &F^{0,2} \w \Omega(h) + F^{1,1} \w \Omega(h,h) + F^{2,0} \w \Omega(h,h,h) \\ &= F^{0,2} \w \Omega(h) - 2F^{1,1}(h) \w \Omega(h) - F^{2,0}(h,h) \w \Omega(h)  - 2F^{2,0}(h) \w \Omega(h,h) \\ &= F^{0,2} \w \Omega(h) - 3F^{1,1}(h) \w \Omega(h) - 2F^{2,0}(h) \w \Omega(h,h) \\ &= F^{0,2} \w \Omega(h) - 3F^{2,0}(h,h) \w \Omega(h) + 2F^{2,0}(h,h) \w \Omega(h) \\ &= F^{0,2} \w \Omega(h) - F^{2,0}(h,h) \w \Omega(h) = 0 \ ,
    \end{aligned}
\end{equation}
or, equivalently 
\begin{equation}
\label{eq:EOMp1}
        F^{0,2}  - F^{2,0}(h,h) = 0 \ .
\end{equation}
Applying this to \eqref{eq:FullEOM1} we see that it takes the form 
\begin{equation}
        \begin{aligned}
        &F^{0,2} \w \Omega + F^{1,1} \w \Omega(h) + F^{2,0} \w \Omega(h,h) \\ &=  F^{0,2} \w \Omega + F^{1,1} \w \Omega(h) + F^{2,0}(h,h) \w \Omega \\ &=2 F^{0,2} \w \Omega - F^{1,1} (h) \w \Omega = 0 \ ,
        \end{aligned}
\end{equation}
or, equivalently 
\begin{equation}
\label{eq:EOMp2}
        2F^{0,2}  - F^{1,1}(h) = 0 \ .
\end{equation}
The operations we have done to arrive at \eqref{eq:EOMp1} and \eqref{eq:EOMp2} are invertible, so this proves the proposition.
\end{proof}
A sufficient condition which solves \eqref{eq:EOMfull1}-\eqref{eq:EOMfull2} is to pick a holomorphic bundle, where
\begin{equation}
    \label{eq:Holomorphic}
    F^{0,2}=F^{2,0}=0\:,
\end{equation}
also satisfying the "instanton" condition
\begin{equation}
    \label{eq:Instanton}
    F^{1,1}\wedge\Omega(h)=0\:,
\end{equation}
or equivalenty
\begin{equation}
    F(h)=h\lrcorner F=0\:.
\end{equation}
This is the kind of background solution we are mostly interested in, also when we come to quantize the theory in section \ref{sec:Quantum}.

A few remarks are in order regarding the instanton solutions \eqref{eq:Holomorphic}-\eqref{eq:Instanton}. Firstly, we note that the solutions are in fact the only solutions of the equations of motion \eqref{eq:EOMfull1}-\eqref{eq:EOMfull2} if we insist that the solutions of \eqref{eq:FullEOM1}-\eqref{eq:FullEOM2} should be invariant under a rescaling of $h$ by some non-zero constant, which is also the type of solution that we will mostly be interested in. We refer to solutions of \eqref{eq:Holomorphic}-\eqref{eq:Instanton} as {\it scale invariant solutions}. 

Secondly, we also note that the scale invariant solutions are the only solutions if we insist the bundle be holomorphic. Furthermore, equation \eqref{eq:Instanton} interestingly also takes a similar form to instanton conditions appearing in exceptional geometry, and $G_2$ instantons in particular. Here the $(2,1)$-form $\Omega(h)$ plays the role of the non-degenerate, or integrable, $G_2$ three-form. Here, $\Omega(h)$ is integrable in the sense that the corresponding deformation tensor $h^a_{\bar b}$ is non-degenerate ($\vh\neq0$).

Thirdly, one might also wonder about existence and uniqueness of solutions of equations \eqref{eq:Holomorphic}-\eqref{eq:Instanton}, and a potential moduli space. We will return to study these equations and their moduli space further, when we study the infinitesimally deformed theory $S_1$ of equation \eqref{eq:InfTheory} in more detail in an upcoming paper. Indeed, the equations of motion of $S_1$ are
\begin{align}
\label{eq:instanton21}
    F^{0,2}\wedge\Omega(h)&=0\\
\label{eq:instanton22}
    F^{1,1}\wedge\Omega(h)&=0\:.
\end{align}
For $h$ non-degenerate, the first equation implies that the bundle is holomorphic, while the second is presicely the instanton constraint \eqref{eq:Instanton}. 

\subsection{Remarks on the moduli space of instantons}
As a prelude, let's briefly consider the infinitesimal moduli equations of the instanton equations \eqref{eq:instanton21}-\eqref{eq:instanton22}. Indeed, the corresponding moduli problem has interesting mathematical properties, especially if we treat the $(1,0)$ and $(0,1)$ parts of the deformation of $A$ as separate degrees of freedom. Briefly, if one defines $\beta=h^a \Ah_a$ and $\alpha = \delta A^{0,1}$, and construct the doublet $y=(\beta,\alpha)$ valued in $Q=\Lg\oplus\Lg$, one find that the infinitesimal deformations (with $h$ fixed) are in the kernel of the differential operator
\begin{equation}
    \bar{D} = \begin{pmatrix}
    \delb_A & h^a\nabla_a \\ 0 & \delb_A
\end{pmatrix} \ .
\end{equation}
It is straight forward to check that $\bar D$ squares to zero if and only if the instanton constraint \eqref{eq:instanton22} is satisfied. The infinitesimal moduli are then counted by the cohomology $H^{0,1}_{\bar D}(Q)$, where the associated deformation complex is given by
\begin{equation}
\label{eq:DefComplex}
    0\rightarrow\Omega^{0,0}(Q)\xrightarrow{\bar D}\Omega^{0,1}(Q)\xrightarrow{\bar D}\Omega^{0,2}(Q)\xrightarrow{\bar D}\Omega^{0,3}(Q)\rightarrow0\:.
\end{equation}
This deformation complex has several interesting features, which we remark on.

Firstly, the complex \eqref{eq:DefComplex} can be shown to have a vanishing index, or Euler characteristic \cite{Kjelsnes2025}. Hence the corresponding "virtual" or expected dimension of the moduli space also vanishes. Moduli problems where this is the case tend to be nicely behaved with respect to the definition of topological invariants, etc, such as for example ordinary Donaldson-Thomas theory on Calabi-Yau three-folds \cite{donaldson1998gauge, thomas1997a}.

Secondly, note that the operator $\bar D$ has an upper-triangular strucure, which in turn defines $Q$ as an extension sequence, though with a {\it non-tensorial} extension class $h^a\nabla_a$. Viewed in this way, the geometric structure $(Q,\bar D)$ does not define a holomorphic bundle in the usual sense, as would be the case if the extension class was tensorial, but rather a kind of holomorphic sheaf where the linear transition maps between local patches are promoted to differential operators. However, one may still relate the structure to algebraic geometry and Čech cohomology, and prove a Dolbeault theorem \cite{deLazari:2024zkg, Ibarra2025}.

Thirdly, the differential $\bar D$ is reminiscent of differentials which have lately appeared in the study of moduli problems in the context of heterotic string compactifications, where such non-tensorial extensions also appear \cite{McOrist:2021dnd, deLazari:2024zkg, Ibarra2025, Chisamanga:2024xbm, McOrist:2024ivz, Ashmore:2025fxr}. The simpler nature of the moduli problem at hand makes this a fruitful toy model for the study of such differential operators, and a playground for the more complicated analogs which appear in the context of heterotic moduli.

\section{Examples of instantons}
\label{sec:examples}
The reader may wonder if non-trivial solutions to the instanton equations \eqref{eq:Holomorphic}-\eqref{eq:Instanton} exist. In this section we answer this in the positive, though with slightly contrived solutions. We will present both an abelian example, and a non-abelain example on $\End(TX)$, focusing more on the properties of the non-abelian example in section \ref{sec:EndTtheory}. 

\subsection{An abelian example}
For an abelian example solving equations \eqref{eq:Holomorphic}-\eqref{eq:Instanton}, one can consider the ample holomorphic line bundle with a connection whose curvature is the Kähler form $\omega$. If $h$ is chosen {\it harmonic} with respect to the metric of $\omega$, then $\Omega(h)\in\Omega^{2,1}(X)$ is also harmonic. It follows that
\begin{equation}
    \omega\wedge\Omega(h)=0\:.
\end{equation}
Indeed, we can Lefschetz decompose $\Omega(h)$ as
\begin{equation}
    \Omega(h)=\Omega(h)_{\rm p}+a\wedge\omega\:,
\end{equation}
where $\Omega(h)_{\rm p}$ is the primitive part of $\Omega(h)$, that is
\begin{equation}
    \omega\lrcorner\Omega(h)_{\rm p}=0\:,
\end{equation}
and $a\in\Omega^{1,0}(X)$. Since the Laplacian commutes with the Lefschetz decomposition on a Kähler Calabi-Yau, it follows that if $\Omega(h)$ is harmonic, then $a$ is also harmonic, and thus must vanish as $h^{1,0}(X)=0$.

\subsection{A non-abelian example on $\End(T^{1,0}X)$ and $\End(TX)$}
\label{sec:EndTtheory}
For a non-trivial non-abelian example of a gauge theory satisfying equations \eqref{eq:Holomorphic}-\eqref{eq:Instanton}, one can consider the "Chern-type" connection $\del_{\tilde\nabla}$ on the tangent bundle given by $h$, defined in Appendix \ref{app:connection}, whose curvature satisfies \eqref{eq:Instanton}. As shown in the appendix, this gives a connection on $\End(T^{1,0}X)$, which solves the equations \eqref{eq:Holomorphic} and \eqref{eq:Instanton}. To connect this more with physics, it is desirable to view $\del_{\tilde\nabla}$ as the $(1,0)$-part of a real gauge theory on the real tangent bundle $\End(TX)$. The full connection is
\begin{equation}
\label{eq:FullConn}
    \dd_{\tilde\nabla}=\del_{\tilde\nabla}+\delb_{\tilde\nabla}\:,
\end{equation}
where $\delb_{\tilde\nabla}$ acts non-trivially on anti-holomorphic indices. We have to decide what this action is, and there are two natural choices. We will discuss both choices here.

\subsubsection*{Choice I}
In the first choice, which will be our main choice in this paper, we simply use $h$ again in the definition of the connection, where $\delb_{\tilde\nabla}$ acts as
\begin{equation}
    \delb_{\tilde\nabla}\beta^{\bar a}={h}_c^{\bar a}\delb\left({h}^c_{\bar b}\beta^{\bar b}\right)\:,
\end{equation}
where ${h}_c^{\bar a}$ is the inverse of ${h}^c_{\bar b}$. It is then straight forward to check that the full connection $\dd_{\tilde\nabla}$ remains an instanton (a solution to \eqref{eq:Instanton}). However, it is not the case that $\dd_{\tilde\nabla}$ is a self-adjoint connection on $\End(TX)$ for generic choices of $h$, in the sense that the $(0,1)$-part of $\dd_{\tilde\nabla}$ is the hermitian conjugate of the $(1,0)$-part, as required to have a real gauge theory on $\End(TX)$ as desired by physics. The constraint of a self-adjoint connection implies that
\begin{equation}
    {h}_c^{\bar a}\delb\left({h}^c_{\bar b}\right)={\bar h}_c^{\bar a}\delb\left({\bar h}^c_{\bar b}\right)\:.
\end{equation}
This equation can be re-arranged into
\begin{equation}
    \delb\left(h_{\bar b}^c\bar h^{\bar b}_{d}\right)=0\:.
\end{equation}
As $\End(T^{1,0}X)$ has no holomorphic sections other than a constant times the identity, we conclude that we must have
\begin{equation}
\label{eq:Inv=cc}
    h_a^{\bar b}=c\,\bar h_a^{\bar b}\:,
\end{equation}
where $c$ is $h$-dependent, but constant on $X$. Note that $c$ must be a real constant, as contracting with $h$ gives
\begin{equation}
    3=c\,\bar h_a^{\bar b}h_{\bar b}^a\:.
\end{equation}
If we further expand $h$ as
\begin{equation}
    h=Z^Ah_{A\,\bar b}^a\:,
\end{equation}
where $\{h_A\}$ form a basis of ${\cal H}^{0,1}(T^{1,0}X)$, where $\cal H$ denotes harmonic forms. We then get
\begin{equation}
    3=c\,Z^A\bar Z^{\bar B}h_{A\,a}^{\bar b}\bar h_{\bar B\,\bar b}^a=c\,Z^A\bar Z^{\bar B}G_{A\bar B}=c\,\vert Z\vert^2\:, 
\end{equation}
where $G_{A\bar B}$ is the metric on complex structure moduli space, and where we have used the special geometry relation $h_{A\,a}^{\bar b}\bar h_{\bar B\,\bar b}^a=G_{A\bar B}$, which we also prove in Appendix \ref{app:spacial}, proposition \ref{prop:GAB}. The special directions $h$ in the tangent space of the complex moduli space where the constraint \eqref{eq:Inv=cc} is satisfied will be studied in more detail below. We also also note that the connection symbols are trace-free (Proposition \ref{prop:TorsionFree}). The connection $\dd_{\tilde\nabla}$ therefore gives an $SU(3)$ gauge theory on $\End_0(TX)$, where the zero denotes the trace-free part.

\subsubsection*{Choice II}
The second choice of connection on anti-holomorphic indices, which we mention for completeness, is to use $\bar h$ instead of $h$, that is
\begin{equation}
    \delb_{\tilde\nabla}\beta^{\bar a}={\bar h}_c^{\bar a}\delb\left({\bar h}^c_{\bar b}\beta^{\bar b}\right)\:,
\end{equation}
where ${\bar h}^c_{\bar b}$ now is the inverse of the complex conjugate of $h$. This ensures that the connection $\dd_{\tilde\nabla}$ is self-adjoint for all choices of $h$ where $h_{\bar b}^a$ is invertible. The issue now is however that the full connection \eqref{eq:FullConn} does not in general satisfy the instanton constraint \eqref{eq:Instanton}. Denoting the curvature of the full connection $\dd_{\tilde\nabla}$ by $\cal\tilde R$, a straight-forward computation shows that the instanton constraint is equivalent to
\begin{equation}
     h^{\bar a}{\cal\tilde R}_{b\bar a}\dd z^{b}=0\:.
\end{equation}
For this choice, the non-trivial part of the instanton constraint becomes
\begin{equation}
    \dd z^{b}{\tilde R}_{b\bar a}{}^{\bar c}{}_{\bar d}h^{\bar a}=\dd z^{b}{\tilde R}_{b\bar d}{}^{\bar c}{}_{\bar a}h^{\bar a}=0\:,
\end{equation}
where we use that the connections involved are torsion-free, Proposition \ref{prop:TorsionFree}. Using that $\del h^{\bar a}=0$, as the inverse of a harmonic $h$ is also harmonic (see Proposition \ref{prop:inverse}), this equation becomes
\begin{equation}
    \del\delb_{\tilde\nabla}h^{\bar c}=0\:.
\end{equation}
This equation is clearly solved for the directions satisfying equation \eqref{eq:Inv=cc}. Indeed, both choices of connections agree for these special directions. One might wonder if these are the only solutions. A proof of this is beyond the scope of this paper, though we are tempted to conjecture that this is the case.

From hereon, we will restrict to {\it Choise I} when discussing the connection on ${\rm End}_0(TX)$. We shall also focus on the special directions satisfying \eqref{eq:Inv=cc}, which makes $\dd_{\tilde\nabla}$ a self-adjoint connection of a real $SU(3)$ gauge theory on ${\rm End}_0(TX)$ satisfying the instanton constraint \eqref{eq:Instanton}. The corresponding connection is then also metric, however with respect to a pseudo-hermitian metric which reads
\begin{equation}
\label{eq:HermMetr}
    {\cal H}^{\bar b}\,_{\bar a}\,^c\,_d=(\bar h h)^{\bar b}\,_{\bar a}\,^c\,_d=\bar h_d^{\bar b}h^c_{\bar a}=\tfrac{1}{c}\,h_d^{\bar b}h^c_{\bar a}\:.
\end{equation}
Note that the metric is not positive definite. Indeed, let $\alpha\in\End(T^{1,0}X)$, and compute
\begin{equation}
    \bar\alpha^{\bar a}{}_{\bar b}\,{\cal H}^{\bar b}\,_{\bar a}\,^c\,_d\,\alpha^d{}_c=\bar\alpha^{\bar a}{}_{\bar b}\bar h_d^{\bar b}h^c_{\bar a}\alpha^d{}_c=\bar\beta^{\bar a}{}_d\beta^d{}_{\bar a}\:,
\end{equation}
where $\beta^d{}_{\bar a}=h^c_{\bar a}\alpha^d{}_c$. Using the ordinary metric to lower and raise one index, this becomes
\begin{equation}
    \bar\alpha^{\bar a}{}_{\bar b}\,{\cal H}^{\bar b}\,_{\bar a}\,^c\,_d\,\alpha^d{}_c=\bar\beta_{ad}\beta^{da}\:.
\end{equation}
The right hand side is positive for symmetric $\beta$'s, and negative for anti-symmetric $\beta$'s. It forllows that the metric ${\cal H}$ is of indefinite signature $(6,3)$. As such, for mathematical rigor, this is not the metric we ought to use when defining positive definite inner-products and corresponding adjoints and Laplacians on the space of fields. We shall return to this issue below when we discuss the quantum theory and issues related to gauge fixing in Section \ref{sec:Quantum}. Before we do so, we will initiate an investigation into the properties of the special directions where \eqref{eq:Inv=cc} holds true.

\subsection{Properties of special directions}
We shall be interested throughout the paper in directions $Z^A$ in parameter space where equation \eqref{eq:Inv=cc} holds true, where the corresponding gauge theory on ${\rm End}_0(TX)$ is real. This is of course true if there is only one direction in complex moduli space, such as for the mirror quintic, but as we will see it turns out not to be true in general, but only for special directions $Z^A$ of the deformation parameter. 

Using the equation for the inverse of $h$ \eqref{eq:h-Inv}, derived in Appendix \ref{app:Geometry}, we find the equation we must satisfy for \eqref{eq:Inv=cc} to be true is then
\begin{equation}
\label{eq:Condition1}
    \bar h_{\bar B\,a}^{\bar b}G^{\bar BB}\del_B{\cal G}\propto\bar h_{\bar B\,a}^{\bar b} \bar Z^{\bar B}\:,
\end{equation}
where
\begin{equation}
    {\cal G}=\log(\kappa)\,,\;\;\kappa=\kappa_{ABC}Z^AZ^BZ^C\,,\;\;\;{\rm and}\;\;\;\kappa_{ABC}=\int_X\Omega(h_A,h_B,h_C)\wedge\Omega
\end{equation}
are the Yukawa couplings. We also define
\begin{align}
    \kappa_A&=\kappa_{ABC}Z^BZ^C\\
    \kappa_{AB}&=\kappa_{ABC}Z^C\:.
\end{align}
Equation \eqref{eq:Condition1} is satisfied if and only if 
\begin{equation}
    G^{\bar BB}\del_B{\cal G}\propto\bar Z^{\bar B}\:.
\end{equation}
To find the proportionality constant, we contract with $Z_{\bar B}$, and use that
\begin{equation}
    Z^A\del_A{\cal G}=3\:.
\end{equation}
We then find that $Z^A$ must satisfy the (non-holomorphic) equation
\begin{equation}
\label{eq:original}
    G^{\bar BB}\del_B{\cal G}=\frac{3}{\vert Z\vert^2}\bar Z^{\bar B}\:,
\end{equation}
where $\vert Z\vert^2=Z^AG_{A\bar B}\bar Z^{\bar B}$. This equation can be rewritten as
\begin{equation}
    \del_A\log\left(\frac{\kappa}{\vert Z\vert^6}\right)=0\:.
\end{equation}
We are therefore interested in critical points of the function $\frac{\kappa}{\vert Z\vert^6}$ in holomorpic directions, where we also insist that $\kappa\neq0$, which is equivalent to $\vh\neq0$. At such points, a critical point is equivalent to a point where
\begin{equation}
    \del_A\left(\frac{\kappa\bar\kappa}{\vert Z\vert^6}\right)=0\:.
\end{equation}
Rewriting the equation like this has the upshot that we are differentiating a real function, so holomorpic critical points are equivalent to ordinary critical points. Furthermore the equation is invariant under complex re-scalings $Z^A\rightarrow tZ^A$, for $t\in\mathbb{C}^*$, i.e. re-scaling is always a flat direction.\footnote{Note also that such a re-scaling leave the connection $\tilde\nabla_a$ of Appendix \ref{app:connection} invariant.}

It follows that we can think of the function 
\begin{equation}
    f(Z,\bar Z)=\frac{\kappa\bar\kappa}{\vert Z\vert^6}
\end{equation}
as a function on $\mathbb{P}^N$, where $N=h^{2,1}-1$. As the function $f$ is smooth without singularities on $\mathbb{P}^N$ which is compact, it must have a maximum. We therefore know that at least one solution exists, where also $\kappa\neq 0$. We would like to know if more such points exists, and if they are rigid, in the sense that a solution can only be infinitesimally deformed by rescaling. If the points are rigid in this sense, the points where the corresponding connections $\tilde\nabla$ on ${\rm End}_0(TX)$ are self-adjoint connections, and pseudo-hermitian given by the metric \eqref{eq:HermMetr}, will also be rigid.

\subsection{Bounds on special directions}
Using Morse theory, we can put some bounds on the minimum number of critical points of $f(Z,\bar Z)$. First note that we are not interested in the global minimum of $f$, given by the homogeneous degree three equation
\begin{equation}
    \kappa=\kappa_{ABC}Z^AZ^BZ^C=0\:.
\end{equation}
As a homogeneous polynomial of degree three, this equation describes a hypersurface of $\mathbb{P}^N$ of dimension $N-1$. Let us call this hypersurface $Y$.

Consider then a critical point of $f$ where $\kappa\neq0$. In a small open patch around the critical point, we pick coordinates on $\mathbb{P}^N$ so that $\kappa$ is constant. In these coordinates, a critical point of $f$ is then given as
\begin{equation}
    \delta Z^A G_{A\bar B}\bar Z^{\bar B}=0\,,\;\;\;\delta\kappa=3\,\delta Z^A\kappa_A=0\:.
\end{equation}
That is, $\delta Z^A\kappa_A=0$ for all directions $\delta Z^A$ which are orthogonal to $Z$. This implies that $\kappa^{\bar A}$ must be proportional to $\bar Z^{\bar A}$, which precisely leads to equation \eqref{eq:original}. For a deformation $\delta Z^A$ to be flat at second order, we find that we must also satisfy
\begin{equation}
    \delta Z^A G_{A\bar B}\,\delta\bar Z^{\bar B}=0\:,\;\;\;\delta Z^A\delta Z^B\kappa_{AB}=0\:.
\end{equation}
This shows that $\vert\delta Z\vert^2=0$, which can only happen if $\delta Z^A=0$. As this was the second order constraint, this implies that the Hessian of $f$ is non-degenerate at these critical points. I.e. $f$ is a Morse function outside of the vanishing locus $Y$, where $\kappa=0$. The corresponding critical loci are hence points on $\mathbb{P}^N$.

A lower bound on the number of critical points for such a Morse function, or Morse-Bott function, are computed by relative cohomology.\footnote{For an analogous situation, where Morse theory and Morse--Bott theory was used in the computation of chiral spectra local models of M-theory compactifications, see \cite{Pantev:2009de, Braun:2018vhk}.} That is, the cohomology of forms restricted to vanish on $Y$. The number of critical points $C^\gamma$ with index $\gamma$, where $\gamma$ are the number of negative eigenvalues of the Hessian, are then bounded by 
\begin{equation}
    C^\gamma\geq b^\gamma(\mathbb{P}^N,Y)\:,
\end{equation}
where $b^\gamma$ is the relative Betti number of order $\gamma$, i.e. the dimension of $H^\gamma(\mathbb{P}^N,Y)$. We will study the properties of the corresponding special directions, and connections to other features of Calabi--Yau moduli space, in more detail in future work. For now, we give an explicit low-dimensional example. 

\subsection{A low-dimensional example}
As an example, we consider the mirror of the bi-cubic CICY in the large complex structure limit. The configuration matrix is given by
\begin{align}
 X^*_{3}:   \left[
    \begin{array}{c|c} \mathbbm{P}^2 & 3 \\ 
    \mathbbm{P}^2 & 3 
    \end{array}
    \right]\:,
\end{align} 
with intersection numbers 
\begin{align}
    D_{111}= D_{222}=0\, ,\quad D_{112}= D_{122}=3\, .
\end{align} 
These then give the Yukawa couplings $\kappa_{ABC}$ of the mirror dual in the large complex structure limit. We therefore find that
\begin{equation}
    \kappa(Z)=3\,d_{112}(Z^{1})^2Z^2+3\,d_{122}Z^1(Z^2)^2=9\,Z^1Z^2(Z^2+Z^2)\:.
\end{equation}
The corresponding minimum locus $Y$ of $f(Z,\bar Z)$ as a function on $\mathbb{P}^1$ is then given by three points
\begin{equation}
    Z^1=0\:,\;\;\;Z^2=0\:,\;\;\;Z^1+Z^2=0\:,
\end{equation}
given as rays through the origin of $(\mathbb{C^*})^2$.

We can compute the Betti numbers $b^\gamma(\mathbb{P}^1,Y)$ by a long exact sequence in cohomology
\begin{align}
    0\rightarrow\; &H^0(\mathbb{P}^1,Y)\rightarrow H^0(\mathbb{P}^1)\rightarrow H^0(Y)\rightarrow\notag\\
    &H^1(\mathbb{P}^1,Y)\rightarrow H^1(\mathbb{P}^1)\rightarrow H^1(Y)\rightarrow\\
    &H^2(\mathbb{P}^1,Y)\rightarrow H^2(\mathbb{P}^1)\rightarrow H^2(Y)\rightarrow 0\notag\:.
\end{align}
Note that $H^0(\mathbb{P}^1,Y)=0$ as these correspond to constant functions which vanish at $Y$. We also have
\begin{equation}
    H^1(\mathbb{P}^1)=H^1(Y)=H^2(Y)=0\:,
\end{equation}
and 
\begin{equation}
H^0(\mathbb{P}^1)=H^2(\mathbb{P}^1)\cong\mathbb{R}\:,\;\;\;H^0(Y)\cong\mathbb{R}^3\:.
\end{equation}
The usual sequence chasing then gives the Betti numbers $b^1(\mathbb{P}^1,Y)=2$ and $b^2(\mathbb{P}^1,Y)=1$. It follows that the function $f$ will have at least three critical points outside of the global minimum $\kappa(Z)=0$, two saddle points and one maximum, corresponding to three distinct self-adjoint and pseudo-hermitian connections on ${\rm End}_0(TX)$.

The exact number of critical points can in general depend on the specifics of of the background complex structure. Let $\mu^A$ denote the moduli of the background complex structure. Writing the background complex structures as
\begin{equation}
    \mu^A=c^A+iw^A\:,
\end{equation}
the Kähler metric then reads
\begin{equation}
    G_{A\bar B}=\frac94\frac{\tilde\kappa_A\tilde\kappa_B}{\tilde\kappa^2}-\frac32\frac{\tilde\kappa_{AB}}{\tilde\kappa}\:,
\end{equation}
where 
\begin{equation}
\tilde\kappa=\kappa_{ABC}w^Aw^Bw^C\:,\;\;\;\tilde\kappa_A=\kappa_{ABC}w^Bw^C\:,\;\;\;\tilde\kappa_{AB}=\kappa_{ABC}w^C\:.
\end{equation}
As a demonstration, if we for example choose a background complex structure where
\begin{equation}
    (w^1,w^2)=(1,2)\:,
\end{equation}
we find three critical points, saturating the Morse theory bound. Choosing a normalization where $Z^1=1$, the critical solutions read
\begin{equation}
    Z^2=1\:,\;\;\;Z^2=-\tfrac15(7+2\sqrt{6})\:,\;\;\;Z^2=-\tfrac15(7-2\sqrt{6})\:.
\end{equation}
It is at this stage unclear if the number of critical directions can exceed the bound for specific choices of the background complex structure, also as we move away from the large complex structure limit. We hope to investigate this further in future works. 

\subsection{A covariant formulation}
Up until this point, we have fixed the complex structure background, determined by $\Omega$, and treat the $Z^A$ as directions in the tangent space of the complex structure moduli space ${\cal T}{\cal M}$, where $\cal M$ denotes the complex structure moduli space. Strictly speaking, the elements $h_A\in{\cal H}^{0,1}(T^{1,0}X)$ are covariant, transforming as components of a section of ${\cal T}^*{\cal M}$, and corresponding to elements
\begin{equation}
    {\cal D}_A\Omega=\chi_A=\Omega(h_A)\in{\cal H}^{2,1}(X)\:.
\end{equation}
That is,
\begin{equation}
    h_A^a=\frac{1}{2\vert\vert\Omega\vert\vert^2}\bar\Omega^{abc}\chi_{A\,bc\bar d}\dd z^{\bar d}\:.
\end{equation}
Note that even though $\Omega$ and $\chi_A$ also transform as sections of the holomorphic line bundle $\cal L$ over $\cal M$, this dependece drops out for $h_A$. So is we wish to retain covariance for $h=Z^Ah_A$ under change of coordinates of the background complex structure moduli space, we should promote 
\begin{equation}
    Z^A\rightarrow V^A\:,
\end{equation}
where $V^A$ are components of a section of $\cal{TM}$.

Note also that the Yukawa couplings 
\begin{equation}
    \kappa_{ABC}=\int_X{\cal D}_A{\cal D}_B{\cal D}_C\Omega\wedge\Omega\:,
\end{equation}
as elements of a section of $S^3({\cal T}^*{\cal M})$, also transform as sections of ${\cal L}\otimes{\cal L}={\cal L}^2$. Hence,
\begin{equation}
    \kappa(V)=\kappa_{ABC}V^AV^BV^C
\end{equation}
also transforms as a section of ${\cal L}^2$. The functional of $V$,
\begin{equation}
    f(V,\bar V)=\frac{\kappa(V)\bar\kappa(\bar V)}{\vert V\vert^6}
\end{equation}
therefore transforms as a section of ${\cal L}^2\otimes\bar{\cal L}^2$.\footnote{Note that ${\cal L}$ is a complex line bundle, where gauge transformations take values in $\mathbb{C}^*$. The bundle ${\cal L}^2\otimes\bar{\cal L}^2$ is therefore not trivial, as would be the case if $\cal L$ was a real $U(1)$ line bundle.} To get an invariant functional on $\cal M$, we multiply by $e^{2K}$ to get
\begin{equation}
    F(V,\bar V)=e^{2K}f(V,\bar V)=\frac{\vert\vert h(V)\vert\vert^2}{\vert V\vert^6}\:,
\end{equation}
by equation \eqref{eq:vh-Yuk}. Note in particular that $\vert h(V)\vert$ transforms as a section of ${\cal L}\otimes\bar{\cal L}^{-1}$, so $\vert\vert h(V)\vert\vert^2$ is invariant.

A vector field $V$ which extremizes the functional $F(V,\bar V)$, such that also $h(V)\neq0$, hence, at each point in the moduli space $\cal M$, corresponds to a self-adjoint and pseudo-hermitian connection of interest on ${\rm End}_0(TX)$. As noted above, at a fixed point in $\cal M$, $F(V,\bar V)$ will always have a maximum as a function on $\mathbb{P}^N$, so one might expect at least one such global section of $\cal TM$ to exist, modulo re-scaling.\footnote{Note however that even the position of the maximum in $\mathbb{P}^N$ need not be a smooth function over $\cal M$.} What happens to the other fixed points as we move around in the bacground complex structure moduli space, and if they can be extended to global sections, is beyond the scope of this paper, but will be analyzed further in upcoming work.

\section{The quantum theory and the background field method}
\label{sec:Quantum}
Let us now consider the quantum theory of \eqref{eq:Full-def}. We will do so using the background field method by deforming the theory \eqref{eq:Full-def} around a background solution to the equations of motion. We again restrict to considering instanton backgrounds, i.e. solutions the scale invariant solutions of \eqref{eq:Holomorphic} and \eqref{eq:Instanton}. For the purpose of physics, and the study of the quantum theory and anomalies, we shall from hereon assume that the background gauge theory is defined using a real gauge group with a self-adjoint connection. In particular, for the guage theory on ${\rm End}_0(T^{1,0}X)$ we shall restrict to the special directions of $h$ studied above, where \eqref{eq:Inv=cc} is satisfied and the corresponding connections are self-adjoint and pseudo-hermitian with respect to the (non-positive definite) metric \eqref{eq:HermMetr}.

Let us define the new fields
\begin{align}
\Aah &= \alpha\:,\\
h^a \Ah_a &= \beta\:.
\label{eq:rot1}
\end{align}A somewhat lengthy but straight forward computation then shows that the kinetic term of the theory may be written as

\begin{equation}
    S_{\rm kin}=\int_X( \alpha - \beta )(\delb_A - \d_A )(\alpha -\beta  )\wedge\Omega\:,
\end{equation}

Here the operator $\delta_A$ is defined as
\begin{equation}
    \delta_A=h^a\nabla_a\:,
\end{equation}
with $\nabla_a$ the gauge connection on gauge indices. Note that $\delta_A$ is nilpotent, that is for $\alpha\in\Omega^{0,p}(\Lg)$
\begin{equation}
    \delta_A^2\alpha=h^a\nabla_a\left(h^b\nabla_b\alpha\right)=h^a\tilde\nabla_a\left(h^b\tilde\nabla_b\alpha\right)=h^ah^b\tilde\nabla_a\tilde\nabla_b\alpha=0\:,
\end{equation}
where $\tilde\nabla$ is the "Chern-type" connection on holomorphic tangent bundle indecies defined in Appendix \ref{app:connection}. We have used that $h$ is "metric" with respect to $\tilde\nabla$, and the indices $a$ and $b$ are anti-symmetrised in the last equality. Useful for later, we also note that $\delta_A$ anti-commutes with $\delb_A$,
\begin{equation}
    \left(\delta_A\delb_A+\delb_A\delta_A\right)\alpha=h^a\nabla_a\delb_A\alpha-h^a\delb_A\nabla_a\alpha=h^a[F_a,\alpha]=0\:,
\end{equation}
where $F_a=F_{a\bar b}\dd z^{\bar b}$, where $F=\delb_A\del_A+\del_A\delb_A$ is the curvature. The last equality follows from $h^aF_a=0$, which is equivalent to \eqref{eq:Instanton}.

Finally, a somewhat lengthy but straight-forward computation shows that the cubic term of the theory takes a rather simple form
\begin{equation}
    S_{\rm cubic}=\tfrac23 \int_X \tr\left((\d A)^3\right) \wedge \left( \Omega+\Omega(h)+\Omega(h,h)+\Omega(h,h,h) \right) = \tfrac23 \int_X {\rm tr}\left((\alpha - \beta)^3\right) \w \Omega\:.
\end{equation}

The simple form of these terms prompts us to find a field redefinition for which $\tilde\beta=\alpha-\beta$. If we then redefine the fields as
\begin{align}
\label{eq:rot2-1}
    \tilde\beta&=-\beta+\alpha\\
\label{eq:rot2-2}
    \tilde\d_A&= \delb_A -\d_A\:,
\end{align}
the cubic term simplifies to
\begin{equation}
    S_{\rm cubic}=\tfrac23 \int_X {\rm tr}\left((\alpha - \beta)^3\right) \w \Omega=\tfrac23 \int_X \tr\left(\tilde\beta^3\right) \w \Omega = \tfrac13 \int_X \tr\left(\tilde\beta\wedge[\tilde \beta, \tilde \beta]\right)\w\Omega\:,
\end{equation}
and so the full theory becomes
\begin{equation}
    S_2=\int_X\tr\left(\tilde \beta \tilde \d_A \tilde \beta +\tfrac13\tilde\beta\wedge[\tilde \beta, \tilde \beta]\right)\wedge\Omega\:.
\end{equation}

If we now rotate $\tilde\beta$ as
\begin{equation}
\label{eq:rot3}
    \tilde\beta=-h^a\gamma_a\:,
\end{equation}
we see that the theory can be written in the following simple form as
\begin{equation}
    \label{eq:Full-def-redef}  S_2=\vh\int_X\tr\left(\gamma(\del_A-h^{\bar a}\nabla_{\bar a})\gamma+\tfrac13\gamma\wedge[\gamma,\gamma]\right)\wedge\bar\Omega\:,
\end{equation}
where $h^{\bar a}=\dd z^{b}{h_{b}}^{\bar a}$, and ${h_{b}}^{\bar a}$ is the inverse of ${h_{\bar b}}^{a}$.

We recognize the first part of the kinetic term, together with the qubic interaction term, as simply anti-holomorphic Chern-Simons theory around a background, multiplied by a factor of the determinant $\vh$ of $h$. It should however be noted that the connections $\delb_A$ and $\del_A$ appearing in each term are also $h$-dependent, as they derive from solutions of the background equation \eqref{eq:Instanton}. We will come back to this issue below. Finally, it is perhaps not too surprising that the theory takes such a simple form, depending only on the linear combination given by \eqref{eq:rot2-1}. Indeed, our starting theory is merely ordinary holomorphic Chern--Simons, on a Calabi--Yau with complex structure given by $\tilde\Omega=\Omega+\Delta\Omega$,
\begin{equation}
    S_2=\int_X\omega_{CS}(A)\wedge\tilde\Omega\:,
\end{equation}
rewritten around a background solving \eqref{eq:Holomorphic}-\eqref{eq:Instanton}. Of course, ordianry Holomorphic Chern--Simons only depends on the $(0,1)$-part of the connection, in the appropriate complex structure. However, the explicit parametrization in terms of the deformation tensor might reveal new structural features that are not manifest in the standard formulation, and that we wish to study here. 

The theories \eqref{eq:Full-def} and \eqref{eq:Full-def-redef} are equivalent at the classical level. However, when we quantize we need to be careful with potential Jacobians associated to the field redefinitions we have done. The natural quantum fields are given by $\delta A^{1,0}$ and $\delta A^{0,1}$. Note that the first rotation, \eqref{eq:rot1}, is the inverse of the third rotation \eqref{eq:rot3} modulo an irrelevant sign. The respective Jacobians therefore cancel. The Jacobian of second rotation, \eqref{eq:rot2-1}, at most gives an unimportant numerical factor. We are therefore free to treat \eqref{eq:Full-def-redef} as our quantum theory.

In the rest of the section we will consider different ways of quantizing the theory \eqref{eq:Full-def-redef}, to obtain the one--loop partition function. The methods we use are equivalent, modulo subtleties related to each quantization method which we explore, such as for example how to include zero-modes. Most of this is fairly standard, and the reader may be tempted to skip to section \ref{sec:anomalies}. However, we do make some important observations in the following subsections. In particular Proposition \ref{prop:twist} with its following observations, which allow for a streamlined $h$-dependence of the partition function for the ${\rm End}_0(T^{1,0}X)$-theory for backgrounds with self-adjoint connections corresponding to the special directions of complex structure deformations. 

\subsection{The quantum theory: Formal quantization}
\label{sec:QuantFormal}
We will warm up by considering the formal computation of the one-loop partition function of \eqref{eq:Full-def-redef}. The kinetic operator of the theory reads
\begin{equation}
    {\cal D}_A=\vh(\del_A-h^{\bar a}\nabla_{\bar a}):=\vh D_A\:,
\end{equation}
defining the operator $D_A$. Note that $D_A$ is nilpotent, as
\begin{align}
    D_A^2&=\del_A^2-\del_A\circ h^{\bar a}\nabla_{\bar a}-h^{\bar a}\nabla_{\bar a}\circ\del_A+h^{\bar a}\nabla_{\bar a}\circ h^{\bar b}\nabla_{\bar b}\notag\\
    &=\del_A^2-h^{\bar a}F_{\bar a}-\left(\del h^{\bar a}-\tfrac12[h,h]^{\bar a}\right)\nabla_{\bar a}\:.
\end{align}
The first term vanishes as $\del_A$ squares to zero. It is straigh forward to show that the vanishing of the second term is equivalent to the instanton equation \eqref{eq:Instanton}. The third term vanishes due to the Maurer--Cartan equation for $h^{\bar a}$. Indeed, It follows from Proposition \ref{prop:inverse} in Appendix \ref{app:Geometry} that $h^{\bar a}$ is harmonic, so
\begin{equation}
    \del h^{\bar a}=0\:,\;\;\;\tfrac12[h,h]^{\bar a}=0\:,
\end{equation}
where the last equality follows from a short computation. 

Ignoring harmonic modes in the path integral, a formal computation of the absolute value of the one-loop partition function then gives
\begin{equation}
    \vert Z_2^{\rm{\small formal}}\vert=\frac{{\rm det}'(\Delta_{\cal D_A}^{0,0})^{\tfrac34}}{{\rm det}'(\Delta_{\cal D_A}^{1,0})^{\tfrac14}}\:,
\end{equation}
where ${\rm det}'$ denote Zeta-regularised determinants, and $\Delta_{\cal D_A}^{p,0}$ is the Laplacian of ${\cal D}_A$ on  Lie-algebra valued $(p,0)$-forms, defined as
\begin{equation}
    \Delta_{\cal D_A}={\cal D}_A{\cal D_A}^\dagger+{\cal D_A}^\dagger{\cal D}_A=\vvh^2\left(D_AD_A^\dagger+D_A^\dagger D_A\right)\:,
\end{equation}
where the adjoints are defined by picking appropriate metrics on the manifold and gauge bundle. We will come back to the issue of gauge fixing below.  

Note then that given an elliptic operator $\cal O$, under a re-scaling by some constant $a$, the regularised determinant scales as
\begin{equation}
    {\rm det}'(a{\cal O})=a^{\zeta_{\cal O}(0)}{\rm det}'({\cal O})\:,
\end{equation}
where $\zeta_{\cal O}(s)$ is the associated Zeta-function of $\cal O$, to be defined more explicitly in examples below. It follows that the partition function may be written as
\begin{equation}
    \vert Z_2^{\rm{\small formal}}\vert=\vert\vh\vert^{-\tfrac12\left(\zeta_{\Delta^{1,0}_{D_A}}(0)-3\zeta_{\Delta^{0,0}_{D_A}}(0)\right)}\frac{{\rm det}'(\Delta_{D_A}^{0,0})^{\tfrac34}}{{\rm det}'(\Delta_{D_A}^{1,0})^{\tfrac14}}\:,
\end{equation}
$\vvh^2=\vh\bar\vh$. We define the holomorpic Ray--Singer torsion of $D_A$ as
\begin{equation}
    {\cal I}_{\rm RS}(D_A)=\left(\prod_{p=0}^3{\rm det}'(\Delta^{p,0}_{D_A})^{(-1)^{p+1}p}\right)^{\tfrac12}\:,
\end{equation}
and use the Serre-duality relations of Lemma \ref{lem:Serre} below, we see that
\begin{equation}
    \vert Z^{\rm{\small formal}}_2\vert=\vert\vh\vert^{-\tfrac12\left(\zeta_{\Delta^{1,0}_{D_A}}(0)-3\zeta_{\Delta^{0,0}_{D_A}}(0)\right)}{\cal I}_{\rm RS}(D_A)^{\tfrac12}\:.
\end{equation}
In the limit where the deformation parameter is sent to infinity, or $\vh\rightarrow\infty$, or the "large complex structure limit", the expression simplifies even further. Expressions involving the inverse of ${h_{\bar a}}^b$ will then tend to zero, and we simply get
\begin{equation}
\label{eq:FormalPartLargeh}
    \lim_{\vh\rightarrow\infty}\vert Z^{\rm{\small formal}}_2\vert=\vert\vh\vert^{-\tfrac12\left(\zeta_{\Delta^{1,0}_{\del_A}}(0)-3\zeta_{\Delta^{0,0}_{\del_A}}(0)\right)}{\cal I}_{\rm RS}(\del_A)^{\tfrac12}\:,
\end{equation}
where ${\cal I}_{\rm RS}(\del_A)$ is the the Ray--Singer torsion of $\del_A$. 

In the case when the connection $\del_A$ is hermitian with an associated positive definite hermitian metric, using the corresponding hermitian metric in the definition of the adjoint, the Ray--Singer torsion of $\del_A$ becomes simply the standard holomorphic Ray--Singer torsion. We can then also say more about the zeta-invariants appearing in the exponential. Using Lemma \ref{lem:Zeta} below, and ignoring the Hodge numbers as we are ignoring harmonic forms, for large $\vh$ we then get
\begin{equation}
    \vert Z^{\rm{\small formal}}_2\vert=\vert\vh\vert^{-\tfrac{{\rm dim}(\Lg)}{480}\chi(X)}{\cal I}_{\rm RS}(\del_A)^{\tfrac12}\:.
\end{equation}
This expression however does not reveal the full $h$-dependence of the partition function. In particular, the connection $\del_A$ has an implicit dependence on $h$ through the instanton constraint \eqref{eq:Instanton}, and therefore so does the Ray--Singer torsion. We will investigate this dependence further in the next section, where we will perform a more rigorous BRST and BV-quantization of the theory, and in section \ref{sec:anomalies}, when we come to discuss anomalies.  

\subsection{The quantum theory: BRST quantization}
We will now perform a more rigorous quantization of the action $S_2$, using the BRST formalism. The BRST quantization procedure is standard, and mimics that of three-dimensional Chern-Simons theory \cite{witten1989quantum, Axelrod:1991vq}, though with a slightly strange kinetic operator. We we introduce fermionic ghost fields $c$ and $\bar c$, and a gauge-fixing bosonic field $b$. The ghost is a zero-form, while $\bar c$ and $b$ are $(3,0)$-forms, all valued in $\Lg$. We'll use the notation
\begin{align}
{\cal D}_A&=\vh(\del_A-h^{\bar a}\nabla_{\bar a})=\vh D_A\\
\{\;,\,\}&=\vh[\;,\,]\:,
\end{align}
where $[\;,\,]$ is the standard Lie bracket super-commutator. That is for $\gamma_1\in\Omega^{n_1,0}(\Lg)$ and $\gamma_2\in\Omega^{n_2,0}(\Lg)$ of fermionic degree $m_1\in\{0,1\}$ and $m_2\in\{0,1\}$ respectively, the bracket reads
\begin{equation}
    [\gamma_1,\gamma_2]=\gamma_1\wedge\gamma_2-(-1)^{n_1n_2+m_1m_2}\gamma_2\wedge\gamma_1\:.
\end{equation}
The BRST algrbra then reads
\begin{align}
    Q\gamma&=-{\cal D}_A c-\{\gamma,c\}\:,\\
    Q c&=\tfrac12\{c,c\}\:,\\
    Q \bar c&= b\:,\\
    Q b&=0\:,
\end{align}
Recall that $Q$ is fermionic. It is easily checked that $Q^2=0$.

A common gauge fixing condition condition is Lorentz gauge, ${\cal D}_A^\dagger\gamma=0$. The adjoint is defined by picking appropriate metrics on the manifold and gauge bundle. The natural choice of metric on the manifold is arguably the Kähler metric of the background. Choosing a metric on the bundle is a bit more subtle, especially if the connection is only pseudo-hermitian, corresponding to a non-positive definite metric on the bundle. We shall return to this below when discussing gauge fixing of the explicit ${\rm End}_0(T^{1,0}X)$ theory. For stable bundles, one might expect the unique hermitian Yang-Mills metric to be the natural choice, but this is perhaps a bit too naive given the problem at hand.

The gauge fixing is implemented by the gauge fixed action
\begin{align}
    S_{2,\,{\rm gf}}&=S_2+ Q\int_X\tr\left(\bar c\wedge {\cal D}_A^\dagger\gamma\right)\wedge\bar\Omega\notag\\
    &=S_2+\int_X\tr\left(b\wedge {\cal D}_A^\dagger\gamma+\bar c\wedge {\cal D}_A^\dagger {\cal D}_A c+\bar c\wedge {\cal D}_A^\dagger\{\gamma,c\}\right)\wedge\bar\Omega
    \notag\\
    &=S_2+\int_X\tr\left(\bar\vh\,b\wedge D_A^\dagger\gamma+\vvh^2\,\bar c\wedge D_A^\dagger D_A c+\vvh^2\,\bar c\wedge D_A^\dagger[\gamma,c]\right)\wedge\bar\Omega\:,
\end{align}
where $\vvh^2=\vh\bar\vh$.

The partition function is then given as
\begin{equation}
    Z_2=\int{\cal D}\gamma{\cal D}c{\cal D}\bar c{\cal D}b\: e^{iS_{2,\,{\rm gf}}(\gamma,c,\bar c,b)}\:.
\end{equation}
It is then natural to absorb $\vh$ by performing a field redefinition 
\begin{equation}
\label{eq:fieldredef1}
    (\gamma, c, \bar c, b)\rightarrow(\tfrac{1}{\sqrt{\vvh}}\gamma, \tfrac{1}{\vvh}c, \tfrac{1}{\vvh}\bar c, \tfrac{1}{\sqrt{\vvh}}b)\:.
\end{equation}
This gives a somewhat more familiar looking action
\begin{equation}
\label{eq:S2b-redef}
    S_{2,\,{\rm gf}}=\int_X\tr\left(\gamma\, e^{i\theta}D_A\gamma+b\, e^{-i\theta}D_A^\dagger\gamma+\bar c\,D_A^\dagger D_A c+\tfrac{e^{i\theta}}{3\sqrt{\vvh}}\gamma\,[\gamma,\gamma]+\tfrac{1}{\sqrt{\vvh}}\bar c\,D_A^\dagger[\gamma,c]\right)\wedge\bar\Omega\:,
\end{equation}
save from the somewhat unconventional kinetic operator, the appearance of the phase of $\vh$ given by $e^{i\theta}$, and the cubic couplings have been re-scaled. 

However, we also get a Jacobian factor from the infinite-dimensional measure. What this factor is depends upon how we choose to regulate the path-integral. A somewhat naive parameterization of the measure in position space leads under the field redefinition to
\begin{align}
    \int{\cal D}\gamma{\cal D}c{\cal D}\bar c{\cal D}b&=\int\small\prod_x\dd\gamma_x\dd c_x\dd\bar c_x\dd b_x\notag\\
    \rightarrow\vvh^{\tfrac12{\rm dim}(\Lg)(-3+2+2-1)\sum_x1}\int\small\prod_x\dd\gamma_x\dd c_x\dd\bar c_x\dd b_x&=\vvh^{-{\rm dim}(\Lg)\cdot0\cdot\sum_x1}\int{\cal D}\gamma{\cal D}c{\cal D}\bar c{\cal D}b\notag\\
    &=\int{\cal D}\gamma{\cal D}c{\cal D}\bar c{\cal D}b\:,
\end{align}
where the numbers in the exponential of $\vvh$ represent the point-wise dimensions of Lie-algebra valued $(1,0)$-forms and $(0,0)$-forms, corresponding to $\gamma$, $c$, $\bar c$ and $b$ respectively. Note that the ghosts $c$ and $\bar c$ come with an factor of $2$, due to the field-redefinition \eqref{eq:fieldredef1}. The sign is also shifted as they are fermionic. 

Naively then, it seems the measure should be invarant under this re-scaling of fields. However, given our differential gauge-fixing constraint of Lorentz gauge, it is more natural to regulate the path-integral in momentum space instead. For example, we expand the measure of $\gamma$ as
\begin{equation}
    {\cal D}\gamma = \small\prod_{\lambda^i}\dd\gamma_i\:,
\end{equation}
where $\lambda^i$ are the eigenvalues of the kinetic operator associated to $\gamma$, and $\gamma_i$ denote the corresponding eigenvectors. Given the action \eqref{eq:S2b-redef}, we formally find
\begin{equation}
    {\cal D}\gamma = \small\prod_{\rm harm.}\dd\gamma_h\small\prod_{\lambda_{\tilde D_A}^i}\dd\gamma_i\small\prod_{\lambda_{\tilde D_A^\dagger}^i}\dd\gamma_i\:,
\end{equation}
where $\gamma_h$ are harmonic forms, and $\lambda_{\tilde D_A}^i$ and $\lambda_{\tilde D^\dagger_A}^i$ are formal eigenvalues of 
\begin{align}
    \tilde D_A&=e^{i\theta} D_A\notag\\
    \tilde D_A^\dagger&=e^{-i\theta} D_A^\dagger\:,
\end{align}
respectively, counted with multiplicity. Together with the bosonic $(3,0)$-form $b$, we see that
\begin{equation}
    {\cal D}\gamma{\cal D}b = \small\prod_{\rm harm.}\dd s_h\small\prod_{\lambda_{\tilde{\slashed D}^-_A}^i}\dd s_i\:,
\end{equation}
where $\tilde{\slashed D}^-_A=e^{i\theta}D_A+e^{-i\theta}D_A^\dagger$ is the Dirac operator on odd $(p,0)$-forms $s=\gamma+b$. The rescaling of fields then corrects the measure to\footnote{Note that $h_{\tilde D_A}^{(p,0)}(\Lg)=h_{D_A}^{(p,0)}(\Lg)$, as multiplying the operator by a phase does not change the cohomology.}
\begin{equation}
    {\cal D}\gamma{\cal D}b\rightarrow\vert\vh\vert^{-\tfrac12\left(h_{D_A}^{(1,0)}(\Lg)+h_{D_A}^{(3,0)}(\Lg)+\sum_i1\right)}{\cal D}\gamma{\cal D}b\:,
\end{equation}
where the sum is over non-zero eigenvalues of $\tilde{\slashed D}^-_A$, counted with multiplicity. Note that $\tilde{\slashed D}_A^-$ is self-adjoint, so its eigenvalues are real. 

Next, we notice that $\left(\tilde{\slashed D}_A^-\right)^2=\Delta_{D_A}$, where
\begin{equation}
    \Delta_{\del_A}=D_A D_A^\dagger+ D_A^\dagger D_A\:,
\end{equation}
is the Laplacian on $(1,0)+(3,0)$ forms. They hence have the same count for their eigenvalues, and the infinite sum can be regulated as
\begin{equation}
    \sum_i1\rightarrow\zeta_{\Delta_{ D_A}}(0)\:,
\end{equation}
where $\zeta_{\Delta_{ D_A}}(s)$ is the zeta function of $\Delta_{ D_A}$ which for ${\rm Re}(s)>>1$ can be expressed as
\begin{equation}
    \zeta_{{\Delta_{ D_A}}}(s)=\sum_i\frac{1}{\lambda_i^{2s}}\:.
\end{equation}
The $2$ in the exponential comes from the fact that for an eigenvalue $\lambda_i$ of $\tilde{\slashed D}_A^-$, the corresponding eigenvalue of $\Delta_{ D_A}$ is $\lambda_i^2$.

Including the contribution from the ghost fields, we thus get tat the measure ${\cal D}\Phi={\cal D}\gamma{\cal D}b{\cal D}c{\cal D}\bar c$ is re-scaled as
\begin{equation}
    {\cal D}\Phi\rightarrow\vvh^{-\tfrac12\left(h_{ D_A}^{1,0}(\Lg)+h_{ D_A}^{3,0}(\Lg)-2h^{0,0}_{ D_A}(\Lg)-2h^{3,0}_{ D_A}(\Lg)+\zeta_{\Delta^{(1,0)+(3,0)}_{ D_A}}(0)-2\zeta_{\Delta^{0,0}_{ D_A}}(0)-2\zeta_{\Delta^{3,0}_{ D_A}}(0)\right)}{\cal D}\Phi\:.
\end{equation}
Ala usual Serre-duality, it is straight forward to see that
\begin{equation}
    h_{ D_A}^{3,0}(\Lg)=h_{ D_A}^{0,0}(\Lg)\:,
\end{equation}
and $\Delta^{(3,0)}_{ D_A}$ has the same eigenvalues as $\Delta^{0,0}_{ D_A}$. Indeed, we have the following lemma
\begin{lemma}
\label{lem:Serre}
    For the operator $D_A$, we have the following Serre-duality result.
    \begin{equation}
        h^{p,0}_{D_A}(\Lg)\cong h^{3-p,0}_{D_A}(\Lg)\:,
    \end{equation}
    and $\Delta^{p,0}_{D_A}$ and $\Delta^{3-p,0}_{D_A}$ have isomorphic eigenspaces. 
\end{lemma}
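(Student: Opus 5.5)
The plan is to mimic the standard proof of Serre duality via a conjugate-linear Hodge star, adapted to $D_A=\del_A-h^{\bar a}\nabla_{\bar a}$. On $(p,0)$-forms valued in $\Lg$ I will define
\begin{equation*}
\sharp:\Omega^{p,0}(\Lg)\rightarrow\Omega^{3-p,0}(\Lg)\:,
\end{equation*}
as complex conjugation composed with the Hodge star, contraction with $\bar\Omega$ to return to holomorphic type $(3-p,0)$, and the bundle metric used to define adjoints to identify $\bar\Lg$ with $\Lg^*\cong\Lg$. Conjugation alone gives $(0,p)$-forms, so I will instead use the pairing
\begin{equation*}
\langle\alpha,\beta\rangle=\int_X\tr(\alpha\wedge\beta)\wedge\bar\Omega\:,\qquad \alpha\in\Omega^{p,0}(\Lg),\ \beta\in\Omega^{3-p,0}(\Lg)\:,
\end{equation*}
which is exactly the pairing appearing in the action \eqref{eq:Full-def-redef}. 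The first step is to show this pairing is non-degenerate pointwise, which is immediate since $\bar\Omega$ is nowhere vanishing and $\tr$ is non-degenerate on $\Lg$. The second step is to show $D_A$ is skew-adjoint for it up to sign, i.e.
\begin{equation*}
\langle D_A\alpha,\beta\rangle=(-1)^{p+1}\langle\alpha,D_A\beta\rangle\:.
\end{equation*}
For the $\del_A$ part this is integration by parts, using $\del\bar\Omega=0$ (since $\bar\Omega$ is antiholomorphic of type $(0,3)$ and $\dd\bar\Omega=0$) and ad-invariance of $\tr$. For the $h^{\bar a}\nabla_{\bar a}$ part I will write $h^{\bar a}\nabla_{\bar a}(\alpha\wedge\beta)$ as a Leibniz expansion and check that $\int_X h^{\bar a}\nabla_{\bar a}(\tr\,\alpha\beta)\wedge\bar\Omega$ vanishes. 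Since $\tr\,\alpha\beta$ is a $(3,0)$-form, this reduces to the divergence-type identity $\int_X \nabla_{\bar a}(h^{\bar a}\wedge\omega^{3,0})\wedge\bar\Omega=0$. I expect this to follow from $\del h^{\bar a}=0$ and harmonicity of the inverse (Proposition \ref{prop:inverse}), together with $\delb\bar\Omega$-compatibility, as in the proof of nilpotency of $D_A$.

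Given the skew-adjointness, $D_A$-cohomology in degree $p$ is dual to $D_A$-cohomology in degree $3-p$ under the pairing, exactly as in finite-dimensional linear algebra, using that $D_A^2=0$ (shown above) and ellipticity to get Hodge decompositions. This gives $h^{p,0}_{D_A}(\Lg)=h^{3-p,0}_{D_A}(\Lg)$.

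For the eigenspace statement, I will define the conjugate-linear map $\star_{\Omega}=\overline{(\cdot)}$ composed with the Hodge star, turned into an isometry $\Omega^{p,0}(\Lg)\to\Omega^{3-p,0}(\Lg)$ via the chosen bundle and Kähler metrics. The key claim is that it intertwines $D_A$ with $\pm D_A^\dagger$, since $D_A^\dagger=\pm\star_\Omega D_A\star_\Omega^{-1}$ is the usual formula for the adjoint once the pairing above is identified with the $L^2$ inner product through $\star_\Omega$. It then follows that $\star_\Omega\Delta^{p,0}_{D_A}=\Delta^{3-p,0}_{D_A}\star_\Omega$, which yields isomorphic eigenspaces with the same (real) eigenvalues.

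The main obstacle is this compatibility between the bundle metric used for adjoints and the trace pairing. For a pseudo-hermitian connection with respect to \eqref{eq:HermMetr}, $\star_\Omega$ intertwines $D_A$ with the adjoint only if the metric is the one for which the connection is metric. I would handle this by defining adjoints with respect to that (possibly indefinite) ${\cal H}$ and, where positivity is needed, choosing a compatible positive metric commuting with the $\End_0$ splitting. This is the step I would check most carefully.
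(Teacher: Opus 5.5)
Your eigenspace argument is the paper's. The paper writes $D_A^\dagger=-{\cal H}*\bar D_A*{\cal H}$ and uses that ${\cal H}$ is its own inverse on $\Lg$. It then checks that $\alpha\mapsto\Omega\lrcorner(*{\cal H}\bar\alpha)$ sends $\lambda$-eigenforms of $\Delta^{p,0}_{D_A}$ to $\lambda$-eigenforms of $\Delta^{3-p,0}_{D_A}$, and the Hodge numbers are the case $\lambda=0$.

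What you do differently is to first show that $D_A$ is graded skew for the holomorphic pairing $\int_X\tr(\alpha\wedge\beta)\wedge\bar\Omega$, and then get $h^{p,0}_{D_A}(\Lg)=h^{3-p,0}_{D_A}(\Lg)$ from the induced perfect pairing on cohomology. Given the eigenspace statement this step is redundant, but it buys two things:
\begin{itemize}
\item the cohomological duality is visibly independent of the bundle metric;
\item the skewness justifies the adjoint formula the paper treats as a straightforward manipulation. If you define $\star_\Omega$ by $\langle\beta,\star_\Omega\alpha\rangle=(\beta,\alpha)_{L^2}$, skewness gives $\star_\Omega D_A^\dagger=\pm D_A\star_\Omega$ for any positive metric.
\end{itemize}
Two slips in the skewness step. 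First, $\tr(\alpha\beta)$ there is a $(2,0)$-form $\phi$, not a $(3,0)$-form. Second, the total divergence $\nabla_{\bar a}(h^{\bar a}\wedge\phi)$ integrates to zero against $\bar\Omega$ automatically. The term that actually has to vanish is $(\nabla_{\bar a}h^{\bar a})\wedge\phi$. That needs $\nabla_{\bar a}h_b{}^{\bar a}=0$, which follows from Proposition~\ref{prop:inverse}: the inverse is a constant combination of the divergence-free $\bar h_{\bar B}$. It does not follow from $\del h^{\bar a}=0$.

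You have misidentified the obstacle. To pass from $\star_\Omega D_A^\dagger=\pm D_A\star_\Omega$ to $\star_\Omega\Delta^{p,0}_{D_A}=\Delta^{3-p,0}_{D_A}\star_\Omega$, you also need $\star_\Omega D_A=\pm D_A^\dagger\star_\Omega$. This holds once $\star_\Omega^2$ is a nonzero constant. That is a compatibility condition between the bundle metric and the trace form, which is the paper's assumption that ${\cal H}$ is self-inverse. It holds, for example, for the metric on ${\rm End}_0(E)$ induced by any hermitian metric on $E$.

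Whether the connection is metric for ${\cal H}$ plays no role. Your remedy therefore points the wrong way. Taking adjoints with the indefinite metric \eqref{eq:HermMetr} destroys the positivity the spectral statements need. A positive metric for which $\tilde\nabla$ is metric need not exist, since $\tilde\nabla$ is only pseudo-hermitian. Neither is required: the positive Kähler metric on ${\rm End}_0(T^{1,0}X)$ used in the twisted gauge of Section~\ref{sec:Convgauge}, for which $\tilde\nabla$ is not metric, works directly in your argument.
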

\begin{proof}
    Let $\alpha\in\Omega^{p,0}(\Lg)$ be an eigenvector of $\Delta^{(p,0)}_{ D_A}$. That is
\begin{equation}
    (\Delta_{D_A}\alpha)^\mu=(D_AD_A^\dagger\alpha+D_A^\dagger D_A\alpha)^\mu=-{D_A}{\cal H}^{\mu\bar\nu}*{\bar D_A}*{\cal H}_{\bar\nu\rho}\alpha^\rho-{\cal H}^{\mu\bar\nu}*{\bar D_A}*{\cal H}_{\bar\nu\rho}D_A\alpha^\rho=\lambda\alpha^\mu\:,
\end{equation}
where $\{\mu,\nu,..\}$ denote Lie-algebra indices, and ${\cal H}$ is the chosen metric on $\Lg$. From now on, we supress Lie-algebra indecies and note that $\cal H$ is its own inverse on $\Lg$. We then have
\begin{equation}
    \Delta_{D_A}\alpha=-{D_A}{\cal H}*{\bar D_A}*{\cal H}\alpha-{\cal H}*{\bar D_A}*{\cal H}D_A\alpha=\lambda\alpha\:.
\end{equation}
Via some straight-forward manipulations, this can be rewritten as
\begin{equation}
    \Delta_{\bar D_A}(*{\cal H}\alpha)=-{\cal H}*{ D_A}*{\cal H}{\bar D_A}(*{\cal H}\alpha)-{\bar D_A}{\cal H}*D_A*{\cal H}({*\cal H}\alpha)=\lambda*{\cal H}\alpha\:.
\end{equation}
A further complex conjugation gives
\begin{equation}
    \Delta_{D_A}(*{\cal H}\bar\alpha)=\lambda*{\cal H}\bar\alpha\:.
\end{equation}
Contracting with $\Omega$, which commutes with derivatives as indeed $\Omega^{\bar a\bar b\bar c}$ is constant, we get
\begin{equation}
    \Delta_{D_A}\Omega\lrcorner(*{\cal H}\bar\alpha)=\lambda\Omega\lrcorner*{\cal H}\bar\alpha\:.\:.
\end{equation}
As the Lie-algebra is a self-dual space, and the operations we have done are invertible, it follows that $\Delta^{p,0}_{D_A}$ and $\Delta^{3-p,0}_{D_A}$ have isomorphic eigenspaces. The result follows.
\end{proof}

Using this lemma, we thus get
\begin{equation}
  {\cal D}\gamma{\cal D}b{\cal D}c{\cal D}\bar c\rightarrow\vvh^{-\tfrac12\left(h_{ D_A}^{1,0}(\Lg)-3h^{0,0}_{ D_A}(\Lg)+\zeta_{\Delta^{1,0}_{ D_A}}(0)-3\zeta_{\Delta^{0,0}_{ D_A}}(0)\right)}{\cal D}\gamma{\cal D}b{\cal D}c{\cal D}\bar c\:.  
\end{equation}
The partition function therefore becomes
\begin{equation}
    Z^{\rm{\small BRST}}_2=\vvh^{-\tfrac12\left(h_{ D_A}^{1,0}(\Lg)-3h^{0,0}_{ D_A}(\Lg)+\zeta_{\Delta^{1,0}_{ D_A}}(0)-3\zeta_{\Delta^{0,0}_{ D_A}}(0)\right)}\times Z^{\rm{\small BRST}}_{\rm hol\:CS}(D_A,\vvh,\theta)\:,
\end{equation}
where $Z^{\rm{\small BRST}}_{\rm hol\:CS}(D_A,\vvh,\theta)$ is the BRST partition function of the holomorphic Chern-Simons type theory \eqref{eq:S2b-redef}.

It is also convenient to remove the dependence of the phase $e^{i\theta}$ from the action \eqref{eq:S2b-redef}. By an additional rescaling
\begin{equation}
    \gamma\rightarrow e^{-i\tfrac12\theta}\gamma\:,\;\;\; b\rightarrow e^{i\tfrac32\theta}b\:,
\end{equation}
the action becomes
\begin{equation}
\label{eq:S2b-redef2}
    S_{2,\,{\rm gf}}=\int_X\tr\left(\gamma D_A\gamma+b D_A^\dagger\gamma+\bar c\, D_A^\dagger D_A c+\tfrac{1}{3\sqrt{\vh}}\gamma\,[\gamma,\gamma]+\tfrac{1}{\sqrt{\vh}}\bar c\, D_A^\dagger[\gamma,c]\right)\wedge\bar\Omega\:.
\end{equation}
Under this rescaling, the measure also rescales as
\begin{equation}
    {\cal D}\gamma{\cal D}b\rightarrow e^{-\tfrac{i}{2}\theta\left(h_{ D_A}^{1,0}(\Lg)-3h^{0,0}_{ D_A}(\Lg)+\zeta_{\Delta^{1,0}_{ D_A}}(0)-3\zeta_{\Delta^{0,0}_{ D_A}}(0)\right)}{\cal D}\gamma{\cal D}b\:.
\end{equation}
The partition function therefore becomes
\begin{equation}
    Z^{\rm{\small BRST}}_2=\vh^{-\tfrac12\left(h_{ D_A}^{1,0}(\Lg)-3h^{0,0}_{ D_A}(\Lg)+\zeta_{\Delta^{1,0}_{ D_A}}(0)-3\zeta_{\Delta^{0,0}_{ D_A}}(0)\right)}\times Z^{\rm{\small BRST}}_{\rm hol\:CS}(D_A,h)\:,
\end{equation}
where $Z^{\rm{\small BRST}}_{\rm hol\:CS}(D_A,h)$ is the BRST partition function of the anti-holomorphic Chern-Simons type theory \eqref{eq:S2b-redef2}. Note in particular that the explicit dependence of the partition function on $h$ is holomorphic. 

Being more explicit about the partition function for general $h$ is beyond the scope of this paper. If we however take the large complex structure limit, where $\vh\rightarrow\infty$, we get the much simpler expression
\begin{equation}
\label{eq:PartFuncBRSTlargeh0}
    \lim_{\vh\rightarrow\infty}Z^{\rm{\small BRST}}_2=\vh^{-\tfrac12\left(h_{ \del_A}^{1,0}(\Lg)-3h^{0,0}_{ \del_A}(\Lg)+\zeta_{\Delta^{1,0}_{ \del_A}}(0)-3\zeta_{\Delta^{0,0}_{ \del_A}}(0)\right)}\times Z^{\rm BRST}_{\rm 1-loop}(\del_A)\:, 
\end{equation}
where $Z^{\rm BRST}_{\rm 1-loop}(\del_A)$ is the one-loop BRST partition function of ordinary anti-holomorphic Chern--Simons theory with connection $\del_A$. This expression should be compared with \eqref{eq:FormalPartLargeh} of the formal quantisation in section \ref{sec:QuantFormal}.

The full $h$-dependence of the partition function is however still a bit deceptive. Though the dimensions of the cohomology groups $h_{\del_A}^{p,0}(\Lg)$ are independent of complex gauge, and thus a change of the holomorphic connection $\del_A$, this is not true of the zeta invariants $\zeta_{\Delta^{p,0}_{\del_A}}(0)$. However, in the case when $\del_A$ is hermitian with respect to a positive definite hermitian metric, with the corresponding metric used to define the Laplacians, we can use Lemma \ref{lem:Zeta} below for the linear combination of Hodge numbers and Zeta invariants to get
\begin{equation}
\label{eq:PartFuncBRSTLargeh}
    \lim_{\vh\rightarrow\infty}Z^{\rm{\small BRST}}_2=\vh^{-\frac{{\rm dim}(\Lg)}{480}\chi(X)}\times Z^{\rm BRST}_{\rm 1-loop}(\del_A)\:, 
\end{equation}
removing the implicit $h$-dependece of the exponential. However, we note further that the connection $\del_A$ depends on $h$ through the background requirement \eqref{eq:Instanton}. The partition function $Z^{\rm BRST}_{\rm 1-loop}(\del_A)$ therefore also depends on $h$ implicitly through $\del_A$. It is then desirable to find a scheme where this implicit $h$-dependence of the partition function can be removed. We will see that this can be done for the gauge theory on $\End_0(T^{1,0}X)$ described in section \ref{sec:EndTtheory}, by picking a convenient gauge. 

\subsection{A convenient gauge on ${\rm End}_0(T^{1,0}X)$, and a twisted theory}
\label{sec:Convgauge}
As mentioned, for the gauge theory on $\End_0(T^{1,0}X)$ described in section \ref{sec:EndTtheory}, we are restricting to the special directions where $h$ satisfies equation \eqref{eq:Inv=cc}, and the corresponding connection $\dd_{\tilde\nabla}$ on ${\rm End}_0(TX)$ is self-adjoint. This leads to a real gauge group $SU(3)$, with real generators. 

We will investigate what happens when we do a complex gauge transformation, switching from the background connection $\tilde\nabla$ to a different connection given by some metric $g$ on the tangent bundle. Let $V\in\Gamma(T^{1,0}X)$. Then
\begin{equation}
    \tilde\nabla_aV^b=h_{\bar c}^b\del_a\left(h_c^{\bar c}V^c\right)=h^{be}g_{\bar c e}\del_a\left(g^{d\bar c}h_{d c}V^c\right)=h^{be}\nabla_a\left(h_{e c}V^c\right)\:,
\end{equation}
where $\nabla$ is the Chern connection of $g$. Similarly, for $W\in\Gamma(\End(T^{1,0}X))$ we get
\begin{equation}
    \tilde\nabla_a{W^b}_c=h^{bd}h_{ce}\nabla_a\left(h^{eg}h_{df}{W^f}_g\right)\:.
\end{equation}
Note then that the operation
\begin{equation}
    {W^a}_b\rightarrow h^{ac}h_{bd}{W^d}_c
\end{equation}
squares to the identity, which will be important later. We will use short-hand notation 
\begin{equation}
    W\rightarrow HW
\end{equation}
to denote this transformation, which can also be thought of as a complex gauge transformation. 

If we then pick the gauge where we use the Kähler metric $g$ also on $\End(T^{1,0}X)$ in the definition of the inner product and adjoint, we find the following
\begin{proposition}
\label{prop:twist}
   For the specific directions in the deformation parameter space $Z^A$ where $\bar h_{a}^{\bar b}\propto h_a^{\bar b}$, that is deformations satisfying \eqref{eq:original}, the eigenvalue equation 
   \begin{equation}
    \Delta_{\del_{\tilde\nabla}}^{p,0}\alpha=\lambda\alpha
\end{equation}
becomes
\begin{equation}
\label{eq:RefinedEigen}
    H\circ\Delta_{\del_{\nabla}}^{p,0}(H\alpha)=\lambda\alpha\:.
\end{equation}
\end{proposition}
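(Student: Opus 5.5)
The plan is to write every operator in the Laplacian $\Delta^{p,0}_{\del_{\tilde\nabla}}=\del_{\tilde\nabla}\del_{\tilde\nabla}^\dagger+\del_{\tilde\nabla}^\dagger\del_{\tilde\nabla}$ as a conjugate of the corresponding operator built from the Chern connection $\nabla$ of the Kähler metric $g$. Here the adjoint is taken with respect to the Kähler metric $g$, both on $X$ and on $\End_0(T^{1,0}X)$.

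The first step is the conjugation formula just derived,
\begin{equation*}
\tilde\nabla_a W=H\,\nabla_a(HW),
\end{equation*}
valid on $\End(T^{1,0}X)$-valued forms. Since $H$ is pointwise algebraic and acts only on bundle indices, it commutes with wedging by $\dd z^a$ and with form degree. This gives
\begin{equation*}
\del_{\tilde\nabla}=H\circ\del_\nabla\circ H
\end{equation*}
on $\Omega^{p,0}(\End_0(T^{1,0}X))$. Here we use $H^2=1$, and also that $H$ preserves the trace-free part, since $\tr(HW)=h^{ac}h_{ad}W^d{}_c$, which is proportional to $\tr W$ up to the inverse relation, using $h_{ab}$ symmetric.

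The second step, and the main obstacle, is to show that the adjoint of $\del_{\tilde\nabla}$ with respect to the $g$-inner product is $H\circ\del_\nabla^\dagger\circ H$. This requires $H$ to be self-adjoint with respect to $g$ on $\End(T^{1,0}X)$. Writing out $\langle HW,V\rangle_g=\int \overline{V}\,g\,g^{-1}\,h^{-1}h\,W$, self-adjointness becomes a condition of the form $\overline{h^{ac}h_{bd}}=h^{ca}h_{db}$ after raising and lowering with $g$. This is exactly where the special-direction condition \eqref{eq:Inv=cc}, $h_a^{\bar b}=c\,\bar h_a^{\bar b}$ with $c$ real, enters. It makes $h_{ab}$ (and $h^{ab}$) real up to the real constant $c$, and symmetric. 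For $h_{ab}$ this follows from the symmetry of $\Omega(h)$-type contractions, $h_{\bar a b}$ with $g$ lowering. In the product $h^{ac}h_{bd}$ the constants cancel ($c\cdot c^{-1}$), so $H$ is real and symmetric, hence self-adjoint. I would first check the pointwise identity
\begin{equation*}
\bar h^{\bar a\bar c}\bar h_{\bar b\bar d}=h^{ac}h_{bd}
\end{equation*}
under \eqref{eq:Inv=cc}, and then note that $H$ is a pointwise isometry: $H^2=1$ together with self-adjointness makes $H$ unitary.

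Combining the two steps gives
\begin{equation*}
\Delta_{\del_{\tilde\nabla}}=H\del_\nabla HH\del_\nabla^\dagger H+H\del_\nabla^\dagger HH\del_\nabla H=H\circ\Delta_{\del_\nabla}\circ H,
\end{equation*}
again using $H^2=1$. Substituting this into the eigenvalue equation yields \eqref{eq:RefinedEigen}. As a corollary, $H\alpha$ is an eigenvector of $\Delta_{\del_\nabla}$ with the same eigenvalue, so the spectrum is independent of $h$ along the special directions.
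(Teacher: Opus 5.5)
Your proposal is correct and follows the paper's own argument. You write $\del_{\tilde\nabla}=H\circ\del_\nabla\circ H$, use the symmetry of $h_{ab}$ together with \eqref{eq:Inv=cc} (with $c$ real, so the factors $c$ and $c^{-1}$ cancel) to obtain $\del_{\tilde\nabla}^\dagger=H\circ\del_\nabla^\dagger\circ H$, and finish with $H^2=1$; the paper does the middle step by writing the $g$-adjoint explicitly in terms of $\bar h$ and then replacing $\bar h$ by $h$.

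One small caution on wording: the condition does not make $h$ itself real, and $H$ is not literally real. Rather, complex conjugation followed by $g$-index conversion sends $h^{ab}\mapsto c^{-1}h_{ab}$ and $h_{ab}\mapsto c\,h^{ab}$, and the pointwise adjoint of $H$ has the roles of $h^{ab}$ and $h_{ab}$ swapped; this exchange is what gives $H^\dagger=H$.
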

\begin{proof}
Given the inner product for $\alpha,\beta\in\Omega^{p,0}(\End(T^{1,0}X))$,
\begin{equation}
    (\alpha,\beta)=\int_X\alpha^a{}_b\wedge*\bar\beta^{\bar a}{}_{\bar b}\,g_{a\bar a}g^{b\bar b}\:,
\end{equation}
we can compute the adjoint of $\del_{\tilde\nabla}$ as
\begin{equation}
    (\del_{\tilde\nabla}^\dagger\alpha)^a{}_b=-{\bar h}_{bd}{\bar h}^{ac}*\delb*\left({\bar h}^{de}{\bar h}_{cf}\beta^f{}_e\right)\:.
\end{equation}
To get this expression, we have used the fact that $h_{ab}$ and similarly $h^{ab}$ and their complex conjugates are {\it symmetric}, i.e. they take values in the $\bar{\bf6}$ and $\bf 6$ irreducible representations of $SU(3)$ respectively. This follows as $h$ is chosen to be harmonic, which implies that $\Omega(h)$ is a primitive $(2,1)$-form (see Appendix \ref{app:spacial}), which is equivalent to $h_{ab}$ being symmetric.

We now use that we have chosen $h$ to be in a direction where $h_a{}^{\bar b}=c\,\bar h_a{}^{\bar b}$, where $c$ is real as noted above. Using this, we find that
\begin{equation}
    (\del_{\tilde\nabla}^\dagger\alpha)^a{}_b=-{h}_{bd}{h}^{ac}*\delb*\left({h}^{de}{h}_{cf}\beta^f{}_e\right)\:.
\end{equation}
That is, for these special directions in deformation space, we get
\begin{equation}
    \del_{\tilde\nabla}^\dagger\alpha=-H*\delb*(H\alpha)\:,
\end{equation}
where we also recognize $-*\delb*$ as the adjoint of the Chern connection, that is $\del_\nabla^\dagger=-*\delb*$. It follows that
\begin{align}
\Delta_{\del_{\tilde\nabla}}&=\del_{\tilde\nabla}^\dagger\del_{\tilde\nabla}+\del_{\tilde\nabla}\del_{\tilde\nabla}^\dagger\notag\\
&=H\circ\del^\dagger_\nabla\circ H\circ H\circ\del_\nabla\circ H+H\circ\del_\nabla\circ H\circ H\circ\del^\dagger_\nabla\circ H\notag\\
&=H\circ\Delta_{\del_{\nabla}}\circ H\:,
\end{align}
where the last equality follows as $H$ squares to the identity.
\end{proof}

Again, as $H$ squares to the identity, equation \eqref{eq:RefinedEigen} is then equivalent to
\begin{equation}
    \Delta_{\del_{\nabla}}^{p,q}(H\alpha)=\lambda(H\alpha)\:.
\end{equation}
It follow that in this gauge, for the specific directions in deformation space where \eqref{eq:original} is satisfied, $\Delta_{\del_{\tilde\nabla}}^{p,0}$ and $\Delta_{\del_{\nabla}}^{p,0}$ have the same eigenvalues, and hence the same Zeta invariants. Furthermore, with $\del_A = \del_{\tilde\nabla}$, in this gauge the quadratic gauged-fixed action of $Z^{\rm 1-loop}_{\rm hol\:CS}(\del_A)$ becomes
\begin{equation}
\label{eq:Hol-CS-Chern}
    S^{\rm 1-loop}_{2,\,{\rm gf}}=\int_X\tr\left(\gamma\del_\nabla\gamma+b\del_\nabla^\dagger\gamma+\bar c\,\del_\nabla^\dagger\del_\nabla c\right)\wedge\bar\Omega\:,
\end{equation}
where we have also performed the field redefinition $H\Phi\rightarrow\Phi$ on all the fields $\Phi=(\gamma,c,\bar c, b)$. As this transformation squares to the identity, its Jacobian can at most introduce an irrelevant sign to the partition function. 

In this gauge, the partition function then becomes (for large $\vh$)
\begin{equation}
\label{eq:Z-BRST-twist}
    Z^{\rm BRST}_2=\vh^{-\tfrac12\left(h_{\del_\nabla}^{1,0}(\Lg)-3h^{0,0}_{\del_\nabla}(\Lg)+\zeta_{\Delta^{1,0}_{\del_\nabla}}(0)-3\zeta_{\Delta^{0,0}_{\del_\nabla}}(0)\right)}\times Z^{\rm BRST}_{\rm 1-loop}(\nabla)\:,
\end{equation}
where $Z^{\rm BRST}_{\rm 1-loop}(\nabla)$ is the partition function of the anti-holomorphic Chern-Simons theory \eqref{eq:Hol-CS-Chern}, using the ordinary Chern connection. Using Lemma \ref{lem:Zeta} below, we see that the partition function becomes
\begin{equation}
    Z^{\rm BRST}_2=\vh^{-\tfrac{1}{60}\chi(X)}\times Z^{\rm BRST}_{\rm 1-loop}(\nabla)\:,
\end{equation}
where we have set ${\rm dim}(\Lg)=8$, the adjoint of $SU(3)$. Note in particular that the $h$-dependence is explicit, only through the determinant $\vh$. Though we stress again that we are restricted to choosing the special directions $h$ in deformation space. We shall refer to this gauge choice and the theory \eqref{eq:Z-BRST-twist} as the "twisted theory", due to the more gravitational nature of the theory. 

\subsection{The quantum theory: BV quantization}
Perhaps the most mathematically rigorous approach to quantizing the theory is using the BV quantization formalism. Ignoring zero modes, this formalism is equivalent to the BRST approach, as shown in the case of ordinary Chern--Simons theory in for example \cite{Axelrod:1991vq}. The BV-approach is however arguably the most rigorous approach to use when zero modes are turned on, and to deal with the corresponding volume factors of harmonic forms as we shall see below. For now, we turn off the zero modes, and consider only off-shell fluctuations.

The BV-approach is standard, and we will not be as detailed as in the BRST approach. The BV action results from promoting the field to a poly-form $\gamma\in\Omega^{\bullet,0}(\Lg)$, where odd degrees are bosonic and even degrees are fermionic. The resulting BV action can then be written neatly as
\begin{equation}
\label{eq:DefBVaction}
    S_{2,{\rm BV}}=\int_X\tr\left(\gamma{\cal D}^{\rm BV}_A\gamma+\tfrac13\gamma\{\gamma,\gamma\}\right)\wedge\bar\Omega=\vh\int_X\tr\left(\gamma D^{\rm BV}_A\gamma+\tfrac13\gamma[\gamma,\gamma]\right)\wedge\bar\Omega\:,
\end{equation}
with an implicit BV wedge-product between fields. That is, for $\alpha\in\Omega^{p,0}(\Lg)$ of fermionic degree $f_\alpha\in\{0,1\}$, where $\alpha$ has total degree $\vert\alpha\vert=p+f_\alpha$, and $\beta\in\Omega^{q,0}(\Lg)$ of fermionic degree $f_\beta$, we have
\begin{equation}
    \alpha\wedge^{\rm BV}\beta=(-1)^{pf_\beta}\alpha\wedge\beta\:.
\end{equation}
The kinetic operator also acts as
\begin{equation}
    D^{\rm BV}_A\alpha=(-1)^{f_\alpha}D_A\alpha\:.
\end{equation}
The bracket is also promoted to the BV bracket,
\begin{equation}
    [\alpha,\beta]=\alpha\wedge^{\rm BV}\beta-(-1)^{\vert\alpha\vert\vert\beta\vert}\beta\wedge^{\rm BV}\alpha\:.
\end{equation}
The promotion of the kinetic operator to the BV operator, and the wedge product to the BV wedge-product, ensures that ${\cal D}^{\rm BV}_A$ and $\{\,,\,\}$ (and also $D^{\rm BV}_A$ and $[\,,\,]$) together form a DGLA. This can also be elegantly expressed using the superspace formalism, see e.g. \cite{Axelrod:1991vq} for ordinary Chern--Simons, though we will not do so here.

One then needs to gauge-fix, or pick a Lagrangian submanifold in the space of fields. The common choice, as above, is to pick metrics on $X$ and $\Lg$, used to define adjoints, where in this gauge the partition function is simply
\begin{equation}
    Z^{\rm BV}_2=\int{\cal D}\gamma\big\vert_{\gamma\in{\rm ker}\left({\cal D}^\dagger_A\right)}e^{iS_{2,{\rm BV}}}\:.
\end{equation}
As above, we can re-scale our fields, though the re-slacing differs from the BRST approach. We now let
\begin{equation}
\label{eq:BVrescale}
    \gamma\rightarrow\tfrac{1}{\sqrt{\vh}}\gamma\:.
\end{equation}
This produces a Jacobian factor from the measure\footnote{Note that ${\rm ker}\left({\cal D}^\dagger_A\right)={\rm ker}\left(D^\dagger_A\right)$, as ${\cal D}^\dagger_A=\bar\vh D^\dagger_A$.}
\begin{equation}
    Z^{\rm BV}_2=\vh^{-\tfrac12\left(-3\zeta_{\Delta^{0,0}_{ D_A}}(0)+2\zeta_{\Delta^{1,0}_{ D_A}}(0)-\zeta_{\Delta^{2,0}_{ D_A}}(0)\right)}\int{\cal D}\gamma\big\vert_{\gamma\in{\rm ker}\left(D^\dagger_A\right)}e^{iS_{2,{\rm BV}}}\:,
\end{equation}
where now 
\begin{equation}
\label{eq:BVaction-rescaled}
    S_{2,{\rm BV}}=\int_X\tr\left(\gamma D^{\rm BV}_A\gamma+\tfrac{1}{3\sqrt{\vh}}\gamma[\gamma,\gamma]\right)\wedge\bar\Omega\:.
\end{equation}
Using Lemma \ref{lem:Serre}, the partition function then simplifies to
\begin{equation}
\label{eq:PartFuncBV}
    Z^{\rm BV}_2=\vh^{-\tfrac12\left(\zeta_{\Delta^{1,0}_{ D_A}}(0)-3\zeta_{\Delta^{0,0}_{ D_A}}(0)\right)}\int{\cal D}\gamma\big\vert_{\gamma\in{\rm ker}\left(D^\dagger_A\right)}e^{iS_{2,{\rm BV}}}\:,
\end{equation}
where the last factor is then the BV partition function of the re-scaled action \eqref{eq:BVaction-rescaled}. 

\subsubsection*{Zero modes and rigorous treatment of harmonic volumes}
We now come to include the volume factors of harmonic zero-modes appropriate for the BV formalism. One can "ignore" such zero-mode contributions, see for example \cite{Pestun:2005rp}, by not integrating over them in the path integral, or even pick and choose which zero-modes to integrate over in the definition of ones partition function. However, the zero-modes will play a role when we come to consider geometric anomalies in section \ref{sec:GeomAnomalies}, and they can be used to relate the zeta-invariants to more explicit geometric invariants, see Lemma \ref{lem:Zeta} below. We will therefore now spend some paragraphs to formalize the treatment of zero-modes in the BV approach. As it turns out, this will also serve to cure the discrepancy between the Hodge numbers in the exponential when comparing the BRST and BV approaches, see section \ref{sec:BV-Metric}.

At large values of the deformation parameter $h$, ignoring harmonic zero-modes, the partition function \eqref{eq:PartFuncBV} becomes
\begin{equation}
\label{eq:BVPartFuncLargeh}
    Z^{\rm BV}_2=\vh^{-\tfrac12\left(\zeta_{\Delta^{1,0}_{\del_A}}(0)-3\zeta_{\Delta^{0,0}_{\del_A}}(0)\right)}\int{\cal D}\gamma\big\vert_{\gamma\in{\rm Im}\left(\del^\dagger_A\right)}{\rm exp}\left(i\int_X\tr(\gamma\del_A^{\rm BV}\gamma)\wedge\bar\Omega\right)\:,
\end{equation}
as ${\rm ker}\left(\del^\dagger_A\right)={\rm Im}\left(\del^\dagger_A\right)$ when zero-modes are turned off. This should be multiplied by appropriate volume factors of harmonic forms when zero-modes are turned on. For-example, the volume of harmonic $(1,0)$-forms is formally
\begin{equation}
    \int\prod_{\rm harm.}\dd\gamma_h^{1,0}={\rm Vol}({\cal H}^{1,0}(\Lg))\:.
\end{equation}
The full volume factor of all the fields, ghosts, and anti-fields is then
\begin{equation}
    {\rm Vol}({\cal H})=\frac{{\rm Vol}({\cal H}^{1,0}(\Lg)){\rm Vol}({\cal H}^{3,0}(\Lg))}{{\rm Vol}({\cal H}^{0,0}(\Lg)){\rm Vol}({\cal H}^{2,0}(\Lg))}\:,
\end{equation}
where the volumes in the denominator come from Grassmann-valued fields.

To deal appropriately with these complex volume factors, let us first consider the absolute value of the partition function instead, which becomes
\begin{equation}
    \vert Z^{\rm BV}_2\vert^2=\vvh^{-\zeta_{\Delta^{1,0}_{\del_A}}(0)+3\zeta_{\Delta^{0,0}_{\del_A}}(0)}\left({\rm Vol}({\cal H}){\rm Vol}(\bar{\cal H})\right)^{\tfrac12}\,{\cal I}_{\rm RS}(\del_A)\:,
\end{equation}
where the square root is due to the fact that we are also integrating over a Lagrangian sub-manifold in the space of harmonic forms. The volume factors may then be combined into real volume factors
\begin{equation}
    {\rm Vol}(Z^p)={\rm Vol}({\cal H}^{p,0}(\Lg)){\rm Vol}({\cal H}^{0,p}(\bar\Lg))\:,
\end{equation}
where $Z^p$ is the real space
\begin{equation}
    Z^p={\cal H}^{p,0}(\Lg)\times{\cal H}^{0,p}(\bar\Lg)\:,
\end{equation}
where we note at this point we are not assuming that the Lie-algebra $\Lg$ is necessarily real.

At this point, the volume simply refers to integrating along harmonic directions. However, to properly perform this intgegral we need to equip $Z^p$ with a (hermitian) metric $g^p$, with corresponding hermitian two-form $\omega_{g^p}$, and use the corresponding covariant volume given by
\begin{equation}
    {\rm Vol}_{g^p}(Z^p)=\int_{Z^p}(*1)_{g^p}=\int_{Z^p}\frac{1}{(h^{p,0}(\Lg))!}\omega_{g^p}^{h^{p,0}(\Lg)}
\end{equation}
in the definition of the (absolute value of the) partition function. The partition function is then promoted to
\begin{equation}
    \vert Z^{\rm BV}_2\vert^2=\vvh^{-\zeta_{\Delta^{1,0}_{\del_A}}(0)+3\zeta_{\Delta^{0,0}_{\del_A}}(0)}{\rm Vol}_{g^\bullet}(Z^\bullet)^{\tfrac12}\,{\cal I}_{\rm RS}(\del_A)\:,
\end{equation}
where we use the short-hand notation
\begin{equation}
    {\rm Vol}_{g^\bullet}(Z^\bullet)=\frac{{\rm Vol}_{g^1}(Z^1){\rm Vol}_{g^3}(Z^3)}{{\rm Vol}_{g^0}(Z^0){\rm Vol}_{g^2}(Z^2)}\:.
\end{equation}
Note that using the covariant measure when we integrate over the harmonic part,\footnote{Note however that this measure is real, and this replacement can only be done for the definition of the absolute value (squared) of the partition function.} the partition function becomes insensitive to field-redefinitions of harmonic forms. Such field-redefinitions will hence not produce any Jacobian factors. However, the volumes do depend on what metric we choose on the space of harmonic forms. 

Restricting to the case of hermitian connections $\del_A$, corresponding to a hermitian metric ${\cal H}_{\mu\bar\nu}$, there is then a natural inner product on harmonic forms, given by 
\begin{equation}
\label{eq:IP0}
\langle\alpha,\beta\rangle_{g^p_0}=\int_X\tfrac{1}{(3-p)!}\omega^{3-p}\wedge\alpha^\mu\bar\beta^{\bar\nu}{\cal H}_{\mu\bar\nu}\:,
\end{equation}
where $\alpha$ and $\beta$ are in ${\cal H}^{p,0}(\Lg)$, and where $\omega$ is the Kähler form of the background metric on $X$.\footnote{The reader might worry that for the explicit theory defined for the special directions on ${\rm End}_0(T^{1,0}X)$, as noted in section \eqref{sec:EndTtheory}, the corresponding metric ${\cal H}$ is not positive definite, and the corresponding connections are only pseudo-hermitian. From a physics perspective such concerns are often swept under the rug. However, for the case at hand, it is arguably more rigorous, or perhaps more natural, to first twist the fields as in section \ref{sec:Convgauge}, before including the volume factors now using the metric induced by the ordinary Kähler metric $g$ on ${\rm End}_0(T^{1,0}X)$. We take this approach when we study geometric anomalies of the "twisted theory" in section \ref{sec:GeomAnomalies}.} Note that this inner-product is given in the canonical fields, before we perform the field redefinition \eqref{eq:BVrescale}. Arguably then, after doing the rescaling field redefinition \eqref{eq:BVrescale}, the natural metric on $Z^p$ which we denote by $g^p_\vvh$, in the rescaled fields corresponds to the inner-product
\begin{equation}
\label{eq:IPh}
\langle\alpha,\beta\rangle_{g^p_\vvh}=\frac{1}{\vvh}\int_X\tfrac{1}{(3-p)!}\omega^{3-p}\wedge\alpha^\mu\bar\beta^{\bar\nu}{\cal H}_{\mu\bar\nu}\:.
\end{equation}
For the case of hermitian connections especially, it is then natural to promote the (large $h$) partition function to
\begin{equation}
    \vert Z^{\rm BV}_2\vert^2=\vvh^{-\zeta_{\Delta^{1,0}_{\del_A}}(0)+3\zeta_{\Delta^{0,0}_{\del_A}}(0)}{\rm Vol}_{g_\vvh^\bullet}(Z^\bullet)^{\tfrac12}\,{\cal I}_{\rm RS}(\del_A)\:,
\end{equation}
where again we use the hermitian metric on the bundle in the definition of the adjoint when defining the holomorphic Ray--Singer torsion. 

Note however that
\begin{equation}
    {\rm Vol}_{g_h^\bullet}(Z^\bullet)=\vvh^{h^{0,0}(\Lg)-\,h^{1,0}(\Lg)+h^{2,0}(\Lg)-h^{0,0}(\Lg)}{\rm Vol}_{g_0^\bullet}(Z^\bullet)\:,
\end{equation}
where $g_0^p$ is the metric using canonical fields. After using Serre-duality, $h^{p,0}(\Lg)=h^{3-p,0}(\Lg)$, the partition function therefore reads
\begin{equation}
    \vert Z^{\rm BV}_2\vert^2=\vvh^{-\zeta_{\Delta^{1,0}_{\del_A}}(0)+3\zeta_{\Delta^{0,0}_{\del_A}}(0)}{\rm Vol}_{g_0^\bullet}(Z^\bullet)^{\tfrac12}\,{\cal I}_{\rm RS}(\del_A)\:,
\end{equation}
where the last two factors is simply the absolute value square of the BV partition function of ordinary anti-holomorphic Chern--Simons theory. 

We remark on the discrepency of Hodge numbers appearing in the exponential as compared with the BRST result \eqref{eq:PartFuncBRSTlargeh0}. The difference between the Hodge numbers appearing in the exponential as compared to the BRST approach, see equation \eqref{eq:PartFuncBRSTlargeh0}, discerns from the type of zero-modes included in the path integral, and the slightly different re-scalings of fields we have done in the two different approaches. We can use a similar covariant treatment of volumes of zero-modes in the BRST formalim. After the various field redefinitions, the absolute value of the BRST partition function becomes
\begin{equation}
    \vert Z^{\rm BRST}_2\vert^2=\vvh^{-\zeta_{\Delta^{1,0}_{\del_A}}(0)+3\zeta_{\Delta^{0,0}_{\del_A}}(0)}\frac{{\rm Vol}_{g_\vvh^1}(Z^1){\rm Vol}_{g_\vvh^3}(Z^3)}{{\rm Vol}_{g_{\vvh^2}^0}(Z^0){\rm Vol}_{g_{\vvh^2}^3}(Z^3)}\,{\cal I}_{\rm RS}(\del_A)\:,
\end{equation}
where the different scalings of $\vvh$ of the metrics on the $Z^p$'s are due to the different field redefinitions of bosonic and fermionic fields \eqref{eq:fieldredef1}. It follows that
\begin{equation}
    \vert Z^{\rm BRST}_2\vert^2=\vvh^{-\zeta_{\Delta^{1,0}_{\del_A}}(0)+3\zeta_{\Delta^{0,0}_{\del_A}}(0)-h^{1,0}(\Lg)+3h^{0,0}(\Lg)}\frac{{\rm Vol}_{g_0^1}(Z^1)}{{\rm Vol}_{g_0^0}(Z^0)}\,{\cal I}_{\rm RS}(\del_A)\:.
\end{equation}
For hermitian connections (with hermitian metrics), this becomes using Lemma \ref{lem:Zeta} below
\begin{equation}
    \vert Z^{\rm BRST}_2\vert^2=\vvh^{-\tfrac{{\rm dim}(\Lg)}{240}\chi(X)}\frac{{\rm Vol}_{g_0^1}(Z^1)}{{\rm Vol}_{g_0^0}(Z^0)}\,{\cal I}_{\rm RS}(\del_A)\:.
\end{equation}
It is interesting to note that it is precisely the combination of terms which appear in the exponential in the BRST formalism which turns out to be proportional to the Euler number $\chi(X)$. 

However, it is arguably the BV formalism that gives the most rigorous treatment of the harmonic zero-modes. In particular, it is the combination 
\begin{equation}
    {\rm Vol}_{g_0^\bullet}(Z^\bullet)^{\tfrac12}\,{\cal I}_{\rm RS}(\del_A)=\left(\frac{{\rm Vol}_{g_0^1}(Z^1){\rm Vol}_{g_0^3}(Z^3)}{{\rm Vol}_{g_0^0}(Z^0){\rm Vol}_{g_0^2}(Z^2)}\right)^{\tfrac12}\,{\cal I}_{\rm RS}(\del_A)
\end{equation}
which enjoys particularly nice anomaly formulas when varying the background metrics and complex structure, which we will return to in section \ref{sec:GeomAnomalies} when we come to study geometric anomalies in more detail.

Let us now say more about the Zeta-invariants appearing in the exponential. Indeed, using the anomaly formulas derived by Bismut etal \cite{bismut1988analytic1, bismut1988analytic2, bismut1988analytic3}, derived for holomorphic vector bundles with hermitian connections, we can relate the combination of Zeta-invariants to topological invariants. We have the following lemma.
\begin{lemma}
\label{lem:Zeta}
    Let $X$ be a Calabi--Yau three-fold, with Kähler metric $g$, and let $V\rightarrow  X$ be a holomorphic vector bundle with a hermitian connection $\del_A$ corresponding to a hermitian metric on $V$. Let $\Delta_{\del_A}^{p,0}$ denote the corresponding Laplacian on $\Omega^{p,0}(\Lg)$, where $\Lg$ is the Lie-algebra of the corresponding gauge theory.  We assume that $\Lg$ is traceless. We then have
    \begin{equation}
        h^{1,0}(\Lg)-3h^{0,0}(\Lg)+\zeta_{\Delta^{1,0}_{\del_A}}(0)-3\zeta_{\Delta^{0,0}_{\del_A}}(0)=\frac{{\rm dim}(\Lg)}{120}\int_X{\rm ch}_3(X)=\frac{{\rm dim}(\Lg)}{240}\chi(X)\:,
    \end{equation}
    where $\chi(X)$ is the Euler number of $X$, and 
    \begin{equation}
        {\rm ch}_3(X)=\tfrac{1}{3!}\tr\left(\Big(\tfrac{i}{2\pi}R\Big)^3\right)=\tfrac12e(X)
    \end{equation}
    is the third Chern character, where $e(X)=c_3(X)$ is the Euler class of the Calabi--Yau $X$. 
\end{lemma}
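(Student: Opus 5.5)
The plan is to recognise the left-hand side as the constant term of a \emph{weighted} heat-kernel supertrace, and then to evaluate that term using the Bismut--Gillet--Soulé anomaly formula under a constant rescaling of the Kähler metric. I have not carried out the final characteristic-class computation, so the constant $1/120$ is taken from the statement rather than derived here.

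\textbf{Step 1: rewrite as a weighted sum.} By Lemma \ref{lem:Serre}, $\Delta^{p,0}_{\del_A}$ and $\Delta^{3-p,0}_{\del_A}$ are isospectral with equal kernels, so $h^{3,0}+\zeta_3(0)=h^{0,0}+\zeta_0(0)$ and $h^{2,0}+\zeta_2(0)=h^{1,0}+\zeta_1(0)$. Hence
\begin{equation*}
h^{1,0}-3h^{0,0}+\zeta_{1}(0)-3\zeta_{0}(0)=\sum_{p=0}^3(-1)^p\,p\,\big(h^{p,0}(\Lg)+\zeta_{\Delta^{p,0}_{\del_A}}(0)\big)\:.
\end{equation*}
Since the connection is hermitian, complex conjugation identifies $\Delta^{p,0}_{\del_A}$ on $\Lg$-valued forms with the Dolbeault Laplacian $\Delta^{0,p}_{\delb}$ on $\bar\Lg$-valued forms. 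The right-hand side is therefore the number-operator-weighted combination that appears in the holomorphic Ray--Singer torsion of the Dolbeault complex of $\bar\Lg$.

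\textbf{Step 2: interpret it as a scaling anomaly.} For an elliptic Laplacian on a compact six-manifold, $\zeta(0)+\dim\ker$ equals the $t^0$ coefficient $\int_X a_3(x)$ in the small-$t$ heat-kernel expansion. This coefficient is also the logarithmic derivative of $\log\det'$ under a constant rescaling. Concretely, under $g\to\lambda g$ the Laplacian on $(0,p)$-forms scales as $\Delta\to\lambda^{-1}\Delta$, and the $L^2$-metrics on the harmonic spaces scale as $\lambda^{3-p}$. I will compute $\tfrac{d}{d\log\lambda}$ of the Quillen-type metric $\log\big(\mathrm{Vol}^{1/2}\cdot{\cal I}_{\rm RS}\big)$ in two ways:
\begin{itemize}
\item directly, which produces exactly $\tfrac12\sum_p(-1)^p p\,(h^{0,p}+\zeta_p(0))$ plus a term $\sum_p(-1)^p(3-p)h^{0,p}$ coming from the volumes;
\item via the Bismut--Gillet--Soulé anomaly formula \cite{bismut1988analytic1, bismut1988analytic2, bismut1988analytic3}, which gives $\int_X \widetilde{\rm Td}(TX;g,\lambda g)\,{\rm ch}(\bar\Lg)$, where $\widetilde{\rm Td}$ is the Bott--Chern secondary class.
\end{itemize}
For a constant rescaling, this secondary class is $\log\lambda$ times an explicit characteristic polynomial, namely the derivative of the Todd class with respect to rescaling the Chern roots. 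The volume term is topological: by Serre duality and Riemann--Roch, $\chi(X,\bar\Lg)=0$ on a Calabi--Yau three-fold with $c_1=0$ and $\Lg$ real, so it reduces to a multiple of $h^{0,0}$ and $h^{1,0}$. These then rearrange into the Hodge-number terms on the left-hand side.

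\textbf{Step 3: evaluate the characteristic classes.} Since $\Lg$ is traceless and self-dual (the adjoint of a real group), ${\rm ch}(\bar\Lg)=\dim(\Lg)+{\rm ch}_2(\Lg)$, with ${\rm ch}_1={\rm ch}_3=0$. On a Calabi--Yau, $c_1(X)=0$. The degree-six part of the secondary Todd class times ${\rm ch}(\Lg)$ therefore has two possible contributions:
\begin{itemize}
\item a term ${\rm ch}_2(\Lg)$ times a degree-two class built from $c_1$, which vanishes;
\item $\dim(\Lg)$ times a degree-six class in $c_2,c_3$ of $X$.
\end{itemize}
Using the expansion of $\log$ of the Todd genus, whose degree-six part with $c_1=0$ is proportional to ${\rm ch}_3$ through the Bernoulli coefficient, I expect this to give $\tfrac{1}{120}\dim(\Lg)\int_X{\rm ch}_3(X)$. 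Finally, ${\rm ch}_3=\tfrac16(c_1^3-3c_1c_2+3c_3)=\tfrac12c_3$ when $c_1=0$, and $\int_X c_3=\chi(X)$, which yields the stated $\tfrac{\dim(\Lg)}{240}\chi(X)$.

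\textbf{Main obstacle.} The hard step is pinning down the precise characteristic polynomial in Step 2 and tracking normalisations. These include the factor $\tfrac12$ in ${\cal I}_{\rm RS}$, the sign conventions of the Bott--Chern class, and the interplay of the harmonic-volume scaling with the $h^{p,0}$ terms, all of which must combine to give exactly the constant $1/120$.

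My first approach would be to fix these by a Getzler-rescaling computation of $\partial_b\big|_{b=0}\,{\rm Str}\big(e^{bN}e^{-t\Delta}\big)$, with $N$ the form-degree operator; this is exactly the constant term needed. As a check, I would compare against the known case $\Lg$ trivial of rank one, where the combination is the familiar $\tfrac{1}{240}\chi$ scaling exponent of the BCOV torsion.
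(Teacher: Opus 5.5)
Your strategy is the same as the paper's. You compute the response of $\log\big({\rm Vol}^{1/2}{\cal I}_{\rm RS}\big)$ to a constant rescaling of $g$ twice, once directly and once from the Bismut--Gillet--Soulé anomaly, and compare. The gap is that the one non-formal ingredient, the characteristic-class side \emph{including its sign}, is copied from the statement, with the Bott--Chern sign left open. Carrying that step out contradicts the statement.

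Put ${\rm Td}'=\partial_b{\rm Td}(x_1+b,\dots,x_n+b)|_{b=0}={\rm Td}\sum_if'(x_i)$, where $f'(x)=\tfrac1x-\tfrac{1}{e^x-1}=\tfrac12-\tfrac{x}{12}+\tfrac{x^3}{720}-\dots$. Put $C_0(\xi)=\sum_q(-1)^qq\,\big(h^{0,q}(\xi)+\zeta_{\Delta_q}(0)\big)$.
\begin{itemize}
\item With your own exponents ($\Delta\to\lambda^{-1}\Delta$, harmonic metrics scaling as $\lambda^{3-p}$), the rescaling $g\to e^{\epsilon}g$ shifts $\log\big({\rm det}(g^\bullet)^{1/2}{\cal I}_{\rm RS}\big)$ by $\tfrac{\epsilon}{2}\big(C_0-n\chi(\xi)\big)$.
\item The anomaly formula says this shift is $\pm\tfrac{\epsilon}{2}\int_X{\rm Td}'\,{\rm ch}(\xi)$. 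The correct sign is minus, i.e. $C_0=\int_X(n\,{\rm Td}-{\rm Td}')\,{\rm ch}(\xi)$.
\item To fix the sign, take $\mathcal{O}_{\mathbb{P}^1}$ with the round metric. There $h^{0,1}=0$ and $\zeta_{\Delta_1}(0)=\zeta_{\Delta_0}(0)=-\tfrac23$, so $C_0=\tfrac23=\int(\tfrac{c_1}{2}-\tfrac{c_1}{6})$. The plus sign would give $\tfrac43$.
\item The same formula also reproduces the metric-independent value $C_0=2$ for $\mathcal{O}$ on a K3.
\end{itemize}
On a Calabi--Yau threefold with $\xi=\bar\Lg$, your Step 3 is right on two points: $n\,{\rm Td}\,{\rm ch}(\xi)$ has no six-form part, and $[{\rm Td}'\,{\rm ch}(\xi)]_6=\tfrac{\dim\Lg}{120}{\rm ch}_3(X)$, coming from the $x^3/720$ term. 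Your Step 1 identifies the left-hand side with $C_0(\bar\Lg)$, so it equals $-\tfrac{\dim\Lg}{120}\int_X{\rm ch}_3(X)=-\tfrac{\dim\Lg}{240}\chi(X)$. The magnitude you expect is right, but the sign is opposite to the statement.

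So no tracking of normalisations will make your argument prove the lemma as written. The paper's proof has the same shape and likewise only asserts the Bismut side. Applied to $\mathcal{O}_{\mathbb{P}^1}$ under $g\to e^{\epsilon}g$, the formula it quotes, \eqref{eq:VarTorsion}, predicts $+\epsilon/6$ for $\log{\cal Q}$. The direct scaling of ${\cal Q}={\rm det}(g^\bullet)^{1/2}{\cal I}_{\rm RS}$ gives $-\tfrac{\epsilon}{2}\big(1+\zeta_{\Delta_1}(0)\big)=-\epsilon/6$, and this mismatch is the source of the sign. Your proposed check against a ``familiar'' BCOV exponent would not catch it. Calibrate instead on a case with explicitly known $\zeta(0)$, as above.

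A smaller point: Step 2 double counts the Hodge numbers. The torsion contributes only $\tfrac{\epsilon}{2}\sum_p(-1)^pp\,\zeta_p(0)$. The harmonic volumes contribute $-\tfrac{\epsilon}{2}\sum_p(-1)^p(3-p)h^{p,0}$, which equals $\tfrac{\epsilon}{2}\sum_p(-1)^pp\,h^{p,0}$ by Serre duality. The volumes therefore supply exactly the $h^{p,0}$ terms, and there is no leftover term to remove with Riemann--Roch.
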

\begin{proof}
    The proof relies on the observation that under a re-scaling of the metric $g$ on $X$,
    \begin{equation}
        g\rightarrow\frac{1}{a}g\:,
    \end{equation}
    where $a$ is a constant on $X$, the Laplacians scale as
    \begin{equation}
        \Delta_{\del_A}^{p,0}\rightarrow a\Delta_{\del_A}^{p,0}\:.
    \end{equation}
    The dependence of the of the harmonic Volume factors and holomorphic Ray--Singer torsion on $a$ is therefore
    \begin{equation}
    \label{eq:Ia}
        \del_a\log\left({\rm Vol}_{g^\bullet}(Z^\bullet)^{\tfrac12}\,{\cal I}_{\rm RS}(\del_A)\right)=-\frac{1}{2a}\left(h^{1,0}(\Lg)-3h^{0,0}(\Lg)+\zeta_{\Delta^{1,0}_{\del_A}}(0)-3\zeta_{\Delta^{0,0}_{\del_A}}(0)\right)\:,
    \end{equation}
    where we have used the definition \eqref{eq:IPh} of the metric $g^p$, and the Serre-duality results $h^{p,0}(\Lg)=h^{3-p,0}(\Lg)$ and $\zeta_{\Delta^{p,0}_{\del_A}}(0)=\zeta_{\Delta^{3-p,0}_{\del_A}}(0)$.
       
    However, we can also work out the change of ${\cal I}_{\rm RS}(\del_A)$ using the metric anomlaly formulas of Bismut et al \cite{bismut1988analytic1, bismut1988analytic2, bismut1988analytic3}. The result is
    \begin{equation}
    \label{eq:IaBismut}
        \del_a\log\left({\rm Vol}_{g^\bullet}(Z^\bullet)^{\tfrac12}\,{\cal I}_{\rm RS}(\del_A)\right)=-\frac{1}{2a}\cdot\frac{{\rm dim}(\Lg)}{120}\int_X{\rm ch}_3(X)\:.
    \end{equation}
    Comparing equations \eqref{eq:Ia} and \eqref{eq:IaBismut} gives the result. 
\end{proof}
Using this lemma, the (absolute value of the) large $h$ partition function in the BV formalism \eqref{eq:BVPartFuncLargeh} then becomes (again assuming a hermitian connection)
\begin{equation}
\label{eq:AbsPartFuncBVLargeh}
    \vert Z^{\rm BV}_2\vert^2=\vvh^{-\frac{{\rm dim}(\Lg)}{240}\chi(X)+h^{1,0}(\Lg)-3h^{0,0}(\Lg)}{\rm Vol}_{g_0^\bullet}(Z^\bullet)^{\tfrac12}\,{\cal I}_{\rm RS}(\del_A)\:.
\end{equation}
Of course, this is the absolute value (square) of the partition function. Indeed, in taking the absolute value we where able to define a real measure on the space of zero-modes, which we could use to define a real volume. The full partition function also includes a phase. Wether or not such a phase can be defined in general comes down to anomalies, which will be given a broader treatment in section \ref{sec:anomalies}. However, as the last two factors in \eqref{eq:AbsPartFuncBVLargeh} simply give the absolute value square of the one-loop partition function of ordinary (anti-)holomorphic Chern--Simons theory, it is tempting and reasonable to simply define the BV partition function as
\begin{equation}
\label{eq:BVPartFuncLargehHerm}
    Z^{\rm BV}_2=\vh^{-\frac{{\rm dim}(\Lg)}{480}\chi(X)+\tfrac12h^{1,0}(\Lg)-\tfrac32h^{0,0}(\Lg)}\int{\cal D}\gamma\big\vert_{\gamma\in{\rm ker}\left(\del^\dagger_A\right)}{\rm exp}\left(i\int_X\tr(\gamma\del_A^{\rm BV}\gamma)\wedge\bar\Omega\right)\:,
\end{equation}
where the path integral is simply the one-loop partition function of ordinary anti-holomorphic Chern--Simons theory in the BV approach.

\subsection{Alternative viewpoint: metric re-scaling}
\label{sec:BV-Metric}
The appearance of the Hodge numbers in the exponential of \eqref{eq:BVPartFuncLargehHerm} are a bit unsatisfying, as compared to what we got in the BRST approach. We now present an alternative viewpoint, where instead of doing a field redefinition, we interpret the theory as a new theory with a re-scaled background metric on $X$ instead. 

Let's recall the deformed BV theory \eqref{eq:DefBVaction}
\begin{equation}
    S_{2,{\rm BV}}=\vh\int_X\tr\left(\gamma D^{\rm BV}_A\gamma+\tfrac13\gamma[\gamma,\gamma]\right)\wedge\bar\Omega\:.
\end{equation}
At large $h$, only the quadratic fluctuations will be relevant for the path integral.\footnote{This is akin to sending $\hbar\rightarrow0$ in normal QFT, and considering the one-loop partition function of
\begin{equation}
 Z=\int{\cal D}\phi\,\exp\left(\tfrac{i}{\hbar}S(\phi)\right)\:.   
\end{equation}
}
The relevant large $h$ parth integral to consider is therefore
\begin{equation}
    \lim_{\vh\rightarrow\infty}Z^{\rm BV}_2=\int{\cal D}\gamma\big\vert_{\gamma\in{\rm ker}\left(\del^\dagger_A\right)}{\rm exp}\left(i\int_X\tr(\gamma\vh\del_A^{\rm BV}\gamma)\wedge\bar\Omega\right)\:.
\end{equation}
With the gauge-fixing constraint $\gamma\in{\rm ker}\left(\del^\dagger_A\right)$, the kinetic operator is essentially a Dirac operator, whose square is proportional to the Laplacian
\begin{equation}
    \left(\vh\del_A+\bar\vh\del_A^\dagger\right)^2=\vvh^2\Delta_{\del_A}\:.
\end{equation}
The point is that this is the same as the ordinary Laplacian, but with the re-scaled metric
\begin{equation}
    \tilde g=\frac{1}{\vvh^2}g\:.
\end{equation}
Ignoring zero-modes, the absolute value square of the partition function then becomes
\begin{equation}
    \vert Z^{\rm BV}_2\vert^2={\cal I}_{\rm RS}(\del_A,\tilde g)\:,
\end{equation}
where we use the metric $\tilde g$ in the definition of the holomorphic Ray--Singer torsion ${\cal I}_{\rm RS}(\del_A,\tilde g)$.

From this point of view, when including volume forms, the (absolute value) of the partition function should be the same as the ordinary partition function, but using the re-scaled metric $\tilde g$ instead. That is
\begin{equation}
    \vert Z^{\rm BV}_2\vert^2={\rm Vol}_{\tilde g_0^\bullet}(Z^\bullet)^{\tfrac12}\,{\cal I}_{\rm RS}(\del_A,\tilde g)\:,
\end{equation}
where we use $\tilde g$ for the inner product $\tilde g_0^p$ on ${\cal H}^{p,0}(\Lg)$, which in canonical fields $\alpha$ and $\beta$ in ${\cal H}^{p,0}(\Lg)$ reads
\begin{align}
\langle\alpha,\beta\rangle_{\tilde g^p_0}&=\int_X\tfrac{1}{(3-p)!}\tilde\omega^{3-p}\wedge\alpha^\mu\bar\beta^{\bar\nu}{\cal H}_{\mu\bar\nu}\notag\\
&=\frac{1}{\vvh^{2(3-p)}}\int_X\tfrac{1}{(3-p)!}\omega^{3-p}\wedge\alpha^\mu\bar\beta^{\bar\nu}{\cal H}_{\mu\bar\nu}=\frac{1}{\vvh^{2(3-p)}}\langle\alpha,\beta\rangle_{g^p_0}\:.
\end{align}
The absolute value of the partition function then becomes
\begin{equation}
    \vert Z^{\rm BV}_2\vert^2=\vvh^{-\zeta_{\Delta^{1,0}_{\del_A}}(0)+3\zeta_{\Delta^{0,0}_{\del_A}}(0)-h^{1,0}(\Lg)+3h^{0,0}(\Lg)}{\rm Vol}_{g_0^\bullet}(Z^\bullet)^{\tfrac12}\,{\cal I}_{\rm RS}(\del_A,g)\:,
\end{equation}
where ${\cal I}_{\rm RS}(\del_A,g)$ is now the ordinary holomorphic Ray--Singer torsion computed using $g$. For hermitian connections, using Lemma \ref{lem:Zeta}, this is then simply 
\begin{equation}
    \vert Z^{\rm BV}_2\vert^2=\vvh^{-\tfrac{{\rm dim}(\Lg)}{240}\chi(X)}{\rm Vol}_{g_0^\bullet}(Z^\bullet)^{\tfrac12}\,{\cal I}_{\rm RS}(\del_A,g)\:,
\end{equation}
where the last two terms are the absolute value square of the one-loop BV partition function of ordinary anti-holomorphic Chern--Simons theory. 

From this viewpoint, it is then tempting to simply define the large $\vh$ BV partition function (for hermitian connections) to be
\begin{equation}
\label{eq:PartFuncBVMetric}
    Z_2^{\rm BV}=\vh^{-\tfrac{{\rm dim}(\Lg)}{480}\chi(X)}\int{\cal D}\gamma\big\vert_{\gamma\in{\rm ker}\left(\del^\dagger_A\right)}{\rm exp}\left(i\int_X\tr(\gamma\del_A^{\rm BV}\gamma)\wedge\bar\Omega\right)\:.
\end{equation}
The difference between this and \eqref{eq:BVPartFuncLargehHerm}, is that we are now also scaling the metrics we use for computing the volume factors of harmonic forms. However, given the discussion above, this seems like a natural thing to do. Note that the result then also agrees more with the, perhaps more naive, BRST result \eqref{eq:PartFuncBRSTLargeh}. To keep things simple, we will take this as our large $h$ partition function for the remainder of the paper, where we now come to study anomalies.

\section{Anomalies and geometric dependence}
\label{sec:anomalies}
Having computed the partition function, we now move to study its anomalies. There are two types of anomalies to consider. The first is potential gauge and gravitational anomalies, where the partition function is not gauge invariant even though the classical theory is. The second type of anomalies is the dependence of the partition function on background geometric structures. Famous examples of this include the holomorphic anomaly equation of the type IIB topological string \cite{Bershadsky:1993cx, Bershadsky:1993ta}, or the metric dependence of the partition function due to the choice of gauge fixing \cite{bismut1988analytic1, bismut1988analytic2, bismut1988analytic3, Pestun:2005rp}. We will study both types of anomalies, focusing on the metric dependence and holomorphic dependence of the background complex structure of the partition function for the geometric anomalies.

In the current section, unless other wise stated, we also restrict to assuming real Lie-algebras where the background connection is hermitian, so the gauge group is real and compact, and the usual anomaly formulas, in particular the results of Bismut etal \cite{bismut1988analytic1, bismut1988analytic2, bismut1988analytic3}, apply.\footnote{Again, this means that in the case of the ${\rm End}_0(T^{1,0}X)$-theory, we restrict to background connections and complex structure deformations corresponding to the special directions satisfying \eqref{eq:Inv=cc}. Though as noted above, the corresponding metric is non-positive definite, and the background connection is only pseudo-hermitian, a concern often glossed over in physics.} We will also work in the large $\vh$ limit where the theory becomes quadratic, leaving the study of higher loop anomalies in the generically deformed theory \eqref{eq:DefBVaction}, where the kinetic operator is no longer given by a connection, to future work. We assume that we have quantized the theory using the BV formalism, so the results of Bismut etal and anomaly formulas of holomorphic Chern--Simons theory more readily apply. Finally, we remark that most of this section is simply the application of well known results to the theory at hand. 

\subsection{Gauge and gravitational anomalies}
It is well known that generic holomorphic Chern--Simons theories suffers from gauge anomalies. Anomalies are associated to an anomalous phase transformation of the partition function, and so do not appear in the absolute value $\vert Z\vert$. Such anomalies can occur for global symmetries, and local (gauge and gravitational) symmetries. Of these, only anomalies in local symmetries are “fatal”, indicating that the theory is inconsistent. We compute these anomalies in the coming subsections, and then suggest a number of ways to cancel them, focusing in particular on the theories described above where the gauge bundle is $\End(T^{1,0}X)$.

From a geometric perspective, the partition function $Z$ should be interpreted as a section of a certain holomorphic determinant line bundle over the complex configuration space $\cal C$ of geometric structures on $X$. If the Chern connection on this bundle has non-zero curvature, ${\cal F}_{\rm Det}\neq 0$, then the phase of $Z$ is not fully specified by the gauge-invariant data of a point in the configuration space. This implies there is an anomaly. 

Let's begin by considering the anomalies of ordinary holomorphic Chern--Simons theory, and then adapt the results to the case at hand. Applying the results of \cite{bismut1988analytic1, bismut1988analytic2, bismut1988analytic3, Bittleston:2022nfr} to holomorphic Chern--Simons theories on a bundle $\End(E)$, the curvature reads
\begin{equation}
    {\cal F}_{\rm Det}=\frac{1}{2}\left[\int_X{\rm td}(X)\wedge{\rm ch}(\End(E))\right]_{(1,1)}:=\left[\int_X{\cal P}\right]_{(1,1)}\:,
\end{equation}
where, somewhat heuristically, the characteristic polynomials ${\rm td}(X)$ and ${\rm ch}(\End(T^{1,0}X))$ are constructed using curvatures on the “universal geometry”, where we think of the manifold with a given geometric configuration as a fiber in the fibration $X\rightarrow{\cal C}$. The integration over $X$
should then be read as an integration over the fibres, resulting in a $(1,1)$-form on $\cal C$. As $X$ is six-dimensional, only the $(4,4)$-component (as a form on the total space) of the curvature polynomial $\cal P$ contributes to ${\cal F}_{\rm Det}$.

For a generic holomorphic Chern--Simons theory on a bundle $E$ over a Calabi--Yau three-fold $X$, one finds the anomaly polynomaial can be written as
\begin{equation}
    \label{eq:AnomalyPolHolCS}
    {\cal P}=\frac12{\rm ch}_4(\End(E))+\frac{{\rm dim}(\Lg)}{40}{\rm ch}_4(X)-\frac{1}{24}{\rm ch}_2(X)\wedge{\rm ch}_2(\End(E))\:,
\end{equation}
where ${\rm dim}(\Lg)$ is the dimension of the Lie-algebra $\Lg$. The first term is the pure gauge anomaly for holomorphic Chern--Simons theory, The second term is the gravitational anomaly associated to local coordinate changes that preserve $\Omega$, and the final term is a mixed gauge-gravitational anomaly. The last two terms of the anomaly polynomial are often ignored when dealing with holomorphic Chern--Simons theory as a gauge theory, particularly on flat geometries. Indeed, if we are not concerned with gravitational anomalies, the second term is not relevant. The third term may however be useful when canceling the gauge anomaly on curved backgrounds. Let us see why.

Firstly, when canceling anomalies, it is useful to consider Lie-algebras satisfying certain factorization properties. For the case at hand, we consider gauge groups where
\begin{equation}
    {\rm ch}_4(\End(E))=C_{\Lg}{\rm ch}_2(\End(E))^2\:,
\end{equation}
where $C_{\Lg}$ is an $\Lg$-dependent constant. Ignoring the gravitational parts of the anomaly, the polynomial then reads
\begin{equation}
    {\cal P}=\frac{C_{\Lg}}{2}{\rm ch}_2(\End(E))^2\propto\tr(F\wedge F)^2\:.
\end{equation}
Via the descent procedure, this gives an anomaly proportiaonal to
\begin{equation}
    \int_X\tr(F\wedge F)\tr(F\epsilon)\:,
\end{equation}
where $\epsilon\in\Omega^0(\Lg)$ is the infinitesimal gauge transformation of the background.\footnote{In the background field expansion and BV formalism, the gauge transformations of the fluctuations are fixed when the path integral is performed. However, in the background field expansion, the classical (master) action is still invariant under gauge transformations of the background connection, and such gauge transformations may lead to anomalies of the partition function.} In (holomorphic) supergravity theoies, such an anomaly may be canceled with local counter-terms via a Green--Schwarz mechanism \cite{Green:1984sg}, see e.g. \cite{Costello:2015xsa, Costello:2016mgj, Costello:2019jsy, Costello:2021kiv, Bittleston:2022nfr, Ashmore:2023vji, Ashmore:2025fxr}. If we however only have the gauge field at our disposal, to cancel the anomaly with a local and globally defined counter-term, it seems necessary to impose the to impose a topological constraint, such that
\begin{equation}
\label{eq:topconstr1}
    \left[\tr(F\wedge F)\right]=0\:,
\end{equation}
in cohomology. That is,
\begin{equation}
\label{eq:Triv1}
    \tr(F\wedge F)=\delb\tau\:,
\end{equation}
for some gauge invariant $(2,1)$-form $\tau$. The anomaly may then be canceled by adding a globally well-defined counter term proportional to
\begin{equation}
    \int_X\tau\wedge\tr(F\wedge\alpha_0)\:,
\end{equation}
where $\alpha_0\in\Omega^{0,1}(\End(E))$ is the dynamical part of the background connection of holomorphic Chern--Simons theory, whose background gauge transformation is $\delta\alpha_0=\delb\epsilon$.\footnote{Specifically, we parametrise the background connection $A$ on $\End(E)$ as
\begin{equation}
    A=A_0+\alpha_0\:,
\end{equation}
where $A_0$ is an in in general non-global fixed background connection, and $\alpha_0$ is a small deformation of this background, i.e. $\delb\alpha_0=0$. The background gauge transformation is then $\delta A=\delta\alpha_0=\delb\epsilon$. Note also that anomalies and counter terms are intrinsically a one-loop (or higher) effect. Modulo higher loop effects, we therefore need only consider classical fields in the counter terms.\label{foot:Background}}

The constraint \eqref{eq:Triv1} is however rather stringent. For example, let's consider backgrounds of (poly-)stable bundles, where the Donaldson--Uhlenbeck--Yau theorem \cite{0529.53018, 0615.58045} implies the existence of a unique hermitian Yang--Mills connection $A$ on $\End(E)$, whose curvature then satisfies the anti-self-dual constraint
\begin{equation}
\label{eq:ASD}
    *F=-\omega\wedge F\:,
\end{equation}
where $\omega$ is the Kähler form of the Calabi--Yau. Let $\cal H$ denote the corresponding hermitian Yang--Mills metric. Computing the norm of $F$ using $\cal H$, one finds
\begin{equation}
    \vert\vert F\vert\vert^2=\int_X\tr_{\cal H}(F\wedge*\bar F)=\int_XF^\mu{}_{\nu}\wedge*\bar F^{\bar\mu}{}_{\bar\nu}{\cal H}^{\nu\bar\nu}{\cal H}_{\mu\bar\mu}\:,
\end{equation}
where $F^\mu{}_{\nu}=\delb({\cal H}^{\mu\bar\kappa}\del{\cal H}_{\bar\kappa\nu})$, and where $\{\mu,\nu,..\}$ are bundle indices in the fundamental representation. A short computation then reveals that
\begin{equation}
    \bar F^{\bar\mu}{}_{\bar\nu}{\cal H}^{\nu\bar\nu}{\cal H}_{\mu\bar\mu}=-F^{\nu}{}_{\mu}\:.
\end{equation}
One therefore finds that
\begin{equation}
    \vert\vert F\vert\vert^2=-\int_X\tr(F\wedge*F)=\int_X\omega\wedge\tr(F\wedge F)=0\:,
\end{equation}
where the last equality follows from \eqref{eq:ASD} and \eqref{eq:Triv1}. It follows that $F=0$, and the resulting bundles are flat.


However, including the third "mixed anomaly" term of \eqref{eq:AnomalyPolHolCS} on curved backgrounds, the decent procedure gives an anomaly proportional to
\begin{equation}
    \int_X\left(C_{\Lg}\,{\rm ch}_2(\End(E))-\frac{1}{12}{\rm ch}_2(X)\right)\wedge\tr(F\epsilon)\:.
\end{equation}
Imposing the more non-trivial topological constraint
\begin{equation}
    C_{\Lg}\,{\rm ch}_2(\End(E))=\frac{1}{12}{\rm ch}_2(X)
\end{equation}
in cohomology, the gauge anomaly may then be cancelled by a similar procedure to above. 

Let us now consider the anti-holomorphic Chern--Simons theory at hand. Using the BV approach, at large values of the deformation parameter $h$, the classical BV action of interest reads
\begin{equation}
    S_{2,{\rm BV}}=\int_X\tr(\gamma \del^{\rm BV}_A\gamma)\wedge\bar\Omega\:.
\end{equation}
As the connection is assumed to be hermitian, given by some hermitian metric $\cal H$, we get\footnote{Recall that the bundle is $\End(E)$, so ${\cal H}^{-1}={\cal H}$.}
\begin{equation}
    \del_A={\cal H}\circ\del\circ{\cal H}\:.
\end{equation}
We can then do a field redefinition \begin{equation}
\label{eq:FieldRedefH}
    {\cal H}\gamma\rightarrow\gamma\in\Omega^{\bullet,0}(\End({\bar E}))\:,
\end{equation}
where the classical action simply becomes the complex conjugate of ordinary holomorphic Chern--Simons theory. Viewing the partition function as a section of a holomorphic determinant line bundle as above, the corresponding curvature ${\cal F}_{\rm Det}$ therefore becomes the negative of ordinary holomorphic Chern--Simons theory. 

It should be noted that in doing the field redefinition \eqref{eq:FieldRedefH} we do have to worry about a potential Jacobian factor from the path integral measure. However, as $\cal H$ is a real operator its determinant will not affect the phase of the partition function and therefore not the anomaly either. Indeed, as ${\cal H}^2=1$, this Jacobian at most gives a sign factor to the partition function, which we will ignore.\footnote{This is perhaps a bit dubious as $\cal H$ maps between different bundles, either $\End(E)$ to $\End({\bar E})$ or vice versa. However, the Jacobien for such an operator is defined as 
\begin{equation}
    {\rm Jac}({\cal H})={\rm det}({\cal H}):=\sqrt{\det({\cal H}^\dagger{\cal H})}=1\:,
\end{equation} 
as as ${\cal H}^\dagger=\cal H$ and ${\cal H}^2=1$.} 
From hereon, when considering gauge and gravitational anomalies, we can therefore assume that we are working with an ordinary holomorphic Chern--Simons theory on a holomorphic bundle $\End(E)$.

\subsection{Anomalies of the ${\rm End}_0(T^{1,0}X)$-theory}
We now consider anomalies of the gauge theory of section \ref{sec:EndTtheory}, again picking one of the special directions $h$ where the background connection is self-adjoint and pseudo-hermitian. We view this theory as a {\it gauge theory} on $\End(T^{1,0}X)$, equipped with a background pseudo-hermitian connection solving the instanton equation of motion \eqref{eq:Instanton}. To apply the usual gauge anomaly formulas, where the gauge curvature is given by the curvature $\tilde R$ of $\tilde\nabla$, it is then also important that when we quantize we gauge-fix using the $\tilde\nabla$-compatible metric on $\End(T^{1,0}X)$ derived from $h$, given by \eqref{eq:HermMetr}, instead of using the gauge-fixing choice using the background metric of section \ref{sec:Convgauge} which leads to the twisted theory. However, as this metric is not-positive definite, from a mathematical perspective this sub-section should perhaps be taken with a grain of salt. Such issues of non-positive definiteness are however often ignored when it comes to physics.\footnote{For example, in the usual Lorenz gauge fixing condition of particle physics one uses the non-positive definite Minkowski metric.} 

Decomposing under $SU(3)$, we have
\begin{equation}
    \End(T^{1,0}X)={\bf 3}\otimes{\bf\bar 3}={\bf 8}+{\bf 1}\:.    
\end{equation}
As mentioned, the connection symbols are trace-free by Proposition \ref{prop:TorsionFree}, and they hence take values in the Lie-algebra, or $\bf 8$ irreducible representation of $SU(3)$, i.e. we can treat it as an $SU(3)$ gauge theory on the background bundle ${\rm End}_0(T^{1,0}X)$, where zero denotes the trace-free part.  

The corresponding anomaly polynomial is
\begin{equation}
    {\cal P}=\frac12{\rm ch}_4({\bf 8})+\frac{8}{40}{\rm ch}_4(X)-\frac{1}{24}{\rm ch}_2(X)\wedge{\rm ch}_2({\bf 8})\:.
\end{equation}
For $SU(3)$, we have
\begin{equation}
    {\rm ch_4}({\bf 8})=\frac{1}{24}({\rm ch}_2({\bf 8}))^2\:,
\end{equation}
leading to the anomaly polynomial 
\begin{equation}
    {\cal P}=\frac{1}{24}\left(\frac12{\rm ch}_2({\bf 8})-{\rm ch}_2(X)\right)\wedge{\rm ch}_2({\bf 8})+\frac{1}{5}{\rm ch}_4(X)\:.
\end{equation}
Unfortunately, in the case of $SU(3)$ we also have
\begin{equation}
    {\rm ch}_2({\bf 8})=6\,{\rm ch}_2({\bf 3})\:,
\end{equation}
leading to the anomaly polynomial
\begin{equation}
\label{eq:anomalyPolSU3}
    {\cal P}=\frac{1}{24}\left(3\,{\rm ch}_2({\bf 3})-{\rm ch}_2(X)\right)\wedge{\rm ch}_2({\bf 8})+\frac{1}{5}{\rm ch}_4(X)\:.
\end{equation}
As ${\rm ch}_2({\bf 3})$ and ${\rm ch}_2(X)$ represent the same characteristic class, though constructed with different curvatures (${\rm ch}_2({\bf 3})$ is constructed using the curvature of $\tilde\nabla$ and ${\rm ch}_2(X)$ is constructed using the curvature of the background metric), the constraint \eqref{eq:topconstr1} is not satisfied in this case, and it might seem hard to cancel the anomaly using local and globally well-defined counter-terms. 

All is not lost however. Indeed, let $\alpha_0\in\Omega^{0,1}(\End(T^{1,0}X))$ be the dynamical part of the classical background. Consider the $(1,1)$-form
\begin{equation}
    \gamma=\tilde\nabla_a\alpha_0{}
    ^a{}_b\dd z^b\:,
\end{equation}
where $\tilde\nabla_a$ is the fixed background connection, and $\alpha_0$ is the dynamical part of this background, see footnote \ref{foot:Background}. A background gauge transformation of $\gamma$ then reads
\begin{equation}
    \delta\gamma=\tilde\nabla_a\delb\epsilon
    ^a{}_b\dd z^b=\tilde R_a{}^c{}_b\epsilon^a{}_c\dd z^b+\delb\left(\tilde\nabla_a\epsilon
    ^a{}_b\dd z^b\right)=\tr(\tilde R\epsilon)+\delb\left(\tilde\nabla_a\epsilon
    ^a{}_b\dd z^b\right)\:,
\end{equation}
where in the second equality we have used Proposition \ref{prop:Ricci}, that is $\tilde\nabla$ is "Ricci flat", $\tilde R_a{}^a{}_c=0$, and the last equality follows from Proposition \ref{prop:TorsionFree}, that is $\tilde\nabla$ is torsion-free. Given the anomaly polynomial \eqref{eq:anomalyPolSU3}, the descent procedure gives a gauge anomaly proportional to
\begin{equation}
    \int_X\left(3\,{\rm ch}_2({\bf 3})-{\rm ch}_2(X)\right)\wedge\tr(\tilde R\epsilon)\:,
\end{equation}
which may be cancelled by adding a counter-term proportional to
\begin{equation}
    \int_X\left(3\,{\rm ch}_2({\bf 3})-{\rm ch}_2(X)\right)\wedge\gamma\:.
\end{equation}

\subsection{Gravitational anomalies}
There is of course also the purely gravitational part of the anomaly polynomial to consider. For pure gauge theories, these are often ignored, or rather interpreted as a non-fatal ’t Hooft anomalies associated to the classical diffeomorphism symmetry of the classical background.\footnote{As noted above, the third, or "mixed anomaly", term of \eqref{eq:AnomalyPolHolCS} is often given the same treatment.} Still, it is desirable to also cancel the gravitational anomaly, especially as we will also consider the fully gravitational twisted theory of section \ref{sec:Convgauge} below. For holomorphic Chern--Simons theory, where the background geometry is represented by $\Omega$, we restrict to $\Omega$-preserving diffeomorphisms. The infinitesimal rotation of the holomorphic tangent bundle given by an infinitesimal diffeomorphism $v\in\Gamma(T)$ is then given by $\kappa^a{_b}=\nabla_a v^b$, where $\nabla_av^a=0$ ($\Omega$-preserving). I.e.  $\kappa$ sits in ${\rm End}_0(T^{1,0}X)$, or the Lie-algebra of $SU(3)$.

Canceling the gravitational anomaly for more general holomorphic Chern--Simons type theories seems tricky without the introduction of additional (gravitational) degrees of freedom and possibly resorting to Green--Schwarz type mechanisms such as in e.g. (holomorphic) heterotic or type I supergravity \cite{Costello:2015xsa, Costello:2016mgj, Costello:2019jsy, Ashmore:2023vji, Ashmore:2025fxr}. For example, if we introduce a number of bosonic degrees of freedom $\kappa_i\in\Omega^0(X)$ and $\beta_i\in\Omega^{0,2}(X)$ to standard holomorphic Chern--Simons theory, where $i\in\{1,2,\dots,N\}$ with kinetic terms
\begin{equation}
    S_{\rm kin}(\kappa_i,\beta_i)=\int_X\kappa_i\delb\beta_i\wedge\Omega\:,
\end{equation}
then the only effect this has on the anomaly polynomial is to correct the gravitational part to\footnote{The minus in the denominator is due to the fact that the classical bosonic degrees of freedom $\kappa_i$ and $\beta_i$ are differential forms shifted by one degree.}
\begin{equation}
    \frac{{\rm dim}(\Lg)}{40}{\rm ch}_4(X)\rightarrow\frac{{\rm dim}(\Lg)-2N}{40}{\rm ch}_4(X)\:.
\end{equation}
For even-dimensional Lie-algebras this anomaly may then be canceled simply by setting $2N={\rm dim}(\Lg)$.

Of course, coupling the theory to additional degrees of freedom after we have deformed, is a bit unsatisfactory. Instead, what we should do is couple the original undeformed theory to a set of BV theories of the form
\begin{equation}
    S_{\rm kin}(\kappa_i,\beta_i)=\int_X\kappa_i\dd\beta_i\wedge\Omega\:,
\end{equation}
where now $\kappa_i\in\Omega^0(X)$ and $\beta_i\in\Omega^{2}(X)$ by slight abuse of notation. After deforming, and after a somewhat lengthy but straight forward computation involving field redefinitions ala the start of section \ref{sec:Quantum}, which we will skip, the action becomes
\begin{equation}
    S_{\rm kin}(\kappa_i,\beta_i)=\vh\int_X\kappa_i D\beta^{2,0}_i\wedge\bar\Omega\:,
\end{equation}
where $D=\del+h^{\bar a}\delb_{\bar a}$. The dominant contribution for large $\vh$ is then simply
\begin{equation}
    S_{\rm kin}(\kappa_i,\beta_i)=\vh\int_X\kappa_i\del\beta^{2,0}_i\wedge\bar\Omega\:.
\end{equation}
If we then BV-quantize as in section \ref{sec:BV-Metric}, the partition function \eqref{eq:PartFuncBVMetric} is corrected to
\begin{equation}
\label{eq:PartFuncWextra1}
    Z^{\rm BV}_2=\vh^{\frac{2N-{\rm dim}(\Lg)}{480}\chi(X)}\int{\cal D}\Phi\,{\rm exp}\left(i\int_X\Big(\tr(\gamma\del_A^{\rm BV}\gamma)+\sum_i\kappa_i\del^{\rm BV}\beta_i\Big)\wedge\bar\Omega\right)\:,
\end{equation}
where now $\gamma$ is a poly-form in $\Omega^{\bullet,0}(\End(E))$, and $\kappa_i$ and $\beta_i$ are poly-forms in $\Omega^{\bullet,0}(X)$. Furthermore, $\gamma$ is bosonic in odd degree, and fermionic in even degree, and opposite for $\kappa_i$ and $\beta_i$. Here $\Phi$ is short hand notation for all the fields involved, and the path integral is taken over the kernel of the corresponding adjoint operators, $\del^\dagger_A$ for $\gamma$ and $\del^\dagger$ for $\kappa_i$ and $\beta_i$. 

Canceling the gravitational anomaly in this way then gives the partition function
\begin{equation}
    Z^{\rm BV}_2=\int{\cal D}\Phi\,{\rm exp}\left(i\int_X\Big(\tr(\gamma\del_A^{\rm BV}\gamma)+\sum_i\kappa_i\del^{\rm BV}\beta_i\Big)\wedge\bar\Omega\right)\:,
\end{equation}
which also has the effect of removing the explicit dependence of the partition function on $\vh$.

\subsubsection*{Anomalies in the "twisted theory"}
Before we move to discuss geometric anomalies, let's discuss the $\End(T^{1,0}X)$-theory in the convenient gauge of section \ref{sec:Convgauge} (where again we pick $h$ in the special directions making $\tilde\nabla$ pseudo-hermitian). In this gauge, the partition function becomes
\begin{equation}
    Z^{\rm BV}_2=\vh^{\frac{2N-{\rm dim}({\bf 8})}{480}\chi(X)}\int{\cal D}\Phi\,{\rm exp}\left(i\int_X\Big(\tr(\gamma\del_\nabla^{\rm BV}\gamma)+\sum_i\kappa_i\del^{\rm BV}\beta_i\Big)\wedge\bar\Omega\right)\:,
\end{equation}
Here again $\del_\nabla$ is the Chern connection of the background metric, and the integral over $\gamma$ is now restricted to the kernel of $\del_\nabla^{\dagger}$. The full anomaly polynomial of the corresponding holomorphic Chern--Simons theory is now purely gravitational, reading
\begin{equation}
    {\cal P}=\frac{248-2N}{480}{\rm ch}_2(X)^2\:.
\end{equation}
We see than that for this "twisted theory", picking $2N=248$ cancels the anomaly in full. With this choice, the partition function becomes
\begin{equation}
\label{eq:PartFuncWextra2}
    Z^{\rm BV}_2=\vh^{\tfrac12\chi(X)}\int{\cal D}\Phi\,{\rm exp}\left(i\int_X\Big(\tr(\gamma\del_\nabla^{\rm BV}\gamma)+\sum_{i=1}^{124}\kappa_i\del^{\rm BV}\beta_i\Big)\wedge\bar\Omega\right)\:.
\end{equation}
The observant reader will perhaps pounder as to the significance of the number $248$. Indeed, one can imagine collecting the $\kappa_i$ and $\beta_i$ fields in a $248$-dimensional multiplet, which then enjoys a global $E_8$-symmetry.\footnote{To get a term as in \eqref{eq:PartFuncWextra2}, one needs to use a skew-symmetric pairing on the $E_8$ Lie-algebra.} This symmetry can then be gauged using a flat $E_8$-connection. As the bundle is flat with vanishing curvature, this will not alter the form of the anomaly. It is interesting that holomorphic Chern--Simons theory on ${\rm End}_0(T^{1,0}X)$ can be made anomaly free in this way, by coupling to a fermionic-type $E_8$-theory,\footnote{Here "fermionic-type", refers to that classical fields of this theory are shifted by one degree with respect to ordinary holomorphic Chern--Simons.} though we do not expect to be the first to make this observation.

\subsection{Geometric anomalies}
\label{sec:GeomAnomalies}
We now come to consider geometric anomalies. These are not anomalies in the usual sense, but rather a dependence of the (absolute value of the) partition function of the background geometric structures. In particular, in quantizing the theory we have to pick metrics on both the manifold and the bundle in order to gauge-fix. Furthermore, as noted above, the generic partition function, for example \eqref{eq:PartFuncWextra1}, will have an implicit dependence of $h$ due to the background connection $\del_A$ appearing in the action. Again, this connection solves the instanton equation of motion \eqref{eq:Instanton}, making the background connection $h$-dependent. Moreover, $h$ also depends on the background metric (though $\vh$ does not), as $h$ is chosen harmonic with respect to the background metric, complicating things even further. Finally, the partition function also depends on the background complex structure, as worked out by Bismut etal \cite{bismut1988analytic1, bismut1988analytic2, bismut1988analytic3}. 

All these considerations make a general analysis of the geometric dependence of the partition function of the generic theory rather daunting. Here, we will instead focus on the ${\rm End}_0(T^{1,0}X)$-theory, which is the only explicit example of a "deformed Chern--Simons theory" we have presented, and leave the general analysis to future work. Working with the example of the "twisted theory" \eqref{eq:PartFuncWextra2}, where the implicit $h$-dependence drops out, will also simplify matters. Furthermore, this theory has a vanishing anomaly polynomial, which turns out also to simplify both the metric and background complex structure dependence of the partition function. Though, as noted above, we are restricted to picking one of the directions $h$ of the deformation parameter where $\tilde\nabla$ becomes self-adjoint.

\subsubsection*{Metric anomaly}
Let's first consider the dependence of the partition \eqref{eq:PartFuncWextra2} of the background metric, or "metric anomaly". Including the volumes of harmonic forms, the absolute value of the partition function reads
\begin{equation}
    \vert Z^{\rm BV}_2\vert^2=\vvh^{\chi(X)}\times\bigg(\frac{{\rm Vol}_{g^\bullet}\left(Z^\bullet({\rm End}_0(T^{1,0}X))\right)}{\left({\rm Vol}_{g^\bullet}(Z^\bullet(X))\right)^{248}}\bigg)^{\tfrac12}\times\frac{{\cal I}({\rm End}_0(T^{1,0}X))}{{\cal I}^{248}_0}\:.
\end{equation}
Note that the volume factors in the twisted theory should also be worked out using the ordinary metric $g$ on ${\rm End}_0(T^{1,0}X)$. It is clear that $\vh$, being proportional to the Yukawa coupling of $h$, see equation \eqref{eq:vh-Yuk}, is metric independent. However, the volume factors and holomorphic Ray--Singer torsions do depend on the metric $g$ on $X$, where the explicit formulas where worked out by Bismut etal \cite{bismut1988analytic1, bismut1988analytic2, bismut1988analytic3}.

For general deformations of the metric $g$ on $X$ (not just rescalings), the metric anomaly formula for the holomorphic Ray--Singer torsion ${\cal I}(E)$ are best presented by considering the Quillen measure
\begin{equation}
    {\cal Q}(E)={\rm det}(g^{\bullet}(E))^{\tfrac12}\,{\cal I}(E)\:,
\end{equation}
where ${\rm det}(g^{\bullet}(E))$ is short-hand notation for
\begin{equation}
    {\rm det}(g^{\bullet}(E))=\frac{{\rm det}(g^1(E)){\rm det}(g^3(E))}{{\rm det}(g^0(E)){\rm det}(g^2(E))}\:,
\end{equation}
where ${\rm det}(g^p)$ is the determinant of the metric viewed as a hermitian inner product on ${\cal H}^{p,0}(E)$, and where the metric $g^p(E)$ on harmonic forms is given by an analogous formula to \eqref{eq:IPh}. The metric anomaly formula then reads \cite[Theorem 1.22]{bismut1988analytic1, bismut1988analytic2, bismut1988analytic3}
\begin{equation}
\label{eq:VarTorsion}
    \frac{1}{\pi} \delta_g \log {\cal Q}(E) = \int_X \partial_t \left[ {\rm td} \left(\frac{1}{2\pi} (iR+t\,g^{-1}\delta g)\right)\wedge {\rm ch}\left(\frac{1}{2\pi} (iF+t\, {\cal H}^{-1}\delta{\cal H})\right) \right]_{4}\bigg\vert_{t=0}  \, ,
\end{equation}
where $R$ and $F$ are the curvatures of the Chern connections on the holomorphic tangent bundle and $E$ respectively, $\cal H$ is a hermitian metric on $E$, and the total Todd class and Chern character are expanded in terms of the indicated arguments. The subscript $4$ means that we take the component that is degree 4 in the curvatures $R,F$ and their variations. For the theory \eqref{eq:PartFuncWextra2}, $\cal H$ is also given in terms of the metric $g$ on ${\rm End}_0(T^{1,0}X)$. 

To apply these formulas for general variations, it is convenient to promote the one--loop partition function to the Quillen measure instead, without the integration over harmonic forms, see for example \cite{Bershadsky:1993cx, muller1999extremal}. For the theory at hand, we then get
\begin{equation}
\label{eq:QuillenPF}
    \vert Z^{\rm BV}_2\vert^2=\vvh^{\chi(X)}\times\frac{{\cal Q}({\rm End}_0(T^{1,0}X))}{{\cal Q}_0^{248}}\:,
\end{equation}
where ${\cal Q}_0$ is the Quillen measure for forms not valued in a bundle. The observant reader will note that the variation \eqref{eq:VarTorsion} essentially corresponds to a variation of the anomaly polynomial of the corresponding holomorphic Chern--Simons theory on $E$. One might therefore think that the variation vanishes for the full theory \eqref{eq:PartFuncWextra2}. This is almost correct, with the caveat that $g^{-1}\delta g$ will in general also include a non-trivial trace, or "singlet" under the $SU(3)$ decomposition. The singlet is proportional to the trace $\tr(g^{-1}\delta g)$, where we trace over the fundamental of $SU(3)$. For Kähler Ricci-flat deformations, this is a constant on $X$. Moreover, given the volume
\begin{equation}
    {\rm Vol}_g(X)=\tfrac{1}{3!}\int_X\omega\wedge\omega\wedge\omega\:,
\end{equation}
we find that
\begin{equation}
\label{eq:VarVol}
    \delta\log\left({\rm Vol}_g(X)\right)=\tr(g^{-1}\delta g)\:.
\end{equation}
A metric variation may then be decomposed as
\begin{equation}
    g^{-1}\delta g=\tfrac13\tr(g^{-1}\delta g)\,{\bf 1}+\left(g^{-1}\delta g\right)_{\bf 8}\:,
\end{equation}
where the last term denotes the adjoint part of the variation. This term essentially acts as a gauge transformation, and the corresponding metric anomaly associated to this term therefore vanishes for the theory \eqref{eq:QuillenPF}.

Using \eqref{eq:VarTorsion}, we then find the metric variation for the partition function to be 
\begin{equation}
    \delta_g\log\left(\vert Z^{\rm BV}_2\vert^2\right)=-\tfrac16\,\chi(X)\,\tr(g^{-1}\delta g)\:,
\end{equation}
which, using \eqref{eq:VarVol}, implies that the combination
\begin{equation}
\label{eq:VarQX}
    \vert Z^{\rm BV}_2\vert^2\,{\rm Vol}_g(X)^{\tfrac16\chi(X)}
\end{equation}
is independent of (Kähler) metric deformations. This is similar to the volume factor used to make the one-loop partiton function of the type IIB Hithcin functional metric independent \cite{Pestun:2005rp}.

\subsubsection*{Holomorphic anomaly}
We can also consider the holomorphic anomaly. That is, how does the partition function depend on the background complex structure. The holomorphic anomaly formula for the Quillen measure for a holomorphic bundle $E$ now reads \cite{bismut1988analytic1, bismut1988analytic2, bismut1988analytic3}
\begin{equation}
    \frac{1}{2\pi i}\bdel\bdelb\log{\cal Q}(E)=\frac{1}{2}\int_X\left[ {\rm td} \left(\frac{iR}{2\pi}\right)\wedge {\rm ch}\left(\frac{iF}{2\pi}\right) \right]_{4}\:,
\end{equation}
where we should view the curvatures $R$ and $F$ as curvatures of the holomorphic tangent bundle $T^{1,0}X$ and gauge bundle $E$, extended to the universal geometry, where we think of $X$ as fibered over the moduli space. The derivatives $\bdel$ and $\bdelb$ are now in moduli directions,
\begin{equation}
    \bdel=\dd Z^A\frac{\del}{\del Z^A}\:,\;\;\;\;\bdelb=\dd {\bar Z}^{\bar A}\frac{\del}{\del{\bar Z}^{\bar A}}\:.
\end{equation}
Applied to the case at hand, we then need the curvature $\boldsymbol{\cal R}$ of the holomorphic tangent bundle $T^{1,0}X$ extended to the moduli space. This can be decomposed as
\begin{equation}
    \boldsymbol{\cal R}={\boldsymbol{\cal R}}_{\bf 8}+\tfrac13\tr({\boldsymbol{\cal R}})\,{\bf 1}\:.
\end{equation}
As for metric deformations, in the combination of Quillen measures in \eqref{eq:QuillenPF} the adjoint part of the curvature drops out. Moreover, we have \cite{Bershadsky:1993cx}
\begin{equation}
    \frac{i}{2\pi}\tr({\boldsymbol{\cal R}})=c_1(T^{1,0}X)=-c_1(T^{*1,0}X)=-\frac{\boldsymbol{\cal\omega}}{2\pi}\:,
\end{equation}
where ${\boldsymbol{\cal\omega}}=i\bdel\bdelb K$ is the Kähler form on the moduli space. Combining this, we find
\begin{equation}
\label{eq:VarQZ}
    i\bdel\bdelb\log\left(\frac{{\cal Q}({\rm End}_0(T^{1,0}X))}{{\cal Q}_0^{248}}\right)=-\tfrac16\chi(X){\boldsymbol{\cal\omega}}\:.
\end{equation}
Combining \eqref{eq:VarQX} and \eqref{eq:VarQZ}, we see that the moduli dependence of the combination of Quillen measures goes like
\begin{equation}
\label{eq:QModDep}
    \log\left(\frac{{\cal Q}({\rm End}_0(T^{1,0}X))}{{\cal Q}_0^{248}}\right)\sim \tfrac16\chi(X)\left(K(X^A)-K(Z^A)\right)\:,
\end{equation}
where $X^A$ are Kähler moduli, with corresponding Kähler potential
\begin{equation}
    e^{-K(X^A)}={\rm Vol}_g(X)=\tfrac{1}{3!}\int_X\omega\wedge\omega\wedge\omega\:.
\end{equation}

This is perhaps expected from mirror symmetry. Indeed, mirror symmetry switches Kähler and complex structure moduli, and flips the sign of the Euler number. Equation \eqref{eq:QModDep} then expresses the, perhaps naiive, expectation that the topological open string one-loop partition function is invariant under mirror symmetry, where the path integral of \eqref{eq:PartFuncWextra2} denotes the corresponding target space theory of the topological open string theory in question.

As an aside, one can also consider the type II topological string, where the one-loop free energy of the type IIB depends on complex structure moduli as \cite{Bershadsky:1993cx}
\begin{equation}
    F_{\rm IIB}(X, Z^A)\sim \tfrac{1}{24}\chi(X)K(Z^A)\:.
\end{equation}
However, ignoring subtleties related to harmonic zero-modes, one finds that the type IIA and type IIB one-loop partition functions agree at large volume on the same Calabi--Yau X \cite{Bershadsky:1993cx, deBoer:2007zu, Ashmore:2021pdm}. The large volume Kähler moduli dependence of the type IIB one-loop free energy is therefore
\begin{equation}
    F_{\rm IIB}(X,X^A)=F_{\rm IIA}(\tilde X,\tilde Z^A)=F_{\rm IIB}(\tilde X, \tilde Z^A)\sim \tfrac{1}{24}\chi(\tilde X)K(\tilde Z^A)=-\tfrac{1}{24}\chi(X)K(X^A)\:,
\end{equation}
where the first equality is mirror symmetry, and the minus in the last equality is due to $\chi(\tilde X)=-\chi(X)$. Here $\tilde X$ denotes the mirror of $X$, together with mirror Kähler moduli $\tilde X^A$ and complex structure moduli $\tilde Z^A$. Hence,
\begin{equation}
    F_{\rm IIB}(X, Z^A, X^A)\sim\tfrac{1}{24}\chi(X)\left(K(Z^A)-K(X^A)\right)\:,
\end{equation}
which is also invariant under mirror symmetry with the observation that the type IIB and type IIA one-loop partition functions agree on the same Calabi--Yau $X$ at large volume. 

Returning to the case at hand, there is also the $\vvh$-dependent prefactor of the partition functction \eqref{eq:QuillenPF} to consider. Note that $\vh$, given by equation \eqref{eq:vh-Yuk}, depends both on background moduli through the Yukawa couplings and Kähler potential, and on the special directions in deformation space where the corresponding connection on ${\rm End}_0(TX)$ is self-adjojnt. How these special directions change as we move around in complex structure moduli space, and if they may be extended to global sections of ${\cal T}{\cal M}$ will be the subject of future publications.

\section{Conclusions and outlook}
In this work we have investigated a deformation of holomorphic Chern–Simons theory obtained by explicitly deforming the background complex structure of a Calabi–Yau threefold. Rather than treating complex structure dependence indirectly through variation of parameters, we constructed a fully deformed classical action by expanding the holomorphic top-form in powers of a complex structure deformation tensor and inserting this expansion directly into the Chern–Simons functional. Although this theory can be viewed formally as ordinary holomorphic Chern–Simons theory written in terms of a deformed complex structure, the explicit parametrization in terms of the deformation tensor $h^a_{\bar b}$ reveals structural features that are far from obvious in the conventional formulation. In particular, the deformed action depends simultaneously on both holomorphic and anti-holomorphic components of the gauge connection, and the resulting equations of motion exhibit a rich interplay between curvature components of different Hodge types.

At the classical level, we derived a compact and equivalent formulation of the equations of motion, which makes manifest the role of the deformation parameter $h\in{\cal H}^{0,1}(T^{1,0}X)$ (Beltrami differential) and its associated $(2,1)$-form $\chi=\Omega(h)$. Among the solutions, we identified a distinguished class of scale-invariant "instanton" solutions
\begin{equation}
    F^{0,2} = 0\:,\;\;\; F\wedge\chi = 0\:,
\end{equation}
characterized by the holomorphic bundle condition, where the last equation replaces the usual primitive condition for Yang--Mills instantons. These instantons are invariant under rescalings of the deformation parameter and therefore depend only on its direction in complex structure moduli space. Structurally, the resulting equations bear resemblance to higher-dimensional gauge-theoretic instanton conditions, particularly those encountered in exceptional $G_2$ holonomy setting, suggesting that the deformation tensor could play a role analogous to a calibration form selecting preferred curvature components.

We demonstrated that non-trivial solutions to these instanton equations do exist. Besides simple abelian examples, we constructed non-abelian solutions on $\End(T^{1,0}X)$. A central observation is that when the deformation parameter has non-vanishing Yukawa coupling, it induces a natural Chern-type connection on the tangent bundle whose curvature satisfies the instanton constraint. However, compatibility with a real self-adjoint gauge theory structure on the underlying real bundle $\End(TX)$ is not automatic. Instead, such self-adjointness holds only for special directions in complex structure deformation space, determined by a non-trivial condition relating the deformation parameter to its complex conjugate. This condition singles out specific rays in the projectivized deformation space, and it is precisely along these directions that the induced connection becomes a self-adjoint (though pseudo-hermitian) $SU(3)$-connection on the trace-free endomorphism bundle.

The analysis of these special directions reveals an intriguing connection between gauge theory and the global geometry of complex structure moduli space. By reformulating the hermiticity constraint as a critical point condition for a homogeneous functional constructed from Yukawa couplings and the moduli space metric, we were able to interpret the problem variationally. The resulting function descends to projective moduli space and, away from the vanishing locus of the Yukawa coupling, behaves as a Morse function. This perspective provides both existence results and lower bounds on the number of special directions via relative Morse theory. We gave an explicit low-dimensional example of this. This suggests that the space of physically distinguished gauge backgrounds is controlled by topological features of the complex structure moduli space.

Turning to the quantum theory, we quantized the deformed action around such scale-invariant instanton backgrounds using the background field method. After a sequence of field redefinitions, the kinetic and interaction terms simplify dramatically and can be written in a form closely resembling anti-holomorphic Chern–Simons theory, multiplied by the determinant of the deformation tensor $h_{\bar b}^a$. The nilpotency of the corresponding kinetic operator relies crucially on the instanton condition, illustrating the connection between the classical background equations and the structure of the quantum fluctuations. We carried out the quantization using complementary approaches, including formal one-loop analysis and BRST and BV methods. The resulting one-loop partition function can be expressed in terms of generalized holomorphic Ray–Singer torsions associated with the deformation-dependent differential operator. In the limit of large complex structure deformation, the expressions simplify considerably and reduce to torsions of standard Dolbeault-type operators multiplied by explicit powers of the deformation determinant. This large-deformation regime thus provides a controlled approximation in which the dependence on the deformation parameter becomes particularly transparent. Nevertheless, even in this limit, an implicit dependence remains through the background connection, which itself satisfies the deformation-dependent instanton condition. However, in the case of the above mentioned theory on $\End(T^{1,0}X)$, via a suitable field redefinition, we found that this implicit dependence drops out for the special directions in the deformation space.

A central theme of the quantum analysis is the study of anomalies and geometric dependence. Because the deformation modifies both kinetic operators and the functional measure, potential gauge, gravitational, and geometric anomalies must be reexamined. The analysis simplifies for large complex structure deformations, and the usual anomaly analysis for real self-adjoint gauge theories can be applied. In particular, our analysis shows that along the special directions in deformation space, the theory defined on $\End(T^{1,0}X)$ can be made anomaly free by coupling the theory to an additional $E_8$-worth of gravitational degrees of freedom. We also studied geometric anomalies and the metric dependence of the resulting partition function, showing a mild dependence proportional to the Euler number as in the type B closed topological string \cite{Bershadsky:1993cx, Pestun:2005rp}.

Several broader perspectives emerge from this work. First, the deformation tensor $h^a_{\bar b}$, usually regarded merely as a parameter describing variations of complex structure, acquires an active dynamical role when inserted explicitly into the gauge theory action. Through its Yukawa coupling and its interaction with curvature components, it governs both the classical instanton structure and the quantum kinetic operator. This suggests perhaps that complex structure deformations and gauge dynamics are more tightly interwoven than is apparent in the standard formulation of holomorphic Chern–Simons theory. In particular, the appearance of a Morse-theoretic structure on projective deformation space hints at a deeper variational principle relating special geometry data to gauge-theoretic stability conditions and instanton configurations.

Furthermore, the emergence of instanton equations resembling those in higher-dimensional gauge theories raises the possibility that deformed holomorphic Chern–Simons theory may provide a bridge between complex three-dimensional gauge theory and exceptional holonomy constructions. The contraction of the curvature with the deformation tensor plays a role analogous to projection onto calibrated components, and it would be interesting to understand whether this analogy can be sharpened, perhaps in the context of dimensional reductions or dualities.

On the classical side, a more detailed study of the moduli space of instanton solutions would clarify questions of uniqueness, rigidity, and obstruction theory. Is there a notion of stability for such instantons? The conjecture that the special directions exhaust all solutions compatible with real gauge structure on $\End(TX)$ deserves a more complete proof. It would also be valuable to analyze the behavior of the Morse function governing special directions as one moves over the full complex structure moduli space, and to determine whether global sections of the corresponding critical loci exist or whether monodromies and wall-crossing phenomena occur. It is also worth investigating if the special directions have connections to the attractor mechanism and attractor flow, given the similar nature of the corresponding equations.  

On the quantum side, the interplay between the deformation determinant, Yukawa couplings, and analytic torsion suggests potential connections with holomorphic anomaly equations and topological string amplitudes. In this regard, it would be interesting to consider the full cubic theory and higher loop anomalies, away from the large complex structure limit of the deformation parameter. Since the topological open string underlies holomorphic Chern–Simons theory, a natural next step to investigate is how the explicit deformation considered here encodes novel aspects of complex structure dependence at the level of open topological string theory. Furthermore, it is clear that the special directions have interesting properties from the point of view of complex structure moduli. It would be interesting to investigate what the corresponding properties are, if any, for Kähler moduli on the mirror dual side. 

Finally, it would be interesting to explore whether similar constructions can be extended beyond Calabi–Yau three-folds, for instance to higher-dimensional Calabi–Yau manifolds, exceptional holonomy, or to settings with additional fluxes or torsion. E.g. can the special self-adjoint connections on $\End(TX)$ be used in heterotic string theory, perhaps analogusly to the Standard Embedding \cite{Candelas:1985en}? The structural simplicity of the deformed action after suitable field redefinitions suggests that the essential mechanism may be robust and adaptable. In this broader context, deformations of background geometric structures could provide a systematic way of generating new gauge-theoretic models whose classical and quantum properties are tightly constrained by special geometry.

\section*{Acknowledgements}
We thank Anthony Ashmore, Alex S. Arvanitakis, Philip Candelas, Xenia de la Ossa,  Henrique N. Sá Earp, Kristoffer Rørhus Flaate, Mario Garcia-Fernandez, Pedram Hekmati, Johanna Knapp, Jason D. Lotay, Jock McOrist, Javier José Murgas Ibarra, Paul-Konstantin Oehlmann, Fabian Ruehle,  Maik Sarve, and Charles Strickland-Constable for useful and inspiering discussions surrounding the work presented here.

\appendix

\section{Some special geometry relations}
\label{app:spacial}
In this appendix, we review some useful relations for the special geometry of the complex structure moduli space, that we have used throughout this paper. The reader is referred to \cite{Candelas:1990pi, Strominger:1990pd} for more complete introductions. 

We begin by recalling the definition of the Kähler potential
\begin{equation}
\label{eq:KahlerPotential}
    e^{-K}=i\int_X\Omega\wedge\bar\Omega\:.
\end{equation}
A variation of the $(3,0)$-form may then be written as
\begin{equation}
    \del_A\Omega=K_A\Omega+\chi_A\:,
\end{equation}
where $K_A$ corresponds to the $(3,0)$-part of the variation, and 
\begin{equation}
    \chi_A=\tfrac12\,\Omega_{abc}h_A^a\,\dd z^{bc}\:,
\end{equation}
where 
\begin{equation}
    h_A^a=\del_A(\dd z^a)
\end{equation}
is a $T^{1,0}X$-valued $(0,1)$-form corresponding to a deformation of a complex structure. In harmonic gauge, which we are working in throughout this paper, $K_A$ is constant on the manifold $X$, while $\chi_A$ and $h_A$ are harmonic.

Using the expression for the Kähler potential \eqref{eq:KahlerPotential}, one finds that
\begin{equation}
    \del_AK=-ie^K\int_X(\del_A\Omega)\wedge\bar\Omega=-K_A\:.
\end{equation}
It follows that
\begin{equation}
    \chi_A={\cal D}_A\Omega=\del_A\Omega+\del_AK\,\Omega\:,
\end{equation}
where one view $\Omega$ as a section of a holomorphic line bundle $\cal L$ over the moduli space of complex structures. Here $\del_AK$ gives the connection one-form of the covariant derivative $\cal D$ on this line bundle. 

The Kähler potential gives rise to a Kähler metric on the complex structure moduli space, given by
\begin{equation}
    G_{A\bar B}=\del_A\del_{\bar B}K=-ie^K\int_X\chi_A\wedge\bar\chi_{\bar B}\:.
\end{equation}
This metric is used to extend $\cal D$ to other tensors on the complex structure moduli space. For example
\begin{equation}
    {\cal D}_A\chi_B=\del_A\chi_B+\del_A K\,\chi_B-\Gamma_A{}^C{}_B\,\chi_C\:,
\end{equation}
where $\Gamma_A{}^C{}_B$ are the connection symbols of the Levi-Civita connection of $G_{A\bar B}$.

We then recall some useful relations in special geometry
\begin{align}
    {\cal D}_A\Omega&=\chi_A\\
    {\cal D}_B{\cal D}_A\Omega&=-ie^K\kappa_{ABC}G^{C\bar D}\bar\chi_{\bar D}
    \label{eq:Cov2}\\
    {\cal D}_{A}\bar\chi_{\bar B}=\del_{A}\bar\chi_{\bar B}&=G_{A\bar B}\,\bar\Omega\:,
    \label{eq:Cov3}
\end{align}
where 
\begin{equation}
\kappa_{ABC}=\int_X\del_A\del_B\del_C\Omega\wedge\Omega=\int_X\Omega(h_A,h_B,h_C)\wedge\Omega=\int_X\tfrac{1}{3!}h_A^ah_B^bh_C^c\Omega_{abc}\wedge\Omega
\end{equation}
are the Yukawa couplings. 

We can use the above relations to derive some useful results, used throughout this paper. Let's first expand $h^a$ in the basis of harmonic forms $\{h_A^a\}$ as
\begin{equation}
    h^a=Z^A\,h_A^a\:.
\end{equation}
We then have the following
\begin{proposition}
\label{prop:h-const}
    Let $\vert h\vert$ be given by
    \begin{equation}
        \Omega(h,h,h)=\tfrac{1}{3!}h^ah^bh^c\Omega_{abc}=\vert h\vert\bar\Omega\:.
    \end{equation}
    Then $\vert h\vert$ is a constant on the manifold $X$.
\end{proposition}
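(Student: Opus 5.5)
The plan is to show that $\vert h\vert$ is a harmonic function on the compact manifold $X$, hence constant. Start from the pointwise definition: $\Omega(h,h,h)$ is a $(0,3)$-form, and $\bar\Omega$ is a nowhere-vanishing section of $\Lambda^{0,3}$, so $\vert h\vert$ is a well-defined smooth function. It suffices to show $\del\vert h\vert=0$. Indeed, $\vert h\vert$ is then antiholomorphic, i.e. its conjugate is a holomorphic function on the compact $X$, so by the maximum principle it is constant.

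\textbf{Step 1: $\del$-closedness of $\Omega(h,h,h)$.} Write $\Omega(h,h,h)=\tfrac{1}{3!}h^ah^bh^c\Omega_{abc}$. In local holomorphic coordinates $\Omega_{abc}=f\,\epsilon_{abc}$ with $f$ holomorphic, so $\del$ acts only on $f$ and on the components $h^a_{\bar b}$. On a Ricci-flat Kähler manifold, $\Omega$ is parallel for the Chern connection. Using this, $\del\Omega(h,h,h)$ is proportional to $(\nabla_d h^a)h^bh^c\Omega_{abc}\,\dd z^d$, with the derivative covariantized. The key input is harmonicity of $h$. In harmonic gauge, $\delb h=0$ and $\delb^\dagger h=0$. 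On a CY, contracting with $\Omega$ identifies $h$ with the $(2,1)$-form $\chi=\Omega(h)$, which is harmonic, so $\dd\chi=0$ and in particular $\del\chi=0$. The identity $\del\chi=0$ says precisely that the divergence $\nabla_a h^a_{\bar b}=0$. This is not yet the full covariant derivative, so a further argument is needed.

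\textbf{Step 2 (main obstacle): control of $\nabla_d h^a$.} The condition $\nabla_a h^a=0$ alone does not kill $h^bh^c\Omega_{abc}\nabla_d h^a$ for all $d$. I would instead use the special geometry relations already recalled. Since $h=Z^Ah_A$ with constant $Z^A$, we have $\Omega(h,h,h)=Z^AZ^BZ^C\,\Omega(h_A,h_B,h_C)$. Each $\Omega(h_A,h_B,h_C)$ is (up to lower-order terms) the third-order variation $\del_A\del_B\del_C\Omega$ projected to type $(0,3)$. By \eqref{eq:Cov2} and \eqref{eq:Cov3}, ${\cal D}_C{\cal D}_B{\cal D}_A\Omega=-ie^K\kappa_{ABD}G^{D\bar E}G_{C\bar E}\bar\Omega+\dots$, whose $(0,3)$ component is $-ie^K\kappa_{ABC}\bar\Omega$. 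Here $\kappa_{ABC}$, $K$ are constants on $X$. So I would show that the $(0,3)$-component of the third covariant derivative of $\Omega$ in moduli directions equals $\Omega(h_A,h_B,h_C)$. This holds by expanding $\Omega$ in the deformed coordinates $\dd z^a+t\,h^a$ as in the paper's definition of $\Delta\Omega$: the $t^3$ coefficient of the deformed top-form is exactly $\Omega(h,h,h)$. Moreover, the type-$(0,3)$ part is unaffected by the connection terms, which only add lower-type pieces. This gives $\Omega(h_A,h_B,h_C)=c_{ABC}\bar\Omega$ with $c_{ABC}$ constant. Hence $\vert h\vert=c_{ABC}Z^AZ^BZ^C$ is constant.

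\textbf{Normalisation check and fallback.} Wedging with $\Omega$ and integrating confirms that $c_{ABC}\int\bar\Omega\wedge\Omega=\kappa_{ABC}$, consistent with \eqref{eq:vh-Yuk}. If the identification in Step 2 proves delicate, the fallback is Hodge-theoretic. $\Omega(h,h,h)$ is $\delb$-closed for type reasons, since it is a $(0,3)$-form. Its $\delb^\dagger$-closedness reduces, via the Hodge star sending $(0,3)$-forms to functions times $\bar\Omega$, to $\del\vert h\vert=0$, which is again the Step 1 computation. I expect verifying this derivative identity, rather than the maximum-principle conclusion, to be the real work.
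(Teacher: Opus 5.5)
Your Step 2 is essentially the paper's own argument: take the $(0,3)$-part of ${\cal D}_C{\cal D}_B{\cal D}_A\Omega$ using \eqref{eq:Cov2}--\eqref{eq:Cov3}. It has a genuine gap, and the paper's proof has the same one.

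The special-geometry relations are identities between classes in $H^3(X,\mathbb{C})$, obtained by differentiating periods. As identities of forms they hold only modulo exact terms, which depend on how nearby fibres are identified. Your identification of the left-hand side is fine: in harmonic gauge the $(0,3)$-part of the third derivative of $\Omega$ is pointwise a nonzero multiple of $\Omega(h_A,h_B,h_C)$, and connection terms do not contribute. But the right-hand side then only fixes its Dolbeault class, $\Omega(h_A,h_B,h_C)=-ie^K\kappa_{ABC}\bar\Omega+\delb\beta_{ABC}$ for some $(0,2)$-forms $\beta_{ABC}$. ``Taking $(0,3)$-parts'' silently sets $\delb\beta_{ABC}=0$. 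That is exactly the pointwise statement $\del\,\Omega(h,h,h)=0$ of your Step 1, which you rightly saw does not follow from $\nabla_a h^a{}_{\bar b}=0$. Your fallback returns to the same identity, so it is circular. (Via the Tian--Todorov lemma one finds $\del\,\Omega(h,h,h)\propto\iota_{[h,h]\wedge h}\Omega$, and for harmonic $h$ one only knows that $[h,h]$ is $\delb$-exact, not that it vanishes.)

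This missing step cannot be supplied in general. Fix $x\in X$. Evaluation at $x$ maps ${\cal H}^{0,1}(T^{1,0}X)$ into the $9$-dimensional space $(T^{*0,1}X\otimes T^{1,0}X)_x$. So if $h^{2,1}>9$, some $h_1\neq0$ has $h_1(x)=0$, and then $\Omega(h_1,h_2,h_3)(x)=0$ for all $h_2,h_3$ by pointwise trilinearity. Now suppose $\vert h\vert$ were constant for every harmonic $h$. Pairing with $\Omega$ would give $\vert h\vert=-ie^K\kappa(h,h,h)$. Polarizing would give $\Omega(h_1,h_2,h_3)=-ie^K\kappa(h_1,h_2,h_3)\bar\Omega$ pointwise, and hence $\kappa(h_1,\cdot,\cdot)=0$.

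On the quintic ($h^{2,1}=101$) the Yukawa coupling is, up to a nonzero constant, multiplication $R_5\otimes R_5\otimes R_5\to R_{15}\cong\mathbb{C}$ in the Jacobian ring. Macaulay duality together with $R_{10}=R_5\cdot R_5$ shows that no nonzero $h_1$ satisfies $\kappa(h_1,\cdot,\cdot)=0$. Hence there are harmonic $h$ on the quintic with non-constant $\vert h\vert$.

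What the special-geometry argument does prove is the cohomological statement $[\Omega(h,h,h)]=-ie^K\kappa\,[\bar\Omega]$ in $H^{0,3}(X)$. Equivalently, the $\bar\Omega\wedge\Omega$-weighted average of $\vert h\vert$ over $X$ equals $-ie^K\kappa$.
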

\begin{proof}
    From equation \eqref{eq:Cov2} we get
    \begin{align}
        {\cal D}_C{\cal D}_B{\cal D}_A\Omega&={\cal D}_C\left(-ie^K\kappa_{ABD}G^{D\bar D}\right)\bar\chi_{\bar D}-ie^K\kappa_{ABD}G^{D\bar D}\del_C\bar\chi_{\bar D}\notag\\
        &={\cal D}_C\left(-ie^K\kappa_{ABD}G^{D\bar D}\right)\bar\chi_{\bar D}-ie^K\kappa_{ABC}\,\bar\Omega\:,
    \end{align}
    where we have used equation \eqref{eq:Cov3}. Taking the $(0,3)$-parts on both sides, we see that
    \begin{equation}
        \Omega(h_A,h_B,h_C)=\tfrac{1}{3!}h_A^ah_B^bh_C^c\Omega_{abc}=-ie^K\kappa_{ABC}\bar\Omega.
    \end{equation}
    It follows that
    \begin{equation}
        \vert h\vert\bar\Omega=\Omega(h,h,h)=-ie^K\kappa\,\bar\Omega\:,
    \end{equation}
    where $\kappa=\kappa_{ABC}Z^AZ^BZ^C$. That is
    \begin{equation}
        \vert h\vert=-ie^K\kappa\:,
    \end{equation}
    which is constant on $X$.
\end{proof}
From the above proof, it is also clear that $\Omega(h_A,h_B,h_C)$ are harmonic $(0,3)$-forms, and therefore proportial to $\bar\Omega$, with proportionality constant $-ie^K\kappa_{ABC}$.

Next we have the following proposition\footnote{We thank Xenia de la Ossa for making us aware of this useful relation.} 
\begin{proposition}
\label{prop:GAB}
    The Käher metric $G_{A\bar B}$ can be expressed in terms of $h_A^a$ as
    \begin{equation}
        G_{A\bar B}=h^a_{A\,\bar b}\bar h^{\bar b}_{\bar B\,a}\:,
    \end{equation}
    which is then a constant on $X$.    
\end{proposition}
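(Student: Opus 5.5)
We need to show $G_{A\bar B}=h^a_{A\,\bar b}\bar h^{\bar b}_{\bar B\,a}$ and that this pointwise contraction is constant on $X$. The plan is to relate the integrand of $G_{A\bar B}=-ie^K\int_X\chi_A\wedge\bar\chi_{\bar B}$ to the pointwise contraction of $h_A$ with $\bar h_{\bar B}$, and then prove constancy separately.

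\textbf{Pointwise algebra.} Insert $\chi_A=\tfrac12\Omega_{abc}h^a_{A\,\bar d}\,\dd z^{bc}\wedge\dd z^{\bar d}$ and its conjugate. The wedge product $\chi_A\wedge\bar\chi_{\bar B}$ is a $(3,3)$-form. Using the Calabi--Yau identity $\Omega_{abc}\bar\Omega_{\bar a\bar b\bar c}\propto\vert\vert\Omega\vert\vert^2 g_{a\bar a}g_{b\bar b}g_{c\bar c}$ antisymmetrised, we reduce the $\epsilon$-contractions. The result is
\begin{equation*}
\chi_A\wedge\bar\chi_{\bar B}=c_0\,\vert\vert\Omega\vert\vert^2\,h^a_{A\,\bar b}\bar h^{\bar b}_{\bar B\,a}\,{\rm dvol}_g+(\text{trace terms}),
\end{equation*}
where the trace terms involve $h^a_{A\,\bar b}g^{\bar b c}$ contracted against the antisymmetric pairing. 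Here one uses that harmonicity makes $\chi_A$ primitive, equivalently $h_{A\,ab}$ is symmetric, as used in Proposition \ref{prop:twist}. For symmetric $h_{ab}$ the two possible index contractions coincide, so the $\epsilon\epsilon$ expansion collapses to a single term, up to a sign. Similarly $\Omega\wedge\bar\Omega=c_0'\vert\vert\Omega\vert\vert^2{\rm dvol}_g$, with $\vert\vert\Omega\vert\vert$ constant on a Ricci-flat Kähler $X$.

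\textbf{Normalisation.} Combining with $e^{-K}=i\int\Omega\wedge\bar\Omega$ gives
\begin{equation*}
G_{A\bar B}=\frac{\int_X h^a_{A\,\bar b}\bar h^{\bar b}_{\bar B\,a}\,{\rm dvol}_g}{{\rm Vol}(X)}\times(\text{numerical constant}).
\end{equation*}
Fixing conventions so that this constant is $1$ requires checking the combinatorial factors. I would do that in flat local coordinates with $\Omega=\dd z^{123}$. So $G_{A\bar B}$ is the average of the contraction over $X$.

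\textbf{Constancy (the main obstacle).} Set $f=h^a_{A\,\bar b}\bar h^{\bar b}_{\bar B\,a}$, which is a function, and show $\delb f=0$ and $\del f=0$.
\begin{itemize}
\item First, $h_A$ is $\delb$-closed, and harmonic-plus-symmetric implies $\delb^\dagger h_A=0$. For symmetric tensors on a Kähler Ricci-flat manifold, I expect this to give $\nabla_{\bar c}h^a_{A\,\bar b}$ symmetric in $\bar c,\bar b$ and $\nabla_c h^a_{A\,\bar b}$ with vanishing divergence.
\item A cleaner route: $\delb\Omega(h_A)=0=\del\Omega(h_A)$. Via the isomorphism with $\Omega$, $\chi_A$ harmonic and primitive implies it is parallel-like in the sense of Bochner. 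Concretely, the Bochner--Weitzenböck formula for $(0,1)$-forms valued in $T^{1,0}X$ on a Ricci-flat Kähler manifold reduces $\Delta_{\delb}$ to the rough Laplacian $\nabla^{\bar c}\nabla_{\bar c}$ with curvature terms. Applied to $h_A$, this should yield $\nabla_{\bar c}h_A=0$ after integration.
\item Then $\delb f=(\nabla h_A)\bar h_{\bar B}+h_A\nabla\bar h_{\bar B}$. The first term vanishes by $\nabla_{\bar c}h_A=0$. The second is the conjugate of $\del$ acting on $\bar h_A$, which is controlled by $\delb^\dagger h=0$.
\end{itemize}
If the Bochner argument fails, because curvature terms need not vanish for non-flat metrics, the fallback is the following. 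Show $\Delta f\ge0$ or $\Delta f=0$ using only $\delb h=\delb^\dagger h=0$, then apply the maximum principle on compact $X$. The function $f$ is then constant, hence equal to its average $G_{A\bar B}$.
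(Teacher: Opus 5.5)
Your first two steps are sound, and simpler than you make them. Contracting the $\epsilon$-symbols in $\Omega_{abc}\bar\Omega_{\bar a\bar b\bar c}$ gives the purely algebraic identity $\chi_A\wedge\bar\chi_{\bar B}=-h^a_{A\,\bar b}\bar h^{\bar b}_{\bar B\,a}\,\Omega\wedge\bar\Omega$. No trace terms appear, and neither the metric nor primitivity is used. Hence $G_{A\bar B}=-ie^K\int_X\chi_A\wedge\bar\chi_{\bar B}$ is exactly the $\Omega\wedge\bar\Omega$-average of $f_{A\bar B}=h^a_{A\,\bar b}\bar h^{\bar b}_{\bar B\,a}$, with unit coefficient.

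The genuine gap is the constancy step, and it cannot be closed. Your Bochner argument fails for two reasons:
\begin{itemize}
\item For Ricci-flat $g$, the Weitzenb\"ock formula for $T^{1,0}X$-valued $(0,1)$-forms loses only the Ricci term acting on the form index. The full curvature of $T^{1,0}X$ acting on the bundle index survives, so harmonicity does not give $\nabla_{\bar c}h_A=0$.
\item That conclusion would in fact make $h_A^{ab}=g^{b\bar d}h^a_{A\,\bar d}$ a holomorphic section of ${\rm Sym}^2T^{1,0}X$. By Bochner's theorem on a compact Ricci-flat K\"ahler manifold it would then be parallel, hence zero for holonomy exactly $SU(3)$. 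It would kill every complex structure deformation.
\end{itemize}
The maximum-principle fallback fails because $f_{A\bar B}$ is not constant in general. At each point $x$, $f_{A\bar B}(x)=\sum_{a,\bar b}h^a_{A\,\bar b}(x)\,\bar h^{\bar b}_{\bar B\,a}(x)$ is a sum of $9$ rank-one matrices. (Using the symmetry of $h_{\bar a\bar b}$, it is the Gram matrix of $h^{2,1}$ vectors in the $6$-dimensional fibre ${\rm Sym}^2$.) So its rank is at most $9$. But $G_{A\bar B}$ is positive definite of rank $h^{2,1}$, which is $101$ for the quintic. Pointwise equality, and hence constancy, is impossible in such cases.

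The paper avoids any constancy argument. It reads $\del_{\bar B}\chi_A=G_{A\bar B}\Omega$ as an identity of forms and compares $(3,0)$-parts. The left side's $(3,0)$-part is $f_{A\bar B}\Omega$ pointwise, so constancy appears to be inherited from $G_{A\bar B}$. However, that special-geometry relation holds only in cohomology. It follows from $\del_{\bar B}[\Omega]=0$ and $\del_A\del_{\bar B}K=G_{A\bar B}$, so what one actually has is $\del_{\bar B}\chi_A=G_{A\bar B}\Omega+\dd\lambda$. The correct $(3,0)$-comparison is then $f_{A\bar B}\Omega=G_{A\bar B}\Omega+\del\lambda^{2,0}$. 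Wedging with $\bar\Omega$ and integrating returns precisely your averaged formula:
\[
G_{A\bar B}=\frac{\int_X h^a_{A\,\bar b}\bar h^{\bar b}_{\bar B\,a}\,\Omega\wedge\bar\Omega}{\int_X\Omega\wedge\bar\Omega}\,,
\]
This is what your proposal already establishes, and it is what is actually provable. You should drop the pointwise/constancy claim rather than try to repair its proof, since by the rank count above it is false in general.
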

\begin{proof}
    Consider equation \eqref{eq:Cov3}, which can be written as
    \begin{equation}
    \label{eq:Cov3-2}
        \del_{\bar B}\chi_A=G_{A\bar B}\,\Omega\:.
    \end{equation}
    The $(3,0)$-part of the left hand side is then given by
    \begin{equation}
        \tfrac12\Omega_{abc}h_{A\,\bar b}^a\,\del_{\bar B}(d \bar z^{\bar b})\dd z^{bc}=\tfrac12\Omega_{abc}h_{A\,\bar b}^a\bar h^{\bar b}_{\bar B\,d}\dd z^{dbc}=h^a_{A\,\bar b}\bar h^{\bar b}_{\bar B\,a}\,\Omega\:,
    \end{equation}
    where the last equality follows from some elementary index manipulation. Comparing with equation \eqref{eq:Cov3-2} gives the result. 
\end{proof}

\section{Geometric properties of $h_{\bar a}^b$}
\label{app:Geometry}
In this appendix, using special geometry, we note some useful geometric features related to the deformation parameter $h$, and its associated geometric properties. 

We begin by noting some useful facts about the deformation parameter $h$, which we shall take to be {\it harmonic} with respect to the background Calabi-Yau metric $g$. Firstly,
\begin{equation}
    h_{\bar a\bar b}=g_{\bar a c}{h_{\bar b}}^c
\end{equation}
is {\it symmetric} in its indices. Indeed the anti-symmetric part are given by harmonic $(0,2)$-forms, which vanish on a Calabi-Yau of full $SU(3)$ holonomy. 

As stated in the text, we take ${h_{\bar a}^b}$ to be an invertible matrix, with "determinant" given by $\vh\neq0$, where
\begin{equation}
    \Omega(h,h,h)=\tfrac{1}{3!}h^ah^bh^c\Omega_{abc}=\vh\,\bar\Omega\:.
\end{equation}
Special geometry dictates that $\vh$ is a constant on the manifold $X$, as shown in Proposition \ref{prop:h-const} above. The inverse of ${h_{\bar a}^b}$ is ${h_{c}^{\bar d}}$, where
\begin{equation}
    {h_{c}^{\bar a}}{h_{\bar a}^b}=\delta_c^b\:.
\end{equation}
Let's expand $h$ in a background basis $\{h_A\}$ of harmonic forms ${\cal H}^{0,1}(T^{1,0}X)$. That is
\begin{equation}
    h^a=Z^B\,h^a_{B}\:.
\end{equation}
We have the following
\begin{proposition} \label{prop:inverse}
The inverse of the matrix ${h_{\bar a}^b}$ is given by the following formula
    \begin{equation}
\label{eq:h-Inv}
        {h_{b}^{\bar c}}=\frac{3}{\kappa}{\bar h}_{\bar B\,b}^{\bar c}\,G^{\bar B C}\kappa_{C}\:.
    \end{equation}
Here  $G_{\bar A B}$ is the metric on the Calabi-Yau moduli space, while $\kappa$ is the Yukawa coupling
    \begin{equation}
    \kappa={\rm Yuk}(h,h,h)=Z^AZ^BZ^C\kappa_{ABC}\:.
    \end{equation}
We also define $\kappa_A=\kappa_{ABC}Z^bZ^C$ and $\kappa_{AB}=\kappa_{ABC}Z^C$.
\end{proposition}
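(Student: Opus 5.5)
The plan is to identify $\Omega(h,h)$ with the adjugate of $h^a_{\bar b}$ and to use special geometry to expand it in the basis $\{\bar\chi_{\bar B}\}$ of harmonic $(1,2)$-forms. Dividing by the determinant $\vh$ will then give the inverse, with the Yukawa factors emerging automatically.

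\emph{Step 1 (special geometry side).} Start from \eqref{eq:Cov2}, ${\cal D}_B{\cal D}_A\Omega=-ie^K\kappa_{ABC}G^{C\bar D}\bar\chi_{\bar D}$. The $(1,2)$-part of ${\cal D}_B{\cal D}_A\Omega$ is, up to a fixed combinatorial constant, $\Omega(h_A,h_B)=\tfrac12 h_A^ah_B^b\Omega_{abc}\dd z^c$, because the lower-order pieces only contribute to the $(3,0)$ and $(2,1)$ parts. This is the same bookkeeping as in the proof of Proposition \ref{prop:h-const}, one order lower. Contracting with $Z^AZ^B$ should give
\begin{equation*}
\Omega(h,h)=-ie^K\,\kappa_C\,G^{C\bar D}\,\bar\chi_{\bar D}\times(\text{numerical factor}).
\end{equation*}
Equivalently, $\Omega(h,h)$ is a harmonic $(1,2)$-form, hence a combination of the $\bar\chi_{\bar D}$, and its coefficients can be read off by pairing with $\chi_C$ and using $\int\chi_C\wedge\bar\chi_{\bar D}\propto G_{C\bar D}e^{-K}$. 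I would use the pairing as a cross-check on the constant.

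\emph{Step 2 (pointwise linear algebra).} Write both sides as $\bar\Omega$ contracted with a matrix. We have $\bar\chi_{\bar D}=\tfrac12\bar\Omega_{\bar a\bar b\bar c}\,\bar h^{\bar a}_{\bar D\,d}\,\dd z^d\wedge\dd\bar z^{\bar b\bar c}$, up to the metric identifications used in Proposition \ref{prop:GAB}. For $\Omega(h,h)$, expand $h^ah^b\Omega_{abc}$ and use $\Omega_{abc}\propto\epsilon_{abc}$ and $\bar\Omega\propto\epsilon$. This exhibits $\Omega(h,h)$ as $\bar\Omega$ contracted with the cofactor matrix of $h^a_{\bar b}$, i.e.\ with $\vh\,h_d^{\bar a}$ up to a factor. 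Here $\vh$ is normalised by $\Omega(h,h,h)=\vh\bar\Omega$. Since contraction with $\bar\Omega$ on one index is an isomorphism $T^{*1,0}\otimes\overline{T^{1,0}}\to\Omega^{1,2}$, comparing coefficients yields
\begin{equation*}
\vh\,h_b^{\bar c}\propto -ie^K\kappa_C G^{C\bar B}\bar h^{\bar c}_{\bar B\,b}.
\end{equation*}

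\emph{Step 3.} Substitute $\vh=-ie^K\kappa$ from Proposition \ref{prop:h-const}. The factors $-ie^K$ cancel and we obtain \eqref{eq:h-Inv}.

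For the overall constant, I would check self-consistency by contracting both sides with $h^b_{\bar c}$. The right-hand side becomes $\tfrac{3}{\kappa}Z^AG_{A\bar B}G^{\bar BC}\kappa_C=\tfrac{3}{\kappa}Z^C\kappa_C=3$, using Proposition \ref{prop:GAB}. This matches the trace of $\delta^b_c$, which is $3$, and so fixes the factor $3$.

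The main obstacle is Step 2 together with the normalisations. One has to track the factorials in $\Omega(h,h)$, $\Omega(h,h,h)$ and $\kappa_{ABC}$, and also the index placement relating $\bar h^{\bar c}_{\bar B\,b}$ in the $\bar\chi$ expansion to the matrix appearing in the adjugate. I expect this to need an explicit $\epsilon$-tensor computation in a local frame where $\Omega_{123}=1$. The trace check above at least fixes the overall constant independently.
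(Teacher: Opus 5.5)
Your proposal is correct and follows essentially the same route as the paper: take the $(1,2)$-part of \eqref{eq:Cov2}, recognise $\tfrac12 h^a_{\bar a}h^b_{\bar b}\Omega_{abc}$ as $\bar\Omega$ contracted with the adjugate of $h$ (hence proportional to $\vh$ times the inverse), and fix the overall constant by tracing against $h$ using Proposition \ref{prop:GAB}. Your explicit substitution of $\vh=-ie^K\kappa$ from Proposition \ref{prop:h-const}, which makes the $-ie^K$ factors cancel, spells out a step that the paper leaves inside its proportionality constant.
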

\begin{proof}
Using equation \eqref{eq:Cov2} above, we get
\begin{equation}
    \tfrac12h_{\bar a}^ah^b_{\bar b}\,\Omega_{abc}=-ie^K\kappa_CG^{C\bar D}\tfrac12{\bar h}_{\bar D\,c}^{\bar e}\bar\Omega_{\bar e\bar a\bar b}\:.
\end{equation}
Note that we must also have
\begin{equation}
    h_{c}^{\bar d} \propto \Omega^{\bar d\bar a\bar b}h_ {\bar a}^ah_{\bar b}^b\Omega_{abc}\:,
\end{equation}
since
\begin{equation}
    \Omega^{\bar d\bar a\bar b}h_ {\bar a}^ah_{\bar b}^b\Omega_{abc}h_{\bar e}^c\propto\vert h\vert\Omega^{\bar d\bar a\bar b}\bar\Omega_{\bar a\bar b\bar e}\propto\delta_{\bar e}^{\bar d}\:.
\end{equation}
It follows that
\begin{equation}
    h_{a}^{\bar b}\propto-ie^K\bar h_{\bar B\,a}^{\bar b}{\kappa^{\bar B}}\:.
\end{equation}
To find the proportionality constant, we use the relations
\begin{equation}
    h_{\bar b}^ah_a^{\bar b}=3\:,\;\;\;\bar h_{\bar B\,a}^{\bar b}h_{D\,\bar b}^a=G_{\bar BD}\:,
\end{equation}
where the last relation again follows from special geometry, see Proposition \ref{prop:GAB}.
\end{proof}

Note that we can also write the inverse as
\begin{equation}
    h_{a}^{\bar b}=\bar h_{\bar B\,a}^{\bar b}G^{\bar B D}\del_B{\cal G}\:,
\end{equation}
where $\cal G=\log(\kappa)$, and the derivative $\del_B$ is with respect to $Z^A$.

\section{Connections and Curvature}
\label{app:connection}
We can use the complex structure deformation $h\in\Omega^{0,1}(T^{1,0}X)$, where $\vh\neq0$, to define a "Chern-type" connection $\tilde\nabla_a$ on holomorphic tangent and co-tangent indices. This connection was first defined in \cite{MurgasIbarra:2024phy}, and acts for example on $\alpha\in\Omega^{0,p}(T^{*1,0}X)$ as
\begin{equation}
    \tilde\nabla_a\alpha_b={h_b}^{\bar c}\partial_a\left({h_{\bar c}}^d\alpha_d\right)\:.
\end{equation}
Here ${h_b}^{\bar c}$ is the inverse of ${h_{\bar c}}^d$, which exists whenever $\vh\neq0$. Here $\vh$ is given as
\begin{equation}
    \Omega(h,h,h)=\tfrac{1}{3!}h^ah^bh^c\,\Omega_{abc}=\vh\,\bar\Omega\:,
\end{equation}
which, as shown in Appendix \ref{app:spacial}, is a constant on $X$. On $\beta\in\Omega^{0,p}(T^{1,0}X)$ the connection acts as
\begin{equation}
\tilde\nabla_a\beta^b={h_{\bar c}}^b\del_a\left({h_{c}}^{\bar c}\beta^c\right)\:.    
\end{equation}
A similar connection $\tilde\nabla_{{\bar a}}$ can be defined on anti-holomorphic indices.\footnote{Here one can either use $h$ and the inverse of $h$ or the complex conjugate of $h$ to define the connection. However, the corresponding two connections only agree for "special directions" as discussed in section \ref{sec:EndTtheory}.}

Clearly the connection is "metric" with respect to $h$. Furthermore, the connection also preserves the holomorphic top-form, that is
\begin{proposition}
    The holomorphic top-form $\Omega$ is $\tilde\nabla$-paralell. That is,
    \begin{equation}
        \tilde\nabla_a\Omega_{bcd}=0\:.
    \end{equation}
\end{proposition}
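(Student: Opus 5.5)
The plan is to use the explicit form of $\tilde\nabla$ on covariant indices. Since $\Omega_{bcd}$ carries three holomorphic co-tangent indices, the definition $\tilde\nabla_a\alpha_b={h_b}^{\bar c}\del_a({h_{\bar c}}^d\alpha_d)$ extended index by index gives
\begin{equation*}
    \tilde\nabla_a\Omega_{bcd}={h_b}^{\bar e}{h_c}^{\bar f}{h_d}^{\bar g}\,\del_a\Big({h_{\bar e}}^{e}{h_{\bar f}}^{f}{h_{\bar g}}^{g}\,\Omega_{efg}\Big)\:.
\end{equation*}
The object in parentheses is exactly the component of $3!\,\Omega(h,h,h)$, namely ${h_{\bar e}}^{e}{h_{\bar f}}^{f}{h_{\bar g}}^{g}\Omega_{efg}=\vh\,\bar\Omega_{\bar e\bar f\bar g}$ (up to the normalisation fixed by $\Omega(h,h,h)=\tfrac{1}{3!}h^ah^bh^c\Omega_{abc}=\vh\bar\Omega$). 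So the key step is to rewrite $\tilde\nabla_a\Omega$ as ${h_b}^{\bar e}{h_c}^{\bar f}{h_d}^{\bar g}\,\del_a(\vh\,\bar\Omega_{\bar e\bar f\bar g})$.

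Next I invoke Proposition \ref{prop:h-const}: $\vh=-ie^K\kappa$ is constant on $X$, so $\del_a\vh=0$. It then remains to show $\del_a\bar\Omega_{\bar e\bar f\bar g}=0$. In holomorphic coordinates $\Omega=f\,\dd z^{123}$ with $f$ holomorphic, and $\bar\Omega_{\bar 1\bar2\bar3}=\bar f$ is antiholomorphic, hence annihilated by $\del_a$. Thus the bracket is $\del_a$-constant and $\tilde\nabla_a\Omega_{bcd}=0$.

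The only point requiring care (the main obstacle, though mild) is that $\del_a$ here is a plain coordinate derivative acting on tensor components, so one must check the extension of $\tilde\nabla$ to tensors of higher rank is indeed the conjugation by $h$ on each index (Leibniz rule makes this automatic since each factor $h\,\del\,h^{-1}$ acts per index), and that the identity $\Omega(h,h,h)=\vh\bar\Omega$ holds componentwise including combinatorial factors; any constant normalisation is irrelevant since it is annihilated by $\del_a$ anyway. Alternatively one can phrase it invariantly: $\tilde\nabla_a=H^{-1}\circ\del_a\circ H$ with $H$ the map $h$, and $H\Omega\propto\vh\bar\Omega$ is $\del$-closed in each holomorphic derivative direction.
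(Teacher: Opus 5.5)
Your proposal is correct and follows the paper's own argument. You write $\tilde\nabla_a\Omega_{bcd}$ as $h$-conjugation on each index of $\del_a\big({h_{\bar e}}^{e}{h_{\bar f}}^{f}{h_{\bar g}}^{g}\Omega_{efg}\big)$, identify the bracket with $\vh\,\bar\Omega_{\bar e\bar f\bar g}$ (this identity is in fact exact, not merely up to normalisation), and then use that $\vh$ is constant and that $\bar\Omega$ has anti-holomorphic components; your remarks on the Leibniz extension and on $\del_a\bar\Omega_{\bar e\bar f\bar g}=0$ only make explicit what the paper leaves implicit.
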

\begin{proof}
The proof is a straight forward computation.
\begin{equation}
    \tilde\nabla_a(\Omega_{bcd})={h_{b}}^{\bar b}{h_{c}}^{\bar c}{h_{d}}^{\bar d}\partial_a\left({h_{\bar b}}^{b}{h_{\bar c}}^{c}{h_{\bar d}}^{d}\Omega_{bcd}\right)\propto\vh{h_{b}}^{\bar b}{h_{c}}^{\bar c}{h_{d}}^{\bar d}\partial_a(\bar\Omega_{\bar b \bar c \bar d})=0\:,
\end{equation}
where we have used that ${h_{\bar b}}^{b}{h_{\bar c}}^{c}{h_{\bar d}}^{d}\Omega_{bcd}\propto\vh\Omega_{\bar b \bar c \bar d}$, and $\vh$ is point-wise constant.    
\end{proof}

The curvature of $\tilde\nabla$ is
\begin{equation}
    \tilde R{}_{\bar b a}{}^{c}{}_{d\phantom{\bar  b}\!\!}=\partial_{\bar b}\left({h_d}^{\bar c}\partial_a{h_{\bar c}}^c\right)\:,
\end{equation}
in close analogy with the curvature of the ordinary Chern connection. Let us make some further observations about this connection and its curvature. We first note that the curvature is "Ricci-flat". Indeed, we have
\begin{proposition}
\label{prop:Ricci}
The curvature of $\tilde\nabla$ satisfies
   \begin{equation}
    \tilde R{}_{\bar b a}{}^a{}_{d\phantom{\bar  b}\!\!}=0\:.
\end{equation} 
\end{proposition}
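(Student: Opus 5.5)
We plan to show that the trace of the connection one-form of $\tilde\nabla$ is the $\del$ of a constant, so that taking $\delb$ of it gives zero. Tracing the curvature formula over $c=d$ gives
\begin{equation*}
    \tilde R{}_{\bar b a}{}^{a}{}_{d}\Big\vert_{\text{traced}}=\partial_{\bar b}\left({h_c}^{\bar c}\partial_a{h_{\bar c}}^c\right)\:.
\end{equation*}
The key observation is that, pointwise, ${h_{\bar c}}^c$ is an invertible $3\times3$ matrix and ${h_c}^{\bar c}$ is its inverse. The standard Jacobi identity ${\rm tr}(M^{-1}\partial M)=\partial\log\det M$ then gives
\begin{equation*}
    {h_c}^{\bar c}\partial_a{h_{\bar c}}^c=\partial_a\log\det\left({h_{\bar c}}^{c}\right)\:.
\end{equation*}
Here the determinant must be taken in a way that makes sense independently of frames, which is the one subtle point.

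To make the determinant intrinsic, we use the defining relation of $\vh$: contracting $\Omega(h,h,h)=\tfrac{1}{3!}h^ah^bh^c\Omega_{abc}=\vh\,\bar\Omega$ in components gives
\begin{equation*}
    {h_{\bar a}}^a{h_{\bar b}}^b{h_{\bar c}}^c\,\Omega_{abc}=\vh\,\bar\Omega_{\bar a\bar b\bar c}\:.
\end{equation*}
In local holomorphic coordinates we write $\Omega_{abc}=f\,\epsilon_{abc}$ with $f$ holomorphic, and $\bar\Omega_{\bar a\bar b\bar c}=\bar f\epsilon_{\bar a\bar b\bar c}$. This yields $\det({h_{\bar c}}^c)=\vh\,\bar f/f$. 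Hence
\begin{equation*}
    \partial_a\log\det({h_{\bar c}}^c)=\partial_a\log\vh+\partial_a\log\bar f-\partial_a\log f=-\partial_a\log f\:,
\end{equation*}
using that $\vh$ is constant on $X$ (Proposition \ref{prop:h-const}) and that $\bar f$ is antiholomorphic. Applying $\partial_{\bar b}$ and using holomorphicity of $\log f$ (locally, $f\neq0$ since $\Omega$ is nowhere vanishing) gives zero. Equivalently, one may phrase this step via the preceding proposition $\tilde\nabla_a\Omega_{bcd}=0$: contracting it shows the trace of the connection symbols equals $-\partial_a\log f$, which is holomorphic.

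We expect the main obstacle to be bookkeeping rather than substance. One must check that the index placement in the curvature formula traces to exactly ${\rm tr}(h^{-1}\partial h)$, with the index order of inverse versus $h$ consistent with the definition of $\tilde\nabla$ on $T^{1,0}X$. One must also fix the sign and conjugation conventions in $\bar\Omega_{\bar a\bar b\bar c}=\bar f\epsilon_{\bar a\bar b\bar c}$. Once the determinant identity is established, the rest of the argument is immediate. This also shows the connection symbols' trace is locally pure gauge, consistent with the trace-free statement used in Proposition \ref{prop:TorsionFree}, in a frame where $f$ is constant.
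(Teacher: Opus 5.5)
There is a genuine gap: you prove the vanishing of a different contraction from the one in the statement. Tracing the curvature over the bundle indices $c=d$ gives $\tilde R_{\bar b a}{}^{c}{}_{c}=\partial_{\bar b}\big(h_c{}^{\bar c}\partial_a h_{\bar c}{}^c\big)$, with free indices $\bar b,a$. This is the trace of the curvature, i.e.\ its first Chern form. Your Jacobi-formula argument, using $\det(h_{\bar c}{}^c)=\vh\,\bar f/f$, correctly shows that this trace vanishes.

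The proposition, however, concerns the Ricci-type contraction of the form index with the upper bundle index. That is $\tilde R_{\bar b a}{}^{a}{}_{d}=\partial_{\bar b}\big(h_d{}^{\bar c}\partial_a h_{\bar c}{}^a\big)$, with free indices $\bar b,d$. It is also the contraction the paper uses later, as $\tilde R_a{}^a{}_c=0$, in the anomaly computation. Your first display equates the two contractions, and its inconsistent free indices already signal the problem. For a general connection these two contractions differ, and no determinant identity controls $h_d{}^{\bar c}\partial_a h_{\bar c}{}^a$.

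You can close the gap in one of two ways. The first is to import the symmetry $\tilde\Omega_a{}^c{}_d=\tilde\Omega_d{}^c{}_a$, which is torsion-freeness (Proposition \ref{prop:TorsionFree}). Then $\tilde R_{\bar b a}{}^c{}_d=\tilde R_{\bar b d}{}^c{}_a$, so $\tilde R_{\bar b a}{}^a{}_d=\tilde R_{\bar b d}{}^a{}_a$, and your computation finishes the job. But torsion-freeness is the heavier input: it rests on the $\partial$-closedness of the inverse $h^{\bar c}$ (Proposition \ref{prop:inverse}, i.e.\ special geometry), and the paper proves it only afterwards.

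The paper's route is more direct. The bracket $h_d{}^{\bar c}\partial_a h_{\bar c}{}^a$ already vanishes before $\partial_{\bar b}$ is applied, because a harmonic $h$ is divergence-free: $\partial_a h_{\bar c}{}^a=0$, which is equivalent to $\partial\Omega(h)=0$. In your more general frames, with $\Omega_{abc}=f\epsilon_{abc}$, this condition reads $\partial_a(f h_{\bar c}{}^a)=0$. It then gives $h_d{}^{\bar c}\partial_a h_{\bar c}{}^a=-\partial_d\log f$, which is holomorphic and hence killed by $\partial_{\bar b}$. This route needs neither the determinant identity nor the constancy of $\vh$.
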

\begin{proof}
    This follows since
    \begin{equation}
    \tilde R{}_{\bar b a}{}^a{}_{d\phantom{\bar  b}\!\!}=\partial_{\bar b}\left({h_d}^{\bar c}\partial_a{h_{\bar c}}^a\right)=0\:.
\end{equation}
    The last equality follows since $h$ is divergence-free $\partial_a{h_{\bar c}}^a=0$, which is equivalent to $\del\Omega(h)=0$. 
\end{proof}
Furthermore, we also have
\begin{proposition}
\label{prop:TorsionFree}
   The connection symbols 
\begin{equation}
    \tilde\Omega{}_a{}^c{}_d={h_d}^{\bar c}\partial_a{h_{\bar c}}^c
\end{equation}
are torsion-free. That is
\begin{equation}
    \tilde\Omega{}_{[a}{}^c{}_{d]}=0\:.
\end{equation} 
Furthermore, the connection symbols are trace-free. That is
\begin{equation}
    \tilde\Omega_a{}^b{}_b=0\:.
\end{equation}
\end{proposition}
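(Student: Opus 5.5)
The trace-free statement is the quicker of the two, so I will do it first. Contracting the symbols gives
\begin{equation*}
\tilde\Omega_a{}^b{}_b={h_b}^{\bar c}\partial_a{h_{\bar c}}^b\:.
\end{equation*}
Viewed pointwise as a matrix identity, this is $\tr(h^{-1}\partial_a h)=\partial_a\log\det(h^{\bar c}{}_b)$. By Proposition \ref{prop:h-const} we have $\tfrac{1}{3!}h^ah^bh^c\Omega_{abc}=\vh\bar\Omega$. In flat holomorphic coordinates, with $\Omega_{abc}=\epsilon_{abc}$ up to a constant, this gives
\begin{equation*}
\det(h)\,\bar\epsilon=\vh\,\bar\Omega\:.
\end{equation*}
Since $\vh$ is constant, $\det h$ is a constant multiple of the antiholomorphic coefficient of $\bar\Omega$. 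That coefficient is killed by $\partial_a$.

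For a coordinate-free version I would argue through the divergence instead, using the proof of Proposition \ref{prop:Ricci}. Harmonicity of $h$ gives $\partial_a h_{\bar c}{}^a=0$. The statement itself is tensorial, since the trace of a connection's symbols relative to one whose trace vanishes is a tensor. Working in coordinates where $\Omega_{abc}=\epsilon_{abc}$ and $\bar\Omega$ is constant, I would then apply Jacobi's formula for the derivative of a determinant. This gives $h_b{}^{\bar c}\partial_a h_{\bar c}{}^b=\partial_a\log\vh=0$.

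For torsion-freeness I need $h_d{}^{\bar c}\partial_a h_{\bar c}{}^c$ to be symmetric in $a,d$. I will multiply by the invertible matrix $h_{\bar e}{}^d$, so it suffices to show that
\begin{equation*}
\partial_a h_{\bar e}{}^c-h_{\bar e}{}^d h_a{}^{\bar c}\partial_d h_{\bar c}{}^c
\end{equation*}
vanishes. The key inputs are the Maurer--Cartan/Kodaira--Spencer structure and harmonicity. The relation $\delb h=0$ gives $\partial_{\bar a}h_{\bar b}{}^c$ symmetric in $\bar a\bar b$, but that concerns $\bar\partial$. The relevant fact here is the holomorphic-derivative one, the vanishing bracket $[h,h]=0$ used in Section \ref{sec:QuantFormal}. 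In components this reads
\begin{equation*}
h_{[\bar a}{}^d\partial_d h_{\bar b]}{}^c=0\:.
\end{equation*}
I will lower the $c$ index with $h^{-1}$ and contract with inverses on the $\bar a,\bar b$ indices. The bracket condition then becomes exactly $\tilde\Omega_{[a}{}^c{}_{d]}=0$, after converting $\bar a,\bar b$ to holomorphic indices via $h_a{}^{\bar a}$. I would verify that $[h,h]=0$ holds: $\Omega([h,h])\propto\partial(\Omega(h,h))$, and $\Omega(h,h)$ is harmonic by special geometry, so it is $\partial$-closed. The argument is that $\partial$ of the harmonic $(1,2)$-form $\Omega(h,h)$ vanishes, and by the Tian--Todorov lemma this is $[h,h]\lrcorner\Omega$ up to the divergence terms, which vanish since $\partial_a h^a=0$.

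The main obstacle is this index bookkeeping: showing that $h_{[\bar a}{}^d\partial_d h_{\bar b]}{}^c=0$, after contraction with $h_a{}^{\bar a}h_b{}^{\bar b}$ and lowering $c$, is precisely antisymmetrisation of $\tilde\Omega$ in its lower indices. I would carry this out explicitly. Alternatively, I would use the identity $\partial_a(h_d{}^{\bar c}h_{\bar c}{}^c)=0$ to move derivatives onto the inverse, then check symmetry directly.
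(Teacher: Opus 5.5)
Your proof is correct, granting the same special-geometry input the paper uses, but it reaches both claims differently.

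\textbf{Torsion.} The paper moves the derivative onto the inverse, $\tilde\Omega_{[a}{}^c{}_{d]}=-(\partial_{[a}h_{d]}{}^{\bar c})\,h_{\bar c}{}^c$. It then concludes because Proposition~\ref{prop:inverse} writes $h^{-1}$ as a constant-coefficient combination of the conjugates $\bar h_{\bar B}$, which is therefore $\partial$-closed. That is your fallback, completed. You instead pass to the bracket of $h$ itself. Your ``main obstacle'' is then a single contraction, and no lowering of $c$ is needed:
\begin{equation*}
h_a{}^{\bar a}h_d{}^{\bar b}\left(h_{\bar a}{}^e\partial_e h_{\bar b}{}^c-h_{\bar b}{}^e\partial_e h_{\bar a}{}^c\right)=\tilde\Omega_a{}^c{}_d-\tilde\Omega_d{}^c{}_a\:.
\end{equation*}
The bracket set to zero in Section~\ref{sec:QuantFormal} is that of the inverse $h^{\bar a}$, not of $h$. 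So it is your Tian--Todorov argument that actually carries this step.

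Since $\partial(h^{-1})=-h^{-1}(\partial h)h^{-1}$, the conditions $\partial h^{\bar c}=0$ and $[h,h]=0$ are equivalent. Given $\partial\Omega(h)=0$, both are also equivalent to $\partial\Omega(h,h)=0$, i.e.\ to the vanishing of the $(2,2)$-part of $\dd(\Omega+\Delta\Omega)$. What your route buys is this transparency: torsion-freeness is exactly the closedness the fully deformed theory presupposes. It also needs only $\partial$-closedness of $\Omega(h,h)$, not the full inverse formula. In both proofs, however, this pointwise fact is imported from the harmonicity asserted in Section~\ref{sec:Theory}, used via \eqref{eq:Cov2} in Proposition~\ref{prop:inverse}. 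It is stronger than what special geometry gives in cohomology.

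\textbf{Trace.} The paper uses torsion-freeness and $\partial_bh_{\bar c}{}^b=0$ to get $\tilde\Omega_a{}^b{}_b=\tilde\Omega_b{}^b{}_a=h_a{}^{\bar c}\partial_bh_{\bar c}{}^b=0$. Your Jacobi argument, $\tilde\Omega_a{}^b{}_b=\partial_a\log\det h$, is independent of the torsion statement but relies on Proposition~\ref{prop:h-const}.

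One correction: this trace is not tensorial. In holomorphic coordinates with $\Omega=f\,\dd z^{123}$ one has $\det h=\vh\bar f/f$, so $\tilde\Omega_a{}^b{}_b=-\partial_a\log f$. Both proofs therefore hold only in coordinates where $\Omega_{abc}$ is constant, which the paper tacitly assumes. Your first version works for that reason, not because the coefficient of $\bar\Omega$ is antiholomorphic. Your second, ``coordinate-free'' paragraph should be dropped.
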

\begin{proof}
    If ${h_{\bar c}}^a$ is harmonic, then so is its inverse, given by the expression \eqref{eq:h-Inv}. It follows that
\begin{equation}
    \tilde\Omega{}_{[a}{}^c{}_{d]}={h_{[d}}^{\bar c}\partial_{a]}{h_{\bar c}}^c=-\del_{[a}{h_{d]}}^{\bar c}{h_{\bar c}}^c=0\:,
\end{equation}
where we use that the inverse $h^{\bar c}\in\Omega^{1,0}(T^{0,1}X)$ is $\del$-closed. It then also follows that
\begin{equation}
    \tilde\Omega_a{}^b{}_b={h_b}^{\bar c}\partial_a{h_{\bar c}}^b={h_a}^{\bar c}\partial_b{h_{\bar c}}^b=0\:,
\end{equation}
as $\partial_b{h_{\bar c}}^b=0$ when ${h_{\bar c}}^b$ is harmonic. 
\end{proof}

Finally, we also have
\begin{proposition}
    The curvature is also primitive with respect to $h$. That is, the curvature satisfies equation \eqref{eq:Instanton}, which is equivalent to
    \begin{equation}
        h^{a}\tilde R_a{}^b{}_c=0\:,
    \end{equation}
    where $\tilde R_a{}^b{}_c=\dd z^{\bar d}\tilde R_{a\bar d}{\,}^b{}_c$.
\end{proposition}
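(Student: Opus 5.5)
The plan is to compute $h^a\tilde R_a{}^b{}_c$ directly from the explicit formula for the curvature. The inputs are that $h$ is $\delb$-closed (it is harmonic) and the torsion-freeness of $\tilde\nabla$ from Proposition \ref{prop:TorsionFree}. Write the connection symbols as $\tilde\Omega_a{}^b{}_c={h_c}^{\bar f}\del_a{h_{\bar f}}^b$, so that $\tilde R_{\bar d a}{}^b{}_c=\del_{\bar d}\tilde\Omega_a{}^b{}_c$. In components, the statement $h^a\tilde R_a{}^b{}_c=0$ is the vanishing of
\begin{equation*}
{h_{[\bar e}}^a\,\del_{\bar d]}\tilde\Omega_a{}^b{}_c\:,
\end{equation*}
where the antisymmetrisation in $\bar e,\bar d$ comes from wedging the $(0,1)$-form $h^a$ with the $(0,1)$-form index of the curvature. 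The overall sign convention relating $\tilde R_{\bar d a}$ and $\tilde R_{a\bar d}$ plays no role, since we only need vanishing.

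First I move the $h$ inside the derivative using the Leibniz rule:
\begin{equation*}
{h_{[\bar e}}^a\,\del_{\bar d]}\tilde\Omega_a{}^b{}_c=\del_{[\bar d}\big({h_{\bar e]}}^a\tilde\Omega_a{}^b{}_c\big)-\big(\del_{[\bar d}{h_{\bar e]}}^a\big)\tilde\Omega_a{}^b{}_c\:.
\end{equation*}
The second term is the component expression of $\delb h^a$, which vanishes because $h$ is harmonic and in particular $\delb$-closed. Only the first term remains.

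Next I simplify $h_{\bar e}{}^a\tilde\Omega_a{}^b{}_c$ using Proposition \ref{prop:TorsionFree}. Torsion-freeness says $\tilde\Omega_a{}^b{}_c=\tilde\Omega_c{}^b{}_a$, so
\begin{equation*}
{h_{\bar e}}^a\tilde\Omega_a{}^b{}_c={h_{\bar e}}^a\tilde\Omega_c{}^b{}_a={h_{\bar e}}^a{h_a}^{\bar f}\del_c{h_{\bar f}}^b=\del_c{h_{\bar e}}^b\:.
\end{equation*}
The last step uses that $h_a{}^{\bar f}$ is the inverse of $h_{\bar e}{}^a$. Substituting back, the expression becomes $\del_{[\bar d}\del_{|c|}{h_{\bar e]}}^b=\del_c\big(\del_{[\bar d}{h_{\bar e]}}^b\big)$, where I have commuted the partial derivatives in flat holomorphic coordinates. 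The bracket is again the components of $\delb h^b$, so the whole expression vanishes.

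The only delicate step is the torsion-free swap of the lower indices $a\leftrightarrow c$. It relies on Proposition \ref{prop:TorsionFree}, which in turn needs the inverse $h^{\bar c}$ to be $\del$-closed (Proposition \ref{prop:inverse}). I would check the index placement there carefully, making sure that the symmetric pair is the derivative index and the lower endomorphism index, matching $\tilde\Omega_{[a}{}^c{}_{d]}=0$. The rest is bookkeeping. Finally, the equivalence with \eqref{eq:Instanton} is the one already noted in the text, $F\wedge\Omega(h)=0\Leftrightarrow h\lrcorner F=0$, applied to $F=\tilde R$ as an $\End(T^{1,0}X)$-valued curvature.
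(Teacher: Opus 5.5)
Your proposal is correct and is essentially the paper's argument written out in components. The paper uses torsion-freeness to swap $a\leftrightarrow c$ in the curvature, then recognises $\tilde R_c{}^b{}_a h^a$ as the commutator $(\delb\circ\tilde\nabla_c-\tilde\nabla_c\circ\delb)h^b$, which vanishes because $h$ is $\tilde\nabla$-parallel and $\delb$-closed; your identity $h^a\tilde\Omega_c{}^b{}_a=\del_c h^b$ is the component form of $\tilde\nabla_c h^b=0$, and your Leibniz step plus commuting partial derivatives is that commutator unpacked.
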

\begin{proof}
     Using that the connection symbols are torsion-free, we see that
\begin{equation}
    h^{a}\tilde R_a{}^b{}_c=-\tilde R_a{}^b{}_ch^{a}=-\tilde R_c{}^b{}_ah^{a}=-(\delb\circ\tilde\nabla_c-\tilde\nabla_c\circ\delb)h^{b}=0\:,
\end{equation}
where we use that $h$ is $\tilde\nabla$-parallel, and $\delb$-closed. 
\end{proof}

\bibliographystyle{JHEP.bst}
\bibliography{refs}

\end{document}